\definecolor{dark-red}{rgb}{0.4,0.15,0.15}
\definecolor{dark-blue}{rgb}{0.15,0.15,0.4}
\definecolor{medium-blue}{rgb}{0,0,0.5}
\definecolor{gray}{rgb}{0.5,0.5,0.5}
\definecolor{color-Ig}{rgb}{0.15,0.7,0.15}
\newcommand{\containment}{\ensuremath{\mathsf{NP  \subseteq coNP/poly}}\xspace}
\newcommand{\notcontainment}{\ensuremath{\mathsf{NP \not \subseteq coNP/poly}}\xspace}
\newcommand{\vc}{\ensuremath{\mathrm{\textsc{vc}}}}
\newcommand{\fvs}{\ensuremath{\mathrm{\textsc{fvs}}}}
\newcommand{\cc}{\textsc{\#cc}}
\newcommand{\bdmod}{{\sf bd\text{-}mod}\xspace}
\newcommand{\twmod}{{\sf tw\text{-}mod}\xspace}
\newcommand{\distto}{{\sf dist\text{-}to\text{-}}}
\newcommand{\fmod}{{\sf f\text{-}mod}}
\newcommand{\mc}{\mathcal}
\newcommand{\bd}{{\sf bd}}
\newcommand{\td}{{\sf td}}
\newcommand{\tw}{{\sf tw}}
\newcommand{\nm}{{\sf nm}}
\newcommand{\tpm}{{\sf tpm}}
\newcommand{\diam}{{\sf diam}}
\newcommand{\mbs}{{\sf mbs}}
\newcommand{\X}{\mathcal{X}}
\newcommand{\conf}{{\sf conf}\xspace}
\renewcommand{\bar}[1]{{#1}_{\textsc{cb}}}
\newcommand{\barp}[1]{{#1}'_{\textsc{cb}}}
\renewcommand{\O}{\mathcal{O}}
\renewcommand{\P}{\mathcal{P}}
\newcommand{\F}{\mathcal{F}}
\newcommand{\IS}{\mbox{\sc IS}\xspace}
\newcommand{\VC}{\textsc{VC}\xspace}
\newcommand{\pb}{\IS/$c$-\bdmod}
\newcommand{\todo}[1][]{%
  \ifx/#1/%
    \textcolor{red}{TODO!}%
  \else%
    \textcolor{red}{todo: #1}%
  \fi%
}
\newcommand{\Oh}{\ensuremath{\mathcal{O}}\xspace}
\newcommand{\defparproblem}[4]{\par
 \vspace{3mm}
\noindent\fbox{
 \begin{minipage}{0.96\textwidth}
 \begin{tabular*}{\textwidth}{@{\extracolsep{\fill}}lr} #1 & {\bf{Parameter:}} #3 \vspace{1mm} \\ \end{tabular*}
 {\textbf{Input:}} #2
	\vspace{1mm}\\%
 {\textbf{Question:}} #4
 \end{minipage}
 }
 \vspace{3mm}
\par
}
\theoremstyle{plain}
\newtheorem{observation}[theorem]{Observation}
\newtheorem{ruleN}{Rule}
\newtheorem{MetaruleN}{Meta-Rule}
\newenvironment{ruleproof}[1][Proof of safeness]{\begin{proof}[#1]}{\end{proof}}
\title{Bridge-Depth Characterizes which Minor-Closed Structural Parameterizations of Vertex Cover Admit a Polynomial Kernel}
\titlerunning{Bridge-Depth Characterizes the Kernelization Complexity of Vertex Cover} 
\author{Marin Bougeret}{LIRMM, Universit\'e de Montpellier, France}{marin.bougeret@lirmm.fr}{https://orcid.org/0000-0002-9910-4656}{}
\author{Bart M.\ P.\ Jansen}{Eindhoven University of Technology, The Netherlands}{b.m.p.jansen@tue.nl}{https://orcid.org/0000-0001-8204-1268}{
Supported by the European Research Council (ERC) under the European Union's Horizon 2020 research and innovation programme (grant agreement No 803421, ReduceSearch).\\ \includegraphics[height=2cm]{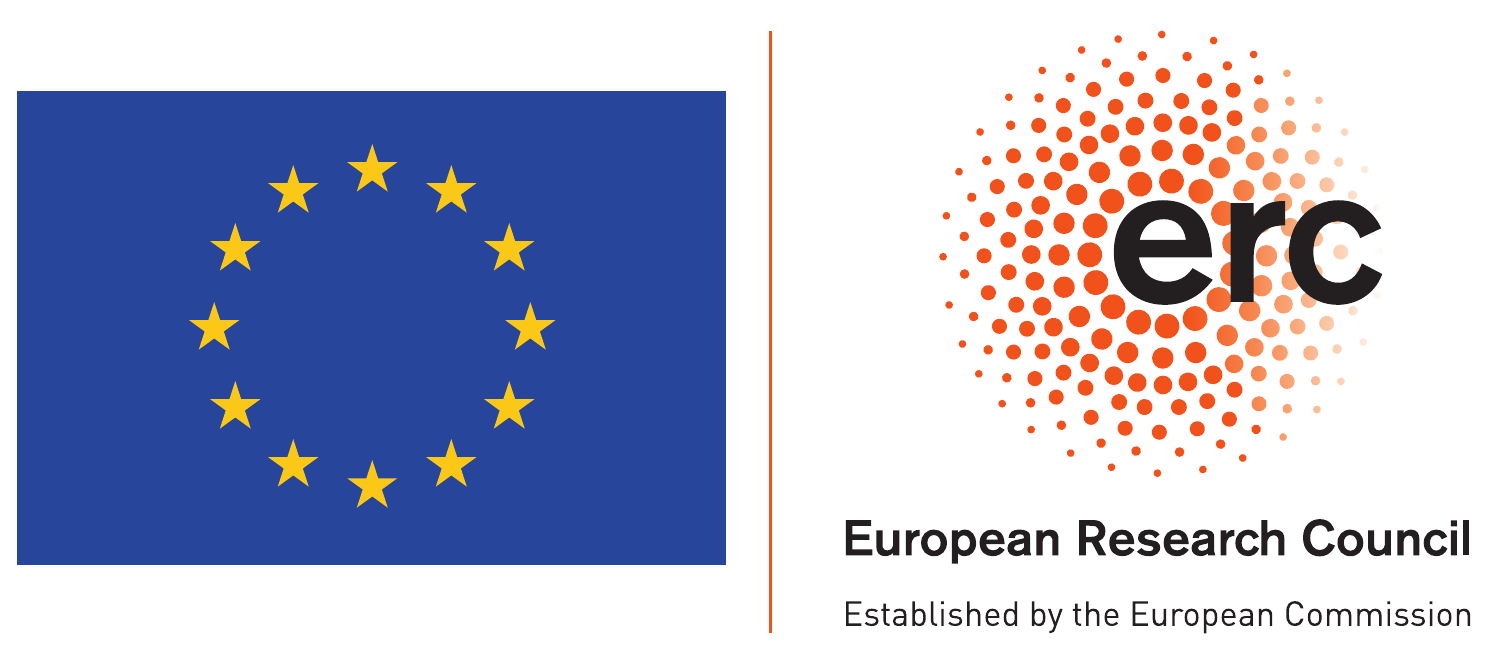}}
\author{Ignasi Sau}{LIRMM, Universit\'e de Montpellier, CNRS, France}{ignasi.sau@lirmm.fr}{https://orcid.org/0000-0002-8981-9287}{Supported  by French projects DEMOGRAPH (ANR-16-CE40-0028), ESIGMA (ANR-17-CE23-0010), and ELIT (ANR-20-CE48-0008-01).}
\authorrunning{M. Bougeret and B. M. P. Jansen and I. Sau} 
\keywords{vertex cover, parameterized complexity, polynomial kernel, structural parameterization, bridge-depth.}
\begin{document}

\maketitle

\begin{abstract}
  We study the kernelization complexity of structural parameterizations of the \textsc{Vertex Cover} problem. Here, the goal is to find a polynomial-time preprocessing algorithm that can reduce any instance~$(G,k)$ of the \textsc{Vertex Cover} problem to an equivalent one, whose size is polynomial in the size of a pre-determined complexity parameter of~$G$. A long line of previous research deals with parameterizations based on the number of vertex deletions needed to reduce~$G$ to a member of a simple graph class~$\mathcal{F}$, such as forests, graphs of bounded tree-depth, and graphs of maximum degree two. We set out to find the most general graph classes~$\mathcal{F}$ for which \textsc{Vertex Cover} parameterized by the vertex-deletion distance of the input graph to~$\mathcal{F}$, admits a polynomial kernelization. We give a complete characterization of the minor-closed graph families~$\mathcal{F}$ for which such a kernelization exists. We introduce a new graph parameter called \emph{bridge-depth}, and prove that a polynomial kernelization exists if and only if~$\mathcal{F}$ has bounded bridge-depth. The proof is based on an interesting connection between bridge-depth and the size of minimal blocking sets in graphs, which are vertex sets whose removal decreases the independence number. \bigskip
\end{abstract}

\newpage

\section{Introduction} \label{sec:intro}
\subparagraph*{Background and motivation.}

The NP-complete \textsc{Vertex Cover} problem is one of the most prominent problems in the field of kernelization~\cite{Bodlaender16,CyganFKLMPPS15,FellowsJKRW18,FominLSZ19,LokshtanovMS12}, which investigates provably efficient and effective preprocessing for parameterized problems. A \emph{parameterized problem} is a decision problem in which a positive integer~$k$, called the \emph{parameter}, is associated with every instance~$x$. A \emph{kernelization} for a parameterized problem is a polynomial-time algorithm that reduces any parameterized instance~$(x,k)$ to an equivalent instance~$(x',k')$ of the same problem whose size is bounded by~$f(k)$ for some function~$f$, which is the \emph{size} of the kernelization. Hence a kernelization guarantees that instances that are large compared to their parameter, can be efficiently reduced without changing their answer. Of particular interest are \emph{polynomial kernelizations}, whose size bound~$f$ is polynomial.

An instance~$(G,k)$ of \textsc{Vertex Cover} asks whether the undirected graph~$G$ has a vertex set~$S$ of size at most~$k$ that contains at least one endpoint of every edge. Using the classic Nemhauser-Trotter theorem~\cite{NemhauserT75}, one can reduce~$(G,k)$ in polynomial time to an instance~$(G',k')$ with the same answer, such that~$|V(G')| \leq 2k$. Hence when using the size of the desired solution as the parameter, \textsc{Vertex Cover} has a kernelization that reduces to instances of~$2k$ vertices, which can be encoded in~$\Oh(k^2)$ bits. While the bitsize of this kernelization is known to be essentially optimal~\cite{DellM14} assuming the established conjecture~\notcontainment, this result does not guarantee any effect of the preprocessing for instances whose solution has size at least~$|V(G)| / 2$. In particular, it does not promise any size reduction when~$G$ is simply a path.

To be able to give better preprocessing guarantees, one can use \emph{structural parameters} that take on smaller values than the size of a minimum vertex cover, a quantity henceforth called the \emph{vertex cover number}. Such structural parameterizations can conveniently be described in terms of the vertex-deletion distance to certain graph families~$\mathcal{F}$. Note that the vertex cover number~$\vc(G)$ of~$G$ can be defined as the minimum number of vertex deletions needed to reduce~$G$ to an edgeless graph. Hence this number will always be at least as large as the \emph{feedback vertex number}~$\fvs(G)$ of~$G$, which is the vertex-deletion distance of~$G$ to a forest. In 2011, it was shown that \textsc{Vertex Cover} even admits a polynomial kernelization when parameterized by the feedback vertex number~\cite{JansenB11,JansenB13}. This triggered a long line of follow-up research, which aimed to find the most general graph families~$\mathcal{F}$ such that \textsc{Vertex Cover} admits a polynomial kernelization when parameterized by vertex-deletion distance to~$\mathcal{F}$. Polynomial kernelizations were obtained for the families~$\mathcal{F}$ of graphs of maximum degree two~\cite{MajumdarRR18}, of graphs of constant \emph{tree-depth}~\cite{BougeretS18,JansenP18}, of the \emph{pseudo-forests} where each connected component has at most one cycle~\cite{FominS16}, and for $d$-\emph{pseudo-forests} in which each connected component has a feedback vertex set of size at most~$d \in \Oh(1)$~\cite{HolsK17}. Note that all these target graph classes are \emph{closed under taking minors}. Using \emph{randomized algorithms} with a small error probability, polynomial kernelizations are also known for several parameterizations by vertex-deletion distance to graph classes that are not minor-closed, such as K\H{o}nig graphs~\cite{KratschW12}, bipartite graphs~\cite{KratschW12}, and parameterizations based on the linear-programming relaxation of \textsc{Vertex Cover}~\cite{HolsKP19,Kratsch18}. On the negative side, it is known that \textsc{Vertex Cover} parameterized by the vertex-deletion distance to a graph of treewidth two~\cite{CyganLPPS14} does \emph{not} have a polynomial kernel, unless \containment. This long line of research into kernelization for structural parameterizations raises the following question:

\begin{quote}
How can we characterize the graph families~$\mathcal{F}$ for which \textsc{Vertex Cover} parameterized by vertex-deletion distance to~$\mathcal{F}$ admits a polynomial kernel?
\end{quote}

\noindent We answer this question for all minor-closed families~$\mathcal{F}$, by introducing a new graph parameter.

\subparagraph*{Our results.}

We introduce a new graph parameter that we call \emph{bridge-depth}. It has a recursive definition similar to that of tree-depth~\cite{NesetrilM06} (full definitions follow in Section~\ref{sec:introducing}), but deals with bridges in a special way. A graph without vertices has bridge-depth zero. The bridge-depth~$\bd(G)$ of a disconnected graph~$G$ is simply the maximum bridge-depth of its connected components. The bridge-depth of a connected nonempty graph~$G$ is defined as follows. Let~$\bar{G}$ denote the graph obtained from~$G$ by contracting each edge that is a bridge in~$G$; the order does not matter. Then~$\bd(G) := 1 + \min_{v \in V(\bar{G})} \bd(\bar{G} \setminus v)$. Intuitively, the bridge-depth of~$G$ is given by the depth of an elimination process~\cite{BulianD16} that reduces~$G$ to the empty graph. One step consists of contracting all bridges and removing a vertex; each of the remaining connected components is then recursively eliminated in parallel. From this definition, it is not difficult to see that~$\bd(G)$ is at least as large as the \emph{tree-width} of~$G$, but never larger than the tree-depth or feedback vertex number of~$G$. In particular, any forest has bridge-depth one.

Using the notion of bridge-depth, we characterize the minor-closed families~$\mathcal{F}$ for which \textsc{Vertex Cover} parameterized by vertex-deletion distance to~$\mathcal{F}$ admits a polynomial kernel.

\begin{theorem} \label{thm:characterization}
Let~$\mathcal{F}$ be a minor-closed family of graphs, and assume \notcontainment. \textsc{Vertex Cover} parameterized by vertex-deletion distance to~$\mathcal{F}$ has a polynomial kernelization if and only if~$\mathcal{F}$ has bounded bridge-depth.
\end{theorem}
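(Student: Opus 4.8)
The statement is an ``if and only if'', so the plan is to establish the two directions separately, and the heart of the matter is to connect bridge-depth to a combinatorial quantity controlling kernelizability. I would begin by isolating the right structural notion: call a vertex set~$B$ in a graph~$H$ a \emph{blocking set} if~$\alpha(H - B) < \alpha(H)$, i.e.\ deleting~$B$ strictly decreases the independence number, and let~$\mbs(H)$ denote the maximum, over all induced subgraphs, of the size of an \emph{inclusion-minimal} blocking set. The reason this is the relevant parameter is the following dichotomy, which I would prove as two lemmas. \textbf{(Upper bound / kernel.)} If~$\mathcal{F}$ has bounded bridge-depth, say~$\bd(H) \le c$ for all~$H \in \mathcal{F}$, then minimal blocking sets in graphs from~$\mathcal{F}$ have size bounded by some~$g(c)$; and whenever the target class has this ``bounded minimal blocking set'' property, a marking/sunflower-type argument over the modulator~$X$ (the deleted set, of size~$k$) produces a polynomial kernel for \textsc{Vertex Cover} parameterized by~$|X|$. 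Concretely, for every subset~$S \subseteq X$ one needs to retain enough of~$G - X$ to certify the largest independent set of~$G-X$ that is compatible with including~$S$ in the cover (equivalently, excluding~$N(S)$); a minimal blocking set of size~$\le g(c)$ lets one argue that only polynomially many vertices of~$G - X$, chosen per $g(c)$-subset of~$X$, are ever needed, giving a kernel with~$k^{\Oh(g(c))}$ vertices. \textbf{(Lower bound / no kernel.)} If~$\mathcal{F}$ has unbounded bridge-depth, then by the combinatorial heart of the paper it contains graphs with arbitrarily large minimal blocking sets; one then uses such gadgets as ``selection'' components in a cross-composition (or a polynomial-parameter-transformation from a known kernel-hard problem such as the treewidth-two case of~\cite{CyganLPPS14}), so that the modulator stays small while encoding a hard instance, ruling out a polynomial kernel under~\containment.

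For the kernel direction in detail I would proceed as follows. Compute an approximate modulator~$X$ with~$|X| = \Oh(k')$ where~$k'$ is the true distance to~$\mathcal{F}$ (using that~$\mathcal{F}$, being minor-closed of bounded bridge-depth, has bounded treewidth, so constant-factor approximation of the modulator is standard). Then the reduced instance must record, for each of the~$2^{|X|}$ ways of partitioning~$X$ into ``in the cover'' / ``not in the cover'', the maximum independent set one can pick in~$G - X$ avoiding the neighbourhood of the ``not in the cover'' part. Using that every \emph{minimal} such obstruction (blocking set) inside any induced subgraph of~$G - X \in \mathcal{F}$ has size at most~$g(c)$, one shows it suffices, for each $g(c)$-element subset~$Y$ of~$X$, to keep a bounded-size ``representative'' portion of~$G-X$ that captures how independent sets of~$G-X$ interact with~$N(Y)$; vertices of~$G - X$ not selected by any such representative can be removed by an LP/Nemhauser--Trotter-style reduction or by a crown-type argument. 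Summing over all~$\binom{|X|}{\le g(c)}$ choices yields~$k^{\Oh(g(c))} = k^{\Oh_c(1)}$ retained vertices, and a final clean-up bounds the total bitsize polynomially.

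The main obstacle I expect is the quantitative equivalence between bounded bridge-depth and bounded minimal blocking sets — this is exactly the ``interesting connection'' advertised in the abstract. The easy inequality is that large bridge-depth forces a large minimal blocking set: a long ``bridge path'' or a nested family of bridge-free pieces naturally hosts a blocking set no proper subset of which is blocking, since removing fewer vertices leaves an augmenting route for the independent set. The hard inequality — that bounded bridge-depth \emph{caps} the size of minimal blocking sets — requires an inductive argument following the recursive definition of~$\bd$: contract all bridges (noting this does not help an adversary build a large \emph{minimal} blocking set, because contracted bridge-edges can be handled by a separate, bounded reduction tied to matching/crown structure), delete the one vertex~$v$ achieving the minimum, and recurse on the components of~$\bar{G}\setminus v$ with bridge-depth~$c-1$; a minimal blocking set of the whole graph decomposes (up to~$v$ and a bounded ``bridge contribution'' per component) into minimal blocking sets of the pieces, so its size satisfies a recurrence~$g(c) \le 1 + h(c) + g(c-1)$ with~$g(0)=0$, hence~$g(c)$ is bounded by a function of~$c$ alone. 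Making the ``bridge contribution'' term~$h(c)$ genuinely bounded, and checking that minimality is preserved under the contractions and the passage to components, is where the care lies; once that is in hand, both directions of the theorem follow from the two lemmas above together with the standard toolkit (Nemhauser--Trotter / crown decompositions for the upper bound, cross-composition for the lower bound).
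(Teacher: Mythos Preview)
Your high-level architecture is right---prove the two directions via the blocking-set bridge---but the kernel direction has a genuine gap that the paper explicitly flags and then works hard to fill.

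Your proposed kernel works by marking, for each $g(c)$-subset~$Y \subseteq X$, a bounded ``representative portion'' of~$G - X$, and then deleting unmarked vertices via Nemhauser--Trotter or crowns. This argument bounds the \emph{number of connected components} of~$G - X$ that can interact nontrivially with~$X$ (and indeed the paper uses essentially this idea in Rules~1--2 and Lemma~\ref{lemma:nbcc}). But it does \emph{not} bound the size of an individual component. Concretely, suppose~$G - X$ is a single tree on~$2^{|X|}$ vertices: its bridge-depth is~$1$ and every minimal blocking set has size at most~$2$, yet your marking scheme gives no mechanism to shrink it, and neither Nemhauser--Trotter (which is governed by the vertex-cover size, not~$|X|$) nor crown decompositions do the job out of the box. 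The paper states this explicitly: ``the non-existence of large minimal blocking sets does not seem to offer any handle for reducing the size of individual components of~$G \setminus X$.'' This is why the kernel must go back through bridge-depth itself: in each component one finds a \emph{lowering tree} of bridges, designs a battery of new reduction rules (Meta-Rules~1--5, classifying tree vertices by ``type~$A$/$B$'' according to whether their pending component's independence number drops when the root is removed) to shrink that tree to size~$|X|^{\Oh(1)}$, then moves the tree into the modulator and recurses on bridge-depth~$c-1$. Shrinking the lowering tree is the technical core and subsumes the feedback-vertex-set kernel of~\cite{JansenB13} rather than invoking it as a black box.

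A secondary point: your sketch of the bound ``$\bd \le c \Rightarrow \mbs \le g(c)$'' via a recurrence~$g(c) \le 1 + h(c) + g(c-1)$ is not how it goes. A minimal blocking set does not split into a single vertex plus one blocking set of a smaller-depth piece; the paper's induction (Theorem~\ref{theorem:blockingset:size}) encodes the interaction of the lowering tree with the pending components as an auxiliary \emph{bipartite} graph, applies the fact that bipartite minimal blocking sets have size~$\le 2$ with one vertex in each side, and thereby reduces to \emph{two} subproblems of depth~$c-1$, yielding the tight bound~$2^c$. The other direction (``unbounded~$\bd$ $\Rightarrow$ unbounded~$\mbs$'') is also less direct than you suggest: it goes through necklace minors and an Erd\H{o}s--P\'osa argument (Lemma~\ref{lemma:pack}, Theorem~\ref{thm:bd}) to show that unbounded bridge-depth forces arbitrarily long triangle-paths, which then supply the large minimal blocking sets and, for the lower bound, plug straight into the known no-kernel result of~\cite{FominS16} without a fresh cross-composition.
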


Theorem~\ref{thm:characterization} gives a clean and unified explanation for all the minor-closed families~$\mathcal{F}$ that were previously considered individually~\cite{BougeretS18,FominS16,HolsK17,JansenB13,MajumdarRR18}, and generalizes these results as far as possible. To the best of our knowledge, Theorem~\ref{thm:characterization} captures all known (deterministic) kernelizations for structural parameterizations of \textsc{Vertex Cover}. (There are \emph{randomized} kernelizations~\cite{HolsKP19,Kratsch18,KratschW12} that apply for distance to classes~$\mathcal{F}$ that are \emph{not} minor-closed, such as bipartite graphs.) For example, we capture the case of~$\mathcal{F}$ being a forest~\cite{JansenB13} since forests have bridge-depth one, and  the case of~$\mathcal{F}$ being graphs of constant tree-depth~\cite{BougeretS18,JansenP18} since bridge-depth does not exceed tree-depth.
In this sense, bridge-depth can be seen as the ultimate common generalization of feedback vertex number and tree-depth (which are incomparable parameters) in the context of polynomial kernels for \textsc{Vertex Cover}.

We consider it one of our main contributions to identify the graph parameter bridge-depth as the right way to capture the kernelization complexity of \textsc{Vertex Cover} parameterizations.

\subparagraph*{Techniques.} To describe our techniques, we introduce some terminology. Let~$\alpha(G)$ denote the independence number of graph~$G$, i.e., the maximum size of a set of pairwise nonadjacent vertices. A \emph{blocking set} in a graph~$G$ is a vertex set~$Y \subseteq V(G)$ such that~$\alpha(G \setminus Y) < \alpha(G)$. Hence if~$Y$ is a blocking set, then every maximum independent set in~$G$ contains a vertex from~$Y$. The connection between blocking sets and the kernelization of \textsc{Vertex Cover} has been already exploited in previous work. Namely, earlier kernelizations for \textsc{Vertex Cover} parameterized by distance to a graph class~$\mathcal{F}$, starting with the work of Jansen and Bodlaender~\cite{JansenB13}, all rely, either implicitly or explicitly, on having upper-bounds on the size of (inclusion-)minimal blocking sets for graphs in~$\mathcal{F}$~\cite{BougeretS18,FominS16,HolsK17,JansenB13,MajumdarRR18}. For example, it is known that minimal blocking sets in a bipartite graph have size at most two~\cite[Cor.~11]{HolsK17}, while minimal blocking sets in graphs of tree-depth~$c$ have size at most~$2^c$~\cite[Lemma 1]{BougeretS18}. Similarly, all the existing superpolynomial kernelization lower bounds for parameterizations by distance to~$\mathcal{F}$, rely on~$\mathcal{F}$ having minimal blocking sets of arbitrarily large size. Indeed, if~$\mathcal{F}$ is closed under disjoint union and has arbitrarily large blocking sets, it is easy to prove a superpolynomial lower bound (cf.~\cite[Thm.~1]{HolsKP19}).

Since all positive cases for kernelization are when minimal blocking sets of graphs in~$\mathcal{F}$ have bounded size, while one easily obtains lower bounds when the size of minimal blocking sets of graphs in~$\mathcal{F}$ is unbounded, the question rises whether a bound on the size of minimal blocking sets is a necessary and sufficient condition for the existence of polynomial kernels. To our initial surprise, we show that for minor-closed families~$\mathcal{F}$, this is indeed the case: the purely structural property of having bounded-size minimal blocking sets can always be leveraged into preprocessing algorithms.

For an insight into our techniques, consider an instance~$(G,k)$ of \textsc{Vertex Cover}, together with a vertex set~$X \subseteq V(G)$ such that~$G \setminus X \in \mathcal{F}$ for some minor-closed family~$\mathcal{F}$ that has bounded-size minimal blocking sets.
The goal of the kernelization is then to reduce to an equivalent instance of size~$|X|^{\Oh(1)}$ in polynomial time. Using ideas of the previous kernelizations~\cite{JansenB13,BougeretS18}, it is quite simple to reduce the \emph{number of connected components} of~$G \setminus  X$ to~$|X|^{\Oh(1)}$. To obtain a polynomial kernel, the challenge is therefore to bound the size of each such component~$C$ of~$G \setminus  X$ to~$|X|^{\Oh(1)}$, so that the overall instance size becomes polynomial in~$|X|$. However, the non-existence of large minimal blocking sets does not seem to offer any handle for reducing the size of individual components of~$G \setminus X$. The route to the kernelization therefore goes via the detour of bridge-depth. We prove the following relation between the sizes of minimal blocking sets and bridge-depth.

\begin{theorem} \label{thm:bridgedepth:blockingsets}
Let~$\mathcal{F}$ be a minor-closed family of graphs. Then~$\mathcal{F}$ has bounded bridge-depth if and only if the size of minimal blocking sets of graphs in~$\mathcal{F}$ is bounded.
\end{theorem}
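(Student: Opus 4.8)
The plan is to prove the two implications separately, and I expect the "bounded bridge-depth $\Rightarrow$ bounded minimal blocking sets" direction to be the main technical obstacle.

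For the easy direction, suppose $\mathcal{F}$ has unbounded bridge-depth; I want to exhibit graphs in $\mathcal{F}$ with arbitrarily large minimal blocking sets. Since $\mathcal{F}$ is minor-closed and has unbounded bridge-depth, it is natural to first identify an \emph{obstruction}: a concrete family of graphs, each of which has large bridge-depth and is a minor of some member of $\mathcal{F}$. Because contracting a bridge never increases bridge-depth and vertex deletion decreases it by at most one, the recursive definition forces that unbounded bridge-depth must come from graphs that, even after contracting all bridges, still contain a "large" $2$-edge-connected-ish core on which the elimination recursion is deep. The cleanest such obstructions are large grids or large walls: since bridge-depth is at least the treewidth, any family of unbounded treewidth has unbounded bridge-depth, and conversely a minor-closed family of bounded treewidth could still have unbounded bridge-depth (e.g.\ via long subdivided structures). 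So I would argue: either $\mathcal{F}$ has unbounded treewidth---in which case it contains all grids as minors, and a sufficiently large subdivided grid (or wall), whose subdivision vertices form a large minimal blocking set, witnesses large blocking sets---or $\mathcal{F}$ has bounded treewidth but unbounded bridge-depth, in which case I need a combinatorial argument (likely via Ramsey-type / pumping reasoning on the bridge-contracted graphs) that it must contain arbitrarily long "paths of biconnected blocks" or deeply nested such structures, again yielding large minimal blocking sets from appropriately placed subdivision/degree-two vertices. The key gadget fact to establish is: a path on an odd number of internal vertices, attached suitably, contributes a forced vertex to every maximum independent set, and these gadgets can be combined so that the union of their "middle" vertices is an \emph{inclusion-minimal} blocking set of size growing with bridge-depth.

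For the hard direction, assume $\mathcal{F}$ has bridge-depth at most $c$; I want to show minimal blocking sets of graphs $G \in \mathcal{F}$ have size bounded by some $g(c)$. I would induct on $c$. Let $Y$ be an inclusion-minimal blocking set of $G$. First reduce to the connected case: a minimal blocking set lives in a single connected component (removing vertices from other components cannot affect $\alpha$ minimally), so assume $G$ connected. Let $\bar{G}$ be $G$ with all bridges contracted, and let $v \in V(\bar{G})$ realize $\bd(\bar G\setminus v) \le c-1$. The plan is to relate maximum independent sets of $G$ to those of $\bar G$ and then to those of the components of $\bar G \setminus v$, controlling how much $Y$ can "spread." The delicate points are (i) bridge contraction changes the independence number in a controlled but nontrivial way---contracting a bridge $uw$ either leaves $\alpha$ unchanged or decreases it by one, and a maximum independent set of $G$ induces a near-maximum independent set of $\bar G$---and (ii) the minimal blocking set $Y$ of $G$ must, modulo the contracted bridges and the deleted vertex $v$, project to something that is "almost" a minimal blocking set in each component of $\bar G \setminus v$, to which induction applies; any part of $Y$ not accounted for this way must lie on or near bridges or near $v$, and minimality forces that part to be small. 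The main obstacle, and where the real work lies, is handling how blocking sets interact with bridge contraction: a single vertex of $Y$ in $G$ may correspond to a contracted-bridge path, and conversely a blocking set in $\bar G$ need not lift to one of comparable size in $G$ without care about parities of path lengths. I expect to need a lemma of the form "if $P$ is a path of bridges in $G$ contracted to a single vertex $\bar p$ in $\bar G$, then $Y \cap V(P)$ has size at most $2$ by minimality, and the behaviour of $\alpha$ is captured by whether $\bar p$ is in the relevant independent sets." Combining the per-component bound $g(c-1)$ over the (at most $|V(\bar G \setminus v)|$, but for blocking-set purposes only $O(1)$ relevant) components, plus $O(1)$ for $v$ and the incident bridge-paths, yields $g(c) \le 2\,g(c-1) + O(1)$, hence $g(c) = 2^{O(c)}$, consistent with the known bound $2^c$ for tree-depth.

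Finally, I would remark that Theorem~\ref{thm:bridgedepth:blockingsets} is exactly the bridge between the structural and algorithmic parts: the forward direction of Theorem~\ref{thm:characterization} (lower bound) follows from large minimal blocking sets by the standard composition argument cited as \cite[Thm.~1]{HolsKP19}, while the backward direction (kernel) uses bounded bridge-depth directly, via the connected-component reduction and a per-component size bound exploiting the shallow bridge-contraction recursion.
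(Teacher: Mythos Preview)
Your ``easy direction'' contains a genuine error. You propose that if~$\mathcal{F}$ has unbounded treewidth, then it contains all grids, and a large (subdivided) grid or wall has a large minimal blocking set formed by subdivision vertices. But grids and walls are bipartite, and any subdivision of a bipartite graph is still bipartite; as the paper itself notes just after Theorem~\ref{thm:largebd:largembs}, bipartite graphs have minimal blocking sets of size at most two (Lemma~\ref{lemma:bipartite:blockingsets}). So the unbounded-treewidth branch of your case split yields nothing. This is not a minor slip: it is precisely the reason the statement is phrased for minor-closed \emph{families} rather than individual graphs, and why the paper's route is more delicate than you anticipated. The paper instead shows (Section~\ref{sec:triangle-path}) that large bridge-depth forces long \emph{necklace} minors, via an Erd\H{o}s--P\'osa argument hinging on the fact that in a connected bridgeless graph any two maximum-length necklace models must intersect (Lemma~\ref{lemma:pack}); long necklaces in turn yield long \emph{triangle-path} minors, which are non-bipartite and do have large minimal blocking sets. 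Your second case (bounded treewidth, unbounded bridge-depth, ``paths of biconnected blocks'') gestures at the right kind of obstruction, but without the necklace/Erd\H{o}s--P\'osa machinery you have no mechanism to extract it, and the case split is unnecessary once that machinery is in place.

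For the ``hard direction'' your outline is broadly right---induct on~$c$, reduce to connected, pick a lowering tree~$T$, and argue that only~$O(1)$ of the components of~$G \setminus V(T)$ are relevant for a minimal blocking set---but you have not identified \emph{why} only two components suffice. The paper's engine here is Lemma~\ref{lemma:bipartite:blockingsets}: in a bipartite graph, every blocking set contains one of size at most two, and if it has size exactly two the two vertices lie in \emph{opposite} partite sets. The induction step (Theorem~\ref{theorem:blockingset:size}) encodes the interaction of a maximum independent set of~$G$ with the pending components hanging off~$T$ into an auxiliary bipartite graph~$H$, applies the lemma to~$H$, and uses the opposite-sides guarantee to translate the size-two blocking set of~$H$ back to at most two pending components of~$G$, each of strictly smaller bridge-depth. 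Your proposed lemma about~$|Y \cap V(P)| \le 2$ along a bridge-path is not the right invariant; the reduction does not control~$Y$ along bridges directly but rather compresses the entire tree-of-bridges structure into the bipartite auxiliary graph. Without that step, the claim that ``for blocking-set purposes only~$O(1)$ components are relevant'' is the whole difficulty, not an observation.
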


Using this equivalence, we can exploit the fact that all minimal blocking sets of~$\mathcal{F}$ are of bounded size, through the fact that the bridge-depth of~$G \setminus X \in \mathcal{F}$ is small. This means that there is a bounded-depth elimination process to reduce~$G \setminus X$ to the empty graph. We use this bounded-depth process in a technical kernelization algorithm following a recursive scheme, inspired by the earlier kernelization for the parameterization by distance to bounded tree-depth~\cite{BougeretS18}.


Let us now discuss the ideas behind the equivalence of Theorem~\ref{thm:bridgedepth:blockingsets}. We prove that the bridge-depth of graphs in a minor-closed family~$\mathcal{F}$ is upper-bounded in terms of the maximum size of minimal blocking sets for graphs in~$\mathcal{F}$, by exploiting the Erd\H{o}s-P\'osa property in an interesting way. We analyze an elementary graph structure called \emph{necklace of length~$t$}, which is essentially the multigraph formed by a path of~$t$ double-edges. If a simple graph~$G \in \mathcal{F}$ contains a necklace of length~$t$ as a minor, then there is a minor~$G'$ of~$G$ (which therefore also belongs to~$\mathcal{F}$) that has a minimal blocking set of size~$\Omega(t)$. Hence to show that bridge-depth is upper-bounded in terms of the size of minimal blocking sets of graphs in~$\mathcal{F}$, it suffices to show that bridge-depth is upper-bounded by the maximum length of a necklace minor of graphs in~$\mathcal{F}$. Since the definition of bridge-depth allows for the contraction of all bridges in a single step, it suffices to consider bridgeless graphs. Then we argue that in a bridgeless graph, any pair of maximum-length necklace minor models intersects at a vertex (cf. Lemma~\ref{lemma:pack}). By the Erd\H{o}s-P\'osa property, this implies that there is a constant-size vertex set that hits all maximum-length necklace minor models, and whose removal therefore strictly decreases the maximum length of a necklace minor. If the length of necklace minor models is bounded, then after a bounded number of steps of this process (interleaved with contracting all bridges) we reduce the maximum length of necklace minor models to zero, which is equivalent to breaking all cycles of the graph. At that point, the bridge-depth is~one by definition, and we have obtained the desired upper-bound on the bridge-depth in terms of the length of the longest necklace minor, and therefore blocking set size.

For the other direction of Theorem~\ref{thm:bridgedepth:blockingsets}, we prove (cf.~Theorem~\ref{theorem:blockingset:size}) the tight bound that a minimal blocking set in a graph~$G$ has size at most~$2^{\bd(G)}$. We use induction to prove this statement, together with an analysis of the structure  of a tree of bridges whose removal
decreases the bridge-depth. The fact that bipartite graphs have minimal blocking sets of size at most two, allows for an elegant induction step.


\subparagraph*{Related work.}
In a recent paper,  Hols, Kratsch, and Pieterse~\cite{HolsKP19} also analyze the role of blocking sets in the existence of polynomial kernels for structural parameterizations of \textsc{Vertex Cover}. Note that our paper is independent from, and orthogonal to~\cite{HolsKP19}: we consider the setting of \emph{deterministic} kernelization algorithms for parameterizations to \emph{minor-closed} families~$\mathcal{F}$, and obtain an exact characterization of which~$\mathcal{F}$ allow for a polynomial kernelization. Hols et al.~\cite{HolsKP19} consider hereditary families~$\mathcal{F}$  and give kernelizations for several such parameterizations, without arriving at a complete characterization. Some of the randomized kernelizations they provide do not fit into our framework, but all the deterministic kernelizations they present are captured by Theorem~\ref{thm:characterization}. Another contribution of~\cite{HolsKP19} is to prove that there is a class $\mathcal{F}$ with minimal blocking sets of size one where \textsc{Vertex Cover} cannot be solved in polynomial time. In particular, there is no polynomial kernel parameterized by the distance to this family $\mathcal{F}$, and thus bounded minimal blocking set size is not sufficient to get a polynomial kernel. This implies that our minor-closed assumption of Theorem~\ref{thm:characterization} cannot be dropped.

We refer to the survey by Fellows et al.~\cite{FellowsJKRW18} for an overview of classic results and new research lines concerning kernelization for \textsc{Vertex Cover}. Additional relevant work includes the work by Kratsch~\cite{Kratsch18} on a randomized polynomial kernel for a parameterization related to the difference between twice the cost of the linear-programming relaxation of \textsc{Vertex Cover} and the size of a maximum matching.

\subparagraph*{Organization.}
Preliminaries on graphs and complexity are presented in Section~\ref{sec:preliminaries}. Section~\ref{sec:introducing} introduces bridge-depth and its properties. In Section~\ref{sec:triangle-path} we prove one direction of Theorem~\ref{thm:bridgedepth:blockingsets}, showing that large bridge-depth implies the existence of large minimal blocking sets. In Section~\ref{sec:blockingsets} we handle the other direction, proving a tight upper-bound on the size of minimal blocking sets in terms of the bridge-depth. We present the kernelization algorithm exploiting bridge-depth in Section~\ref{sec:kernel:summary}, and we conclude the article in Section~\ref{sec:conclusion}.


 \section{Preliminaries} \label{sec:preliminaries}

\subparagraph*{Graphs.} We use standard graph-theoretic notation, and we refer the reader to Diestel~\cite{Diestel16} for any undefined terms. All graphs we consider are finite and undirected. Graphs are simple, unless specifically stated otherwise, such as when dealing with necklaces (see Definition~\ref{def:necklace}). A graph~$G$ has vertex set~$V(G)$ and edge set~$E(G)$.
Given a graph $G$ and a subset $S \subseteq V(G)$, we say that $S$ is \emph{connected} if $G[S]$ is connected, and we use the shorthand $G \setminus S$ to denote $G[V(G) \setminus S]$. For a single vertex~$v \in V(G)$, we use~$G \setminus v$ as a shorthand for~$G \setminus \{v\}$. Similarly, for a set of edges~$T \subseteq E(G)$ we denote by~$G \setminus T$ the graph on vertex set~$V(G)$ with edge set~$E(G) \setminus T$. A cycle on three vertices is called a \emph{triangle}. For two positive integers $i,j$ with $i \leq j$, we denote by $[i,j]$ the set of all integers $\ell$ such that $i \leq \ell \leq j$, and by $[i]$ the set $[1,i]$. Given $v \in V(G)$, we denote $N_G(v)=\{u \mid \{u,v\} \in E(G)\}$, $d_G(v)=|N_G(v)|$
and given $X \subseteq V(G)$, we denote $N_G(X)=\bigcup_{v \in X}N_G(v) \setminus X$.
Given $X,Y \subseteq V(G)$, we denote by $N_G^Y(X)=N_G(X) \cap Y$. We may omit the subscript~$G$ when it is clear from the context.
For distinct vertices $u$ and $v$ of a graph $G$, the graph $G'$ obtained by \emph{identifying} $u$ and $v$ is defined by removing vertices $u$ and $v$ from~$G$, adding a new vertex $uv$ with $N_{G'}(uv) = N_G(\{u,v\})$, and keeping the other vertices and edges unchanged. Given two adjacent vertices $u$ and $v$, we define the \emph{contraction} of the edge $\{u,v\}$ as the identification of $u$ and $v$.

Given a graph $G$, we denote by $\alpha(G)$ the size of a maximum independent set in $G$, by $\cc(G)$ the number of connected components of $G$,
by $\diam(G)$ the diameter of $G$, and by $\Delta(G)$ the maximum degree of $G$.
Given an independent set $S \subseteq V(G)$ and a set $V' \subseteq V(G)$, we denote $S_{V'} = S \cap V'$.
Given a graph $G$ and a set $S \subseteq V(G)$, we say that $S$ is a \emph{blocking set} in $G$ if $\alpha(G \setminus S) < \alpha(G)$. The maximum size of an inclusion-wise minimal blocking set of a graph $G$ is denoted by $\mbs(G)$.

A graph~$H$ is a minor of graph~$G$ if~$H$ can be obtained from~$G$ by a sequence of edge deletions, edge contractions, and removals of isolated vertices. Let us also recall the definition of minor in the context of multigraphs.
Let $H$ be a loopless multigraph. An \emph{$H$-model} $M$ in a simple graph $G$ is a collection $\{S^M_x \mid x \in V(H)\}$ of pairwise disjoint subsets of $V(G)$ such that $G[S^M_x]$ is connected for every $x \in V(H)$, and such that for every pair of distinct vertices $x,y$ of $H$, the quantity~$|\{ \{u,v\} \in E(G) \mid u \in S^M_x, v \in S^M_y \}|$ is at least the number of edges in $H$ between $x$ and $y$.
The vertex set $V(M)$ of $M$ is the union of the vertex sets of the subgraphs in the collection.
We say that a graph $G$ contains a loopless multigraph $H$ as a \emph{minor} if $G$ has an $H$-model.

For the following definitions, we refer the reader to~\cite{NeMe12} for more details and we only recall here some basic notations and facts.
The \emph{tree-depth} of a graph~$G$, denoted by~$\td(G)$, is defined recursively. The empty graph without vertices has tree-depth zero. The tree-depth of a disconnected graph is the maximum tree-depth of its connected components. Finally, if~$G$ is a nonempty connected graph then~$\td(G) = 1 + \min_{v \in V(G)} \td(G \setminus v)$. Equivalent definitions exist in terms of the minimum height of a rooted forest whose closure is a supergraph of~$G$. The tree-width of~$G$ is denoted~$\tw(G)$ (cf.~\cite{Bodlaender98}).

Given a graph family $\F$, an $\F$-\emph{modulator} in a graph $G$ is a subset of vertices $X \subseteq V(G)$ such that $G \setminus X \in \F$.
We denote by \distto$\F(G)$ the size of a smallest $\F$-modulator in $G$.
For a graph measure ${\sf f}$ that associates an integer with each graph, and an integer $c$, a $c$-\emph{${\sf f}$-modulator} is a modulator to $\F^{\sf f}_c := \{G \mid {\sf f}(G) \le c\}$.
We denote by $c$-$\fmod(G):=\distto\F^{\sf f}_c(G)$, that is, the size of a smallest $c$-${\sf f}$-modulator of $G$.
Typical measures ${\sf f}$ that we consider here are tree-width, tree-depth, and bridge-depth.
Notice that $0$-$\twmod(G)$ corresponds to the minimum size of a vertex cover of $G$, and $1$-$\twmod(G)$ corresponds to the minimum size of a feedback vertex set of $G$.
Finally, \IS (resp. \VC) denotes the \textsc{Maximum Independent Set} (resp. \textsc{Minimum Vertex Cover})  problem.

\subparagraph*{Parameterized complexity.}
A \emph{parameterized problem} is a language $L \subseteq \Sigma^* \times \mathbb{N}$, for some finite alphabet $\Sigma$.  For an instance $(x,k) \in \Sigma^* \times \mathbb{N}$, the value~$k$ is called the \emph{parameter}. For a computable function~$g \colon \mathbb{N} \to \mathbb{N}$, a \emph{kernelization algorithm} (or simply a \emph{kernel}) for a parameterized problem $L$ of \emph{size} $g$ is an algorithm $A$ that given any instance $(x,k)$ of $L$, runs in polynomial time and returns an instance $(x',k')$ such that $(x,k) \in L \Leftrightarrow (x', k') \in L$ with $|x'|, k' \le g(k)$. Consult~\cite{CyganFKLMPPS15,Niedermeier06,FlumG06,DoFe13,FominLSZ19} for background on parameterized complexity.

 \section{An introduction to bridge-depth} \label{sec:introducing}

  Let $G$ be a graph. An edge $e \in E(G)$ is a \emph{bridge} if its removal increases the number of connected components of $G$.
  We define $\bar{G}$ as the simple graph obtained from $G$ by contracting all bridges of $G$ (the order does not matter.)
Observe that, as contracting an edge cannot create a new bridge, $\bar{G}$ has no bridges, implying that $\bar{(\bar{G})}=\bar{G}$.
Given a  subgraph $T$ of a graph $G$, we say that $T$ is a \emph{tree of bridges} if $T$ is a tree and, for every $e \in E(T)$, $e$ is a bridge in $G$. Note that a single vertex is, by definition, a tree of bridges.
Note also that with any vertex $v \in V(\bar{G})$ we can associate, in a bijective way, an inclusion-wise maximal tree of bridges $T_v$ of $G$.
The set $\{T_v \mid v \in V(\bar{G})\}$ is a minor model of $\bar{G}$ in~$G$ (a $\bar{G}$-model, from now on). For any $u,v \in V(\bar{G})$ such that $\{u,v\} \in E(\bar{G})$, there is exactly one edge
$\{u',v'\} \in E(G)$ with $u' \in T_u$ and $v' \in T_v$. The latter claim can be easily verified by supposing that there are two such edges, implying that some edge in $T_u$ or $T_v$ is involved in a cycle, which contradicts the fact that all the edges in $T_u$ and $T_v$ are bridges.

\begin{definition}\label{def:bd}
  The \emph{bridge-depth}~$\bd(G)$ of a graph $G$ is recursively defined as follows:
  \begin{itemize}
  \item If $G$ is the empty graph without any vertices, then~$\bd(G)=0$.
  \item If $G$ has $\ell > 1$ connected components $\{ G_i \mid i \in [\ell]\}$, then~$\bd(G)=\max_{i \in [\ell]} \bd(G_{i})$.
  \item If $G$ is connected, then~$\bd(G)= 1+\min_{v \in V(\bar{G})} \bd(\bar{G} \setminus v)$.
  \end{itemize}
\end{definition}

Informally, $\bd$ behaves like tree-depth except that at each step of the recursive definition we are allowed to delete trees of bridges instead of just single vertices, as proved in Item~\ref{rectreeofbridges} of the following proposition. The following properties of bridge-depth follow from the definitions in an elementary way, often exploiting the fact that if~$e$ is a bridge in~$G$, then~$e$ is also a bridge in any minor of~$G$ that still contains~$e$.

\begin{restatable}{proposition}{propositionXbd} \label{proposition:bd}
 For any graph $G$ the following claims hold:

  \begin{enumerate}
  \item $\bd(G)=1$ if and only if $G$ is a forest with at least one vertex.
  \item $\bd(\bar{G})=\bd(G)$.\label{contractionsstable}
  \item The parameter $\bd$ is minor-closed: if~$G'$ is a minor of~$G$ then~$\bd(G') \le \bd(G)$. \label{minorclosed}  
  \item If $G$ is connected, then~$\bd(G)= 1+\min_{T} \bd(G \setminus V(T))$, where the minimum is taken over all  trees of bridges $T$ of $G$. \label{rectreeofbridges}
  \item For any $X \subseteq V(G)$, we have~$\bd(G) \le |X|+\bd(G \setminus X)$. \label{removeX}
	\item $\tw(G) \leq \bd(G)$. \label{item:treewidth} 
  \end{enumerate}
\end{restatable}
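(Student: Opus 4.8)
The plan is to prove the six claims of Proposition~\ref{proposition:bd} more or less in the order listed, since several later items lean on earlier ones, and then to spend most effort on items~\ref{rectreeofbridges} and~\ref{item:treewidth}, which are the only ones that are not essentially unwinding the definition.

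For item~1, I would argue that if $G$ is a forest with at least one vertex, then every edge of $G$ is a bridge, so $\bar{G}$ has no edges, hence on each connected component the recursion gives $1 + \bd(\text{edgeless graph}) = 1 + 0$, and taking the max over components still yields $1$. Conversely, if $G$ is not a forest, some component contains a cycle; in that component the non-bridge edges of a cycle survive in $\bar{G}$, so $\bar{G}$ restricted to that component is nonempty, hence $\bd(\bar{G}\setminus v)\geq 1$ for every $v$, giving $\bd(G)\geq 2$. For item~\ref{contractionsstable}, note that $\bar{(\bar{G})}=\bar{G}$ as observed in the text, and that $G$ and $\bar{G}$ have the same connected components up to the contraction; the recursion for a connected nonempty $G$ only ever refers to $\bar{G}$, so $\bd(G)$ and $\bd(\bar{G})$ trigger literally the same recursive call. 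For item~\ref{minorclosed}, the plan is induction on $|V(G)|+|E(G)|$, handling the three minor operations separately. Edge deletion and vertex deletion can only merge or shrink components and can only turn non-bridges into bridges (never the reverse in a way that hurts us), and contracting an edge $e$: if $e$ is a bridge it does not change $\bar{G}$ at all, and if $e$ is not a bridge then one checks that $\overline{G/e}$ is a minor of $\bar{G}$, so the inductive hypothesis applies after also handling the $\min_v$ over $V(\bar{G})$. The cleanest route is: reduce to showing $\bd(\overline{G/e}) \le \bd(\bar{G})$ and $\bd(\bar{G}\setminus v)$ behaves monotonically, then invoke induction.

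Item~\ref{rectreeofbridges} is where the informal ``delete trees of bridges'' intuition gets made precise, and I expect this to be one of the two main obstacles. The inequality $\bd(G) \le 1 + \min_T \bd(G\setminus V(T))$ follows by taking $T$ to be the maximal tree of bridges $T_v$ associated with the vertex $v\in V(\bar{G})$ achieving the minimum in Definition~\ref{def:bd}: using the $\bar{G}$-model $\{T_v\}$ and the fact (from the text) that there is exactly one connecting edge between adjacent trees of bridges, one shows $\bar{G}\setminus v$ is a minor of (indeed isomorphic to a contraction of) $G\setminus V(T_v)$, hence $\bd(\bar G\setminus v)\ge \bd(G\setminus V(T_v))$ by item~\ref{minorclosed}. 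Wait — I need the inequality in the convenient direction, so I should instead argue $\bd(\overline{G\setminus V(T_v)})=\bd(\bar G\setminus v)$ directly by analyzing how removing a maximal tree of bridges interacts with bridge-contraction; this is a careful but elementary structural argument. For the reverse inequality $\bd(G)\ge 1+\min_T\bd(G\setminus V(T))$, take the optimal $v\in V(\bar G)$ from the definition and let $T=T_v$; then $1+\bd(\bar G\setminus v)=\bd(G)$ and $\bar G\setminus v$ is obtained from $G\setminus V(T_v)$ by contracting bridges (again using that each inter-tree edge is unique), so $\bd(G\setminus V(T_v))\le \bd(\overline{G\setminus V(T_v)})+0$… more precisely $\bd(G\setminus V(T_v))=\bd(\overline{G\setminus V(T_v)})$ by item~\ref{contractionsstable} applied componentwise, and this equals $\bd(\bar G\setminus v)$. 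The delicate point throughout is that $\overline{G\setminus V(T_v)}$ and $\bar G\setminus v$ really do coincide, which needs the ``at most one connecting edge'' fact; I would prove this as an auxiliary claim before assembling the equality.

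Item~\ref{removeX} is a clean induction on $|X|$: for $|X|=0$ it is trivial, and for the step pick $x\in X$, observe $\bd(G)\le 1+\bd(G\setminus x)$ — which itself follows from item~\ref{rectreeofbridges} applied componentwise (the single vertex $x$ is a tree of bridges in whichever component contains it, so within that component $\bd \le 1 + \bd(\text{component}\setminus x)$, and other components are untouched) — and then apply the inductive hypothesis to $G\setminus x$ with the smaller modulator $X\setminus\{x\}$, giving $\bd(G)\le 1+|X\setminus\{x\}|+\bd(G\setminus X)=|X|+\bd(G\setminus X)$. Finally, item~\ref{item:treewidth}: the plan is to show, by induction on $\bd(G)$, that $\tw(G)\le\bd(G)$. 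The base case $\bd(G)=0$ is the empty graph. For the step, treewidth is the max over connected components so we may assume $G$ connected and nonempty; contracting bridges does not increase treewidth (contraction never increases treewidth), so $\tw(G)\le \tw(\bar G)$, wait that is backwards — contraction can only decrease or preserve treewidth, so $\tw(\bar G)\le\tw(G)$, which is the wrong direction. Instead I recall the standard fact that for any graph $H$, $\tw(H)\le 1+\tw(H\setminus v)$ for any vertex $v$, and that a bridge contraction changes treewidth by at most… hmm. The robust argument: build a tree decomposition of $G$ of width $\le\bd(G)$ directly. Take the optimal $v\in V(\bar G)$, recursively get a tree decomposition of $\bar G\setminus v$ of width $\le\bd(\bar G\setminus v)=\bd(G)-1$, add $v$ to every bag (width $\le\bd(G)$), which gives a tree decomposition of $\bar G$; then ``un-contract'' the bridges — since each $T_v$ is a tree and the inter-tree edges are unique, one replaces the bag containing $v\in V(\bar G)$ by a path of bags realizing the tree $T_v$, which does not increase the width because a tree has treewidth $1$ and, more carefully, one can attach the pieces so every bag has size $\le \bd(G)+1$. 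This gluing step is the main technical obstacle of item~\ref{item:treewidth}; the alternative and cleaner route is simply to cite that $\tw$ is minor-monotone and that $\tw(\bar G\setminus v)\ge \tw(G')-1$ for $G'$ obtained appropriately, but I think the direct tree-decomposition construction, while fiddly, is the most self-contained and is what I would write out.
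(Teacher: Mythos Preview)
Your outline for items 1, 2, 3, and 5 is fine and matches the paper. There is a genuine gap in item~\ref{rectreeofbridges}. The auxiliary claim you plan to prove --- that $\overline{G\setminus V(T_v)}$ and $\bar G\setminus v$ ``really do coincide'' --- is false. Take $G$ to be two vertex-disjoint triangles joined by a single bridge $e$, and let $T_v$ consist of $e$ with its two endpoints. Then $G\setminus V(T_v)$ is two disjoint edges, each of which is a bridge, so $\overline{G\setminus V(T_v)}$ is two isolated vertices; whereas $\bar G\setminus v$ is two disjoint edges (no further contraction is performed after deleting $v$). What \emph{is} true, and what the paper uses, is the weaker identity $\overline{G\setminus V(T_v)} = \overline{\bar G\setminus v}$; combined with item~\ref{contractionsstable} applied to $\bar G\setminus v$, this still yields $\bd(G\setminus V(T_v)) = \bd(\bar G\setminus v)$, which is what you actually need. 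There is a second problem: both of your ``directions'' start from the optimal $v\in V(\bar G)$ and set $T=T_v$, so both establish only $\bd(G)\ge 1+\min_T\bd(G\setminus V(T))$. The inequality $\bd(G)\le 1+\min_T\bd(G\setminus V(T))$ requires starting from an \emph{arbitrary} tree of bridges $T$ achieving the right-hand minimum (possibly non-maximal) and producing a vertex $v$. The paper does this by extending $T$ to a maximal tree of bridges $T^*$ --- noting that $\bd(G\setminus V(T^*))\le\bd(G\setminus V(T))$ by item~\ref{minorclosed} since $G\setminus V(T^*)$ is a subgraph of $G\setminus V(T)$ --- and then taking the unique $v^*\in V(\bar G)$ with $T_{v^*}=T^*$.

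For item~\ref{item:treewidth}, your ``un-contract the bridges'' plan is a different route from the paper and, as you note, the gluing step is delicate. The paper avoids it entirely: it recalls that $\tw(G)$ is the maximum treewidth over the biconnected components of $G$, and observes that any nontrivial biconnected component $G'$ is bridgeless, hence $\bar{G'}=G'$ and Definition~\ref{def:bd} directly supplies a vertex $v$ with $\bd(G')=1+\bd(G'\setminus v)$. Induction on $|V(G)|$ then gives $\tw(G'\setminus v)\le\bd(G'\setminus v)$, and adding $v$ to every bag of a width-$\bd(G'\setminus v)$ decomposition of $G'\setminus v$ finishes. This sidesteps the bridge un-contraction altogether.
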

\begin{proof}
The first item follows easily from the definition, while the second one uses that $\bar{(\bar{G})}=\bar{G}$.

\medskip
 \textbf{Proof of~\ref{minorclosed}}: We prove the claim by induction on $|V(G)| + |E(G)|$. Suppose that~$G$ has multiple connected components~$\{G_i \mid i \in [\cc(G)]\}$, and let~$\{G'_i \mid i \in [\cc(G')]\}$ be the connected components of the minor~$G'$ of~$G$. Then each connected component~$G'_j$ is a minor of some component~$G_i$ of~$G$ on fewer than~$|V(G)|$ vertices, which gives~$\bd(G'_j) \leq \bd(G_i)$ by induction. Hence we have~$\bd(G') = \max_{j \in [\cc(G')]} \bd(G'_j) \leq \max_{i \in [\cc(G)]} \bd(G_i) = \bd(G)$.

We now deal with the case that $G$ is connected. In general, if some graph~$G^*$ is a minor of~$G$, then~$G^*$ is a minor of a graph~$G'$ obtained from~$G$ by removing an edge, contracting an edge, or removing an isolated vertex. Since~$G$ is assumed to be connected, the third case cannot occur here. Then by induction, we have~$\bd(G^*) \leq \bd(G')$, so it suffices to prove that~$\bd(G') \leq \bd(G)$ for any graph~$G'$ obtained by removing or contracting an edge.
  Let us first prove that if $\barp{G}$ is a minor of $\bar{G}$, then $\bd(G') \le \bd(G)$. Indeed, let~$v^* \in V(\bar{G})$ such that~$\bd(G) = 1 + \bd(\bar{G} \setminus v^*)$, and consider an arbitrary component~$G'_i$ of~$G'$. Note that~$\bar{(G'_i)}$ is a component of~$\bar{(G')}$, and therefore a minor of~$\bar{G}$ by hypothesis. If~$\bar{(G'_i)}$ is a minor of the graph~$\bar{G} \setminus v^*$, then by induction and Item~\ref{contractionsstable} we have~$\bd(G'_i) = \bd(\bar{(G'_i)}) \leq \bd(\bar{G} \setminus v^*) < \bd(G)$. Otherwise, any minor model~$\{S_x \mid x \in V(\bar{(G'_i)})\}$ of~$\bar{(G'_i)}$ in~$\bar{G}$ contains a branch set~$S_{x^*}$ with~$v^* \in S_{x^*}$. But then~$\bd(\bar{(G'_i)}) \leq 1 + \bd(\bar{(G'_i)} \setminus x^*)$ by definition, and~$\bar{(G'_i)} \setminus x^*$ is a minor of~$\bar{G} \setminus v^*$, and therefore has bridge-depth at most~$\bd(G) - 1$, so that~$\bd(\bar{(G'_i)}) \leq \bd(G)$. Hence for each component~$G'_i$ of~$G'$ we have~$\bd(G'_i) = \bd(\bar{(G'_i)}) \leq \bd(G)$, implying~$\bd(G') \leq \bd(G)$.
	
  Thus, it only remains to prove that $\barp{G}$  is a minor of $\bar{G}$.
  Let us first assume that $G'$ is obtained from $G$ by removing an edge $e$.  Let $\{T_v \mid v \in V(\bar{G})\}$ be the $\bar{G}$-model in $G$ given by the trees of bridges.
  If $e$ is not a bridge, then $e$ is an edge between $T_u$ and $T_v$ for some vertices $u,v \in V(\bar{G})$.
  To obtain $\barp{G}$ as a minor, we start from $\bar{G}$, remove edge $\{u,v\}$, and for any edge $e'$ between $T_{u'}$ and $T_{v'}$ (for any~$u',v' \in V(\bar{G})$)
	that has become a bridge in $G'$ because of the removal of $e$, we contract $\{u',v'\}$. This implies that $\barp{G}$ is a minor of $\bar{G}$.
  Otherwise, if $e$ is a bridge, then there exists $u \in V(\bar{G})$ such that $e \in  E(T_u)$, and $G'$ has two connected components $G'_1$ and $G'_2$.
  To obtain $\bar{(G'_i)}$ as a minor, for $i \in [2]$, we start from $\bar{G}$ and remove any vertex $v$ such that $T_v \cap V(G'_i) = \emptyset$
  (notice that $u$ appears both in $\bar{(G'_1)}$ and $\bar{(G'_2)}$). Thus,  both $\bar{(G'_1)}$ and $\bar{(G'_2)}$ are minors of $\bar{G}$, hence  $\barp{G}$ as well.
  The case where $G'$ is obtained from $G$ by contracting an edge $e$ can be proved using similar but simpler arguments. Indeed, if $e$ is a bridge in $G$, then we have that $\barp{G} = \bar{G}$, and if it is not, it suffices to contract in $\bar{G}$ the edge $\{u,v\}$ with $u,v \in V(\bar{G})$ such that  $e$ is an edge between $T_u$ and $T_v$.

   \medskip
   \textbf{Proof of~\ref{rectreeofbridges}}: Let $v \in V(\bar{G})$ and $T_v$ be its associated tree of bridges in $G$.
   Observe first that we may have $\bar{(G\setminus V(T_v))} \neq \bar{G}\setminus v$. Indeed, if for example we consider
   $G$ composed of two vertex-disjoint triangles $\{a,b,c\}$, $\{a',b',c'\}$ and an edge $e=\{a,a'\}$, and if we consider $T_v = \{e\}$,
   then $\bar{G}\setminus v$ is composed of two disjoint edges, whereas $\bar{(G \setminus V(T_v))}$ is composed of two isolated vertices.
   However, it is easy to verify that $\bar{(G\setminus V(T_v))} = \bar{(\bar{G}\setminus v)}$.
   Let us now prove that $\min_{T} \bd(G \setminus  V(T)) \le \min_{v \in V(\bar{G})} \bd(\bar{G} \setminus  v)$.
   Let $v^*$ be a vertex minimizing $\bd(\bar{G} \setminus  v)$.
  We have $\min_{T} \bd(G \setminus  V(T)) \le \bd(G \setminus  V(T_{v^*})) =  \bd(\bar{(G\setminus V(T_{v^*})})$ using Item~\ref{contractionsstable} in the last equality,
  and  $\bd(\bar{(G\setminus V(T_{v^*})})=\bd(\bar{(\bar{G}\setminus v^*)})=\bd(\bar{G} \setminus  v^*)$ using again Item~\ref{contractionsstable}.

  For the other inequality, let $T^0$ be a tree of bridges that minimizes $\bd(G\setminus V(T))$.
  If $T^0$ is not inclusion-wise maximal, let $T^*$ be any inclusion-wise maximal tree of bridges containing $T^0$.
  Note that as $G\setminus V(T^*)$ is a subgraph of $G\setminus V(T^0)$, by Item~\ref{minorclosed} we get that $\bd(G \setminus  V(T^*)) \le \bd(G \setminus  V(T^0))$,
  implying that $T^*$ also minimizes $\bd(G\setminus V(T))$.
  Let $v^* \in V(\bar{G})$ such that $T_{v^*} = T^*$.
  We have $\min_{v \in V(\bar{G})} \bd(\bar{G} \setminus  v) \le \bd(\bar{G}\setminus v^*) = \bd(\bar{(G\setminus T_{v^*})}) = \bd(G\setminus T_{v^*})$.

  \medskip
  \textbf{Proof of~\ref{removeX}}. We use induction on $|X|$, the base case~$X = \emptyset$ being trivial. For the induction step, pick an arbitrary~$v \in X$, let~$X' := X \setminus \{v\}$, and~$G' := G \setminus X'$. By induction we have~$\bd(G) \leq |X'| + \bd(G')$. Let~$G'_i$ be the connected component of~$G'$ containing~$v$. Using~$v$ as a singleton tree of bridges in~$G'_i$, Item~\ref{rectreeofbridges} shows that~$\bd(G'_i) \leq 1 + \bd(G'_i \setminus v) \leq 1 + \bd(G' \setminus v)$. Since all other components~$G'_j$ of~$G'$ also occur as components of~$G' \setminus v$, it follows that~$\bd(G'_j) \leq \bd(G' \setminus v)$, implying~$\bd(G') \leq 1 + \bd(G' \setminus v) = 1 + \bd(G \setminus X)$ since~$G' \setminus v = G \setminus X$. Hence~$\bd(G) \leq |X'| + 1 + \bd(G \setminus X)$.
	
  \medskip
  \textbf{Proof of~\ref{item:treewidth}}: We use induction on~$|V(G)|$; the base case follows directly from the definitions. It is well-known (cf.~\cite[Lemma 6]{Bodlaender98}) that the tree-width of~$G$ is the maximum tree-width of its biconnected components. Hence it suffices to prove that for an arbitrary biconnected component~$G'$ of~$G$, we have~$\tw(G') \leq \bd(G')$. If~$G'$ consists of a single edge, then~$\tw(G') = \bd(G') = 1$. Otherwise,~$G'$ is a connected bridgeless graph. This implies~$\bar{(G')} = G'$, so by Definition~\ref{def:bd} there is a vertex~$v \in V(G')$ such that~$\bd(G') = 1 + \bd(G' \setminus v)$. Since~$G'$ is a minor of~$G$, we have~$\bd(G') \leq \bd(G)$ by Item~\ref{minorclosed}. By induction, the tree-width of~$G' \setminus v$ is at most~$\bd(G' \setminus v) \leq \bd(G) - 1$. Adding vertex~$v$ to all bags of a tree decomposition of this width, gives a valid tree decomposition of~$G'$ of width at most~$\bd(G' \setminus v) + 1 \leq \bd(G')$. Hence~$\tw(G') \leq \bd(G')$ for all biconnected components of~$G$.
  \end{proof}

A~$(c+1) \times (c+1)$-grid is a planar graph of tree-width exactly~$c+1$~\cite[Cor.~89]{Bodlaender98}, which implies by Item~\ref{item:treewidth} of Proposition~\ref{proposition:bd} that its bridge-depth is larger than~$c$. This gives the following consequence of Proposition~\ref{proposition:bd}, which will be useful when invoking algorithmic meta-theorems.

\begin{observation} \label{obs:boundedbd:family}
For each~$c \in \mathbb{N}$, the graphs of bridge-depth at most~$c$ form a minor-closed family that excludes a planar graph. By the Graph Minor Theorem~\cite{RobertsonS04}, there is a finite set of forbidden minors~$\mathcal{H}_c$ such that~$\bd(G) \leq c$ if and only if~$G$ excludes all graphs of~$\mathcal{H}_c$ as a minor. The set~$\mathcal{H}_c$ contains a planar graph, since some planar graphs have bridge-depth~$> c$.
\end{observation}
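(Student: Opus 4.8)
The plan is to verify the three assertions in turn; all follow quickly from the properties of bridge-depth already established in Proposition~\ref{proposition:bd} together with the Graph Minor Theorem, so this is essentially a bookkeeping argument.

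First I would establish minor-closedness and the exclusion of a planar graph. Item~\ref{minorclosed} of Proposition~\ref{proposition:bd} states that $\bd$ does not increase under taking minors, hence the class $\mathcal{C}_c := \{G \mid \bd(G) \le c\}$ is closed under minors. To see that $\mathcal{C}_c$ excludes a planar graph, let $H^*$ be the $(c+1)\times(c+1)$-grid, which is planar and has tree-width exactly $c+1$ (cf.~\cite[Cor.~89]{Bodlaender98}). By Item~\ref{item:treewidth} of Proposition~\ref{proposition:bd} we get $\bd(H^*) \ge \tw(H^*) = c+1 > c$, so $H^* \notin \mathcal{C}_c$.

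Next, since $\mathcal{C}_c$ is a minor-closed class, the Graph Minor Theorem~\cite{RobertsonS04} provides a finite set $\mathcal{H}_c$ of graphs (the minor-minimal graphs outside $\mathcal{C}_c$) such that $\bd(G) \le c$ if and only if $G$ contains no graph of $\mathcal{H}_c$ as a minor. Finally, to see that $\mathcal{H}_c$ contains a planar graph, I would use the grid $H^*$ as a witness: since $H^* \notin \mathcal{C}_c$, the characterization forces $H^*$ to contain some $H \in \mathcal{H}_c$ as a minor. As planarity is preserved under taking minors, $H$ is planar, and $H \in \mathcal{H}_c$ is the required planar obstruction.

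I do not expect any genuine obstacle here. The only point that deserves a moment's care is the last step: one must exhibit the grid as a concrete planar witness of non-membership in $\mathcal{C}_c$, and then observe that the obstruction it necessarily contains inherits planarity — but given Proposition~\ref{proposition:bd} and the standard facts about grids and planarity, this is immediate.
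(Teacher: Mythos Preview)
Your proposal is correct and follows essentially the same approach as the paper: the paper also invokes the $(c+1)\times(c+1)$-grid together with Item~\ref{item:treewidth} of Proposition~\ref{proposition:bd} to witness a planar graph of bridge-depth larger than~$c$, and then appeals to minor-closedness (Item~\ref{minorclosed}) and the Graph Minor Theorem. Your extra remark that the forbidden minor contained in the grid inherits planarity is the right way to make the final clause precise.
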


Observation~\ref{obs:boundedbd:family}, together with known results on minor testing, imply the following.

\begin{proposition}[{Follows from \cite[Thm.~7.1]{Bodlaender96}}] \label{prop:checkingBridgeDepth}
For each constant~$c \in \mathbb{N}$, there is a linear-time algorithm to test whether the bridge-depth of a given graph~$G$ is at most~$c$.
\end{proposition}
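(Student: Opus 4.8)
The plan is to verify that the class of graphs of bridge-depth at most~$c$ satisfies the hypotheses under which~\cite[Thm.~7.1]{Bodlaender96} gives a linear-time membership test, and then to recall how that test operates in this concrete setting. By Observation~\ref{obs:boundedbd:family}, for each fixed~$c$ the class~$\mathcal{G}_c := \{G \mid \bd(G) \le c\}$ is minor-closed and its obstruction set~$\mathcal{H}_c$ — finite by the Graph Minor Theorem~\cite{RobertsonS04} — contains a planar graph~$H_c$. These are precisely the conditions needed, so the proposition follows from~\cite{Bodlaender96}; what remains is to spell out the argument for completeness.

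Concretely, I would first run Bodlaender's linear-time algorithm~\cite{Bodlaender96} with a width threshold~$w = w(c)$ chosen large enough that every graph of treewidth more than~$w$ contains~$H_c$ as a minor; such a~$w$ exists because a planar graph on~$m$ vertices is a minor of a sufficiently large grid, and a graph of large treewidth contains a large grid as a minor. The algorithm either reports~$\tw(G) > w$, in which case~$G$ has~$H_c$ as a minor and hence~$\bd(G) > c$, so we answer ``no''; or it returns a tree decomposition of~$G$ of width at most~$w$. In the latter case, for each of the constantly many graphs~$H \in \mathcal{H}_c$ I would test whether~$H$ is a minor of~$G$, which can be done in linear time by standard dynamic programming over a tree decomposition of bounded width, and answer ``yes'' precisely when none of these tests succeeds. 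Correctness is immediate from the definition of an obstruction set: $\bd(G) \le c$ iff~$G$ excludes every member of~$\mathcal{H}_c$ as a minor. The overall running time is linear because both~$w(c)$ and~$|\mathcal{H}_c|$ are constants depending only on~$c$.

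The one delicate point — which is not really an obstacle, but worth stating — is that the Graph Minor Theorem does not provide a way to compute~$\mathcal{H}_c$ (nor~$w(c)$) from~$c$, so the construction is non-uniform: for each fixed constant~$c$ we only claim that \emph{some} linear-time algorithm exists, namely the one that hard-wires the appropriate finite set~$\mathcal{H}_c$. This matches the form in which~\cite[Thm.~7.1]{Bodlaender96} is stated and suffices for all our algorithmic applications, where~$c$ is always a fixed constant. Beyond carefully tracking the quantifiers in the implication chain ``large treewidth~$\Rightarrow$ planar obstruction appears as a minor~$\Rightarrow$ bridge-depth exceeds~$c$'', I expect no genuine difficulty.
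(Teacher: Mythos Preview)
Your proposal is correct and follows exactly the route the paper takes: the paper simply notes that Observation~\ref{obs:boundedbd:family} (minor-closed class with a planar obstruction) together with~\cite[Thm.~7.1]{Bodlaender96} yields the result, without spelling out the algorithm. Your expanded account of how the linear-time test actually operates, and your remark on non-uniformity, go beyond what the paper records but are entirely consistent with it.
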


Fomin et al.~\cite[Thm.~1.3]{FominLMPS16} gave a generic approximation algorithm for finding a small vertex set that hits forbidden minors from a finite forbidden set containing a planar graph. By Observation~\ref{obs:boundedbd:family}, deleting vertices to obtain a graph of bounded bridge-depth fits into their framework.

\begin{proposition}[Follows from {\cite[Thm.~1.3]{FominLMPS16}}] \label{prop:approxmod}
For each fixed~$c \in \mathbb{N}$ there is a polynomial-time algorithm that, given a graph~$G$, outputs a set~$X \subseteq V(G)$ such that~$\bd(G \setminus X) \leq c$ and~$|X| \leq \Oh(|X_{\mathrm{opt}}| \log^{2/3} |X_{\mathrm{opt}}|)$, where~$|X_{\mathrm{opt}}|$ is the minimum size of such a set.
\end{proposition}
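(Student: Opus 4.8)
The plan is to derive this from the generic planar-$\mathcal{F}$-deletion approximation of Fomin et al.~\cite[Thm.~1.3]{FominLMPS16}, using Observation~\ref{obs:boundedbd:family} to cast ``deleting vertices to reach bridge-depth at most~$c$'' as an instance of the problem that their theorem solves. Recall the setting of~\cite{FominLMPS16}: for a \emph{fixed finite} family~$\mathcal{F}$ of graphs containing at least one planar graph, the \textsc{Planar-$\mathcal{F}$-Deletion} problem asks, given a graph~$G$, for a smallest set~$S \subseteq V(G)$ such that~$G \setminus S$ contains no graph of~$\mathcal{F}$ as a minor; their Theorem~1.3 provides a polynomial-time algorithm returning such a set~$S$ of size~$\Oh(|S_{\mathrm{opt}}| \log^{2/3} |S_{\mathrm{opt}}|)$, where~$|S_{\mathrm{opt}}|$ is the optimum.

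The key observation is that, for each fixed~$c$, the task of deleting vertices to obtain bridge-depth at most~$c$ is \emph{exactly} such an instance. By Observation~\ref{obs:boundedbd:family}, the class~$\mathcal{C}_c := \{G \mid \bd(G) \le c\}$ is minor-closed and excludes a planar graph, so by the Graph Minor Theorem~\cite{RobertsonS04} there is a finite obstruction set~$\mathcal{H}_c$ with~$G \in \mathcal{C}_c$ if and only if~$G$ excludes every graph of~$\mathcal{H}_c$ as a minor, and~$\mathcal{H}_c$ contains a planar graph. Since bridge-depth is minor-closed (Proposition~\ref{proposition:bd}, Item~\ref{minorclosed}), for every~$X \subseteq V(G)$ we have~$\bd(G \setminus X) \le c$ if and only if~$G \setminus X$ contains no member of~$\mathcal{H}_c$ as a minor. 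Hence the sets~$X$ with~$\bd(G \setminus X) \le c$ are precisely the feasible solutions of \textsc{Planar-$\mathcal{H}_c$-Deletion} on~$G$, and their minimum size equals the quantity~$|X_{\mathrm{opt}}|$ in the statement. Running the algorithm of~\cite[Thm.~1.3]{FominLMPS16} on input~$G$ with the fixed family~$\mathcal{H}_c$ then returns in polynomial time a set~$X$ with~$\bd(G \setminus X) \le c$ and~$|X| = \Oh(|X_{\mathrm{opt}}| \log^{2/3} |X_{\mathrm{opt}}|)$, as required.

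The only delicate point, and the closest thing to an obstacle, is that the Graph Minor Theorem supplies the obstruction set~$\mathcal{H}_c$ only non-constructively, so the resulting algorithm is not uniform in~$c$. This is harmless, since the statement quantifies over each \emph{fixed}~$c$: for every~$c$ the family~$\mathcal{H}_c$ is a fixed finite object, and the corresponding algorithm exists with~$\mathcal{H}_c$ hard-wired into it, which is exactly what is claimed. One should also confirm that the guarantees of~\cite[Thm.~1.3]{FominLMPS16} apply to an arbitrary finite planar family~$\mathcal{F}$ (and not only, say, to connected obstruction sets), but this is the generality in which that theorem is stated, so nothing further is needed.
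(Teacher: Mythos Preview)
Your argument is correct and matches the paper's own justification: the paper simply notes (in the sentence preceding the proposition) that by Observation~\ref{obs:boundedbd:family} the class of graphs of bridge-depth at most~$c$ is minor-closed with a finite obstruction set containing a planar graph, so the approximation result of Fomin et al.\ applies directly. Your additional remarks on non-uniformity in~$c$ are accurate and consistent with the ``for each fixed~$c$'' phrasing.
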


The following concept will be crucial to facilitate a recursive approach for reducing graphs of bounded bridge-depth.

\begin{definition} \label{def:loweringtree}
  A \emph{lowering tree} $T$ of a graph $G$ is a tree of bridges (possibly consisting of a single vertex and no bridges) such that $\bd(G \setminus V(T)) = \bd(G)-1$.
 \end{definition}

Item~\ref{rectreeofbridges} of Proposition~\ref{proposition:bd} implies that any connected graph $G$ has a lowering tree.

\begin{restatable}{proposition}{propXcomputingTreeDec} \label{prop:computingTreeDec}
For each fixed~$c \in \mathbb{N}$ there is an algorithm that, given a connected graph~$G$ on~$n$ vertices of bridge-depth~$c$, computes a lowering tree in~$\Oh(n^2)$ time.
\end{restatable}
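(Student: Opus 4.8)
The plan is to directly follow the recursive structure behind Item~\ref{rectreeofbridges} of Proposition~\ref{proposition:bd}, turning the existence proof into an algorithm, and to bound the running time carefully. First I would compute $\bar{G}$ by contracting all bridges: bridges can be found in $\Oh(n+m)$ time by a standard DFS, and contracting them (while remembering, for each vertex of $\bar G$, the tree of bridges $T_v$ it represents) takes $\Oh(n+m)$ time as well. Since $G$ has bridge-depth $c = \Oh(1)$, by Item~\ref{item:treewidth} of Proposition~\ref{proposition:bd} its treewidth is at most $c$, so $m = \Oh(n)$; thus this step is $\Oh(n)$. Now, by Definition~\ref{def:bd}, there is a vertex $v^\star \in V(\bar G)$ with $\bd(\bar G \setminus v^\star) = c-1$, and the corresponding maximal tree of bridges $T_{v^\star}$ is then a lowering tree of $G$ (this is exactly the content of the proof of Item~\ref{rectreeofbridges}, via $\bar{(G \setminus V(T_{v^\star}))} = \bar{(\bar G \setminus v^\star)}$ and Item~\ref{contractionsstable}). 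So it suffices to locate one such vertex $v^\star$ in $\bar G$.

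To find $v^\star$, I would iterate over all $v \in V(\bar G)$ and test whether $\bd(\bar G \setminus v) \le c-1$; the first vertex passing the test is $v^\star$. Each such test is an instance of ``is the bridge-depth at most $c-1$'', which by Proposition~\ref{prop:checkingBridgeDepth} runs in linear time — here linear in the size of $\bar G \setminus v$, which is $\Oh(n)$ — for the fixed constant $c-1$. (One must note that $\bar G$ itself is a minor of $G$, hence of bridge-depth at most $c$, so its treewidth and therefore its edge count are $\Oh(n)$, keeping each test genuinely $\Oh(n)$.) Trying all $\Oh(n)$ candidate vertices therefore costs $\Oh(n^2)$ in total. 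Once $v^\star$ is identified, I output the tree of bridges $T_{v^\star}$ that was recorded when forming $\bar G$; this is an explicit subtree of $G$ all of whose edges are bridges, and by the argument above it satisfies $\bd(G \setminus V(T_{v^\star})) = \bd(G) - 1$, so it is a lowering tree. Assembling the output (listing the vertices and edges of $T_{v^\star}$) is $\Oh(n)$, so the overall bound is $\Oh(n^2)$.

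The main obstacle — really the only nontrivial point — is ensuring that the decision subroutine is invoked correctly and that its cost is truly linear in $n$ rather than hiding a dependence on the (large, finite) obstruction set $\mathcal{H}_{c-1}$ from Observation~\ref{obs:boundedbd:family}. This is handled by noting that $c$ is a fixed constant, so $\mathcal H_{c-1}$ is a fixed finite family and Bodlaender's minor-testing machinery \cite{Bodlaender96} gives a genuine $\Oh(n)$-time test (with the constant depending only on $c$), which is exactly what Proposition~\ref{prop:checkingBridgeDepth} asserts. A secondary subtlety is the bookkeeping that lets us recover $T_{v^\star}$ as an actual subgraph of the original $G$ after contractions; this is routine if, while contracting each bridge, we maintain for every current vertex the set of original vertices and bridge-edges merged into it, which adds only $\Oh(n)$ overhead. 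No cleverness beyond this is needed: the $\Oh(n^2)$ bound comes simply from running an $\Oh(n)$-time test at each of the $\Oh(n)$ vertices of $\bar G$.
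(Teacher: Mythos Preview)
Your proposal is correct and follows essentially the same approach as the paper: iterate over the inclusion-maximal trees of bridges (equivalently, the vertices of $\bar G$), apply the linear-time test of Proposition~\ref{prop:checkingBridgeDepth} to each candidate, and output the first one that works. The only cosmetic difference is that the paper tests $\bd(G \setminus V(T)) < c$ directly in $G$ (reaching the maximal trees of bridges via the biconnected-component decomposition), whereas you equivalently test $\bd(\bar G \setminus v) \le c-1$ in $\bar G$ and justify the equivalence via Item~\ref{rectreeofbridges}; your explicit observation that $m = \Oh(n)$ by Item~\ref{item:treewidth} is a detail the paper leaves implicit.
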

\begin{proof}
Given~$G$, we compute its decomposition into biconnected components, which can be done in linear time taking into account that having bounded bridge-depth implies a linear number of edges~\cite{HopcroftT73a}. From this decomposition, it is straightforward to identify the inclusion-maximal trees of bridges in~$G$. For each tree of bridges~$T$ in~$G$, we can test whether~$\bd(G \setminus V(T)) < c = \bd(G)$ in linear time using Proposition~\ref{prop:checkingBridgeDepth}, and we output~$T$ if this is the case. By Proposition~\ref{proposition:bd}, such a tree~$T$ exists. Since~$G$ is decomposed into at most~$n$ trees of bridges, and we need a linear-time computation for each~$T$, this results in an~$\Oh(n^2)$-time algorithm.
\end{proof}

 \section{Bounded minimal blocking sets imply bounded bridge-depth} \label{sec:triangle-path}
The goal of this section is to prove one direction of Theorem~\ref{thm:bridgedepth:blockingsets}, showing that if~$\mathcal{F}$ has bounded-size minimal blocking sets, then~$\mathcal{F}$ has bounded bridge-depth. As explained in Section~\ref{sec:intro}, we prove this via the intermediate structure of \emph{necklace minors} and show that the bridge-depth of a graph~$G$ can be upper-bounded in terms of the longest necklace contained in it as a minor.

This result can be seen as an analog to the fact that the tree-depth of a graph can be bounded in terms of the length of the longest simple path it contains (as a subgraph or as a minor, which is equivalent for paths). A classical proof of this fact (see~\cite{NeMe12}) is to consider a depth-first search tree of~$G$, bounding the tree-depth of~$G$ by the depth of this tree. However, it does not seem immediate to find a similar bound for bridge-depth.

We therefore follow another approach, inspired by the following alternative proof that the tree-depth is upper-bounded by the length of the longest path (which gives a worse bound). Observe that in a connected graph~$G$, any two longest paths intersect at a vertex. (If they did not, one could combine them to make an even longer path.) Given a connected graph $G$ whose longest path has $t$ vertices, we can bound its tree-depth by~$f(t) := \sum_{i=1}^t i$ as follows. Let~$P$ be a longest path in~$G$. Then the longest path in~$G \setminus V(P)$ has strictly fewer than~$t$ vertices, and by induction the tree-depth of~$G \setminus V(P)$ is at most~$f(t-1)$. From the definition of tree-depth, it follows that the tree-depth of~$G$ is at most~$|V(P)| = t$ larger than that of~$G \setminus V(P)$, so the tree-depth of~$G$ is at most~$f(t)$.

In the case of bridge-depth, where paths are replaced with necklaces contained as minors, we cannot afford to remove the entire set of vertices of the corresponding model of a longest necklace,
as the size of this set  cannot be bounded in terms of the length $t$ of the necklace. To overcome this problem, we will prove in Lemma~\ref{lemma:pack}, similarly to the case of paths, that there cannot be two vertex-disjoint longest necklaces. Then we
resort to the Erd\H{o}s-P{\'o}sa property, which gives us a set of vertices of size $f(t)$ whose removal decreases the maximum length of a longest necklace.
We now formalize these ideas.

\begin{definition}\label{def:necklace}
  For $t \in \mathbb{N}$, the \emph{necklace of length $t$},  denoted by $N_t$, is the multigraph having $t+1$ vertices $\{v_i \mid i \in [t+1]\}$ and two parallel edges
  between $v_i$ and $v_{i+1}$ for  $i \in [t]$.
\end{definition}


\begin{observation}\label{obs:neck}
A simple graph $G$ contains $N_t$ as a minor if and only if $G$ contains $t+1$ vertex-disjoint sets $S_i \subseteq V(G)$ such that
each $S_i$ is connected and, for $i \in [t]$, there are at least two edges between $S_i$ and $S_{i+1}$.
\end{observation}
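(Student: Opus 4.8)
The plan is to simply unwind the definition of an $H$-model for the specific multigraph $H = N_t$ given in Definition~\ref{def:necklace}. Recall that $G$ contains $N_t$ as a minor precisely when $G$ has an $N_t$-model, that is, a collection $\{S^M_{v_i} \mid i \in [t+1]\}$ of pairwise disjoint subsets of $V(G)$, each inducing a connected subgraph, such that for every pair of distinct vertices $v_i,v_j$ of $N_t$ the number of $G$-edges with one endpoint in $S^M_{v_i}$ and the other in $S^M_{v_j}$ is at least the number of edges between $v_i$ and $v_j$ in $N_t$.

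The key observation I would make is that in $N_t$ the only pairs of vertices joined by an edge are the consecutive pairs $\{v_i,v_{i+1}\}$ for $i \in [t]$, and each such pair is joined by exactly two parallel edges; all other pairs are joined by zero edges. Consequently, in the definition above the inequality for a non-consecutive pair reads ``at least $0$'' and is therefore vacuous, while for a consecutive pair $\{v_i,v_{i+1}\}$ it states exactly that there are at least two $G$-edges between $S^M_{v_i}$ and $S^M_{v_{i+1}}$.

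Given this, both implications are immediate. For the forward direction, from an $N_t$-model $\{S^M_{v_i}\}$ we take $S_i := S^M_{v_i}$: these are pairwise disjoint and connected by definition of a model, and the two-edges condition between $S_i$ and $S_{i+1}$ for $i \in [t]$ is precisely the surviving part of the model's edge requirement. For the backward direction, from sets $S_1,\dots,S_{t+1}$ as in the statement we set $S^M_{v_i} := S_i$; pairwise disjointness and connectivity are assumed, the required two edges between $S^M_{v_i}$ and $S^M_{v_{i+1}}$ are assumed, and the requirement for non-consecutive pairs is $\ge 0$, hence trivially met, so $\{S^M_{v_i}\}$ is a valid $N_t$-model and $G$ contains $N_t$ as a minor. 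There is no real obstacle here: the statement is just a convenient reformulation of the general minor-model definition specialized to $N_t$, whose edge set consists solely of the double edges between consecutive vertices.
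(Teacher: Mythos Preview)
Your proposal is correct and matches the paper's treatment: the paper states this as an observation without proof, since it is indeed an immediate unwinding of the definition of an $N_t$-model specialized to the multigraph $N_t$, exactly as you describe.
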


\begin{definition}
   The \emph{necklace-minor length} of a graph $G$, denoted by $\nm(G)$, is the largest length of a necklace contained in $G$ as a minor, or zero if~$G$ contains no such minor.
\end{definition}

We need to introduce the Erd\H{o}s-P{\'o}sa property for packing and covering minor models. Let $\mc F$ be a finite collection of simple graphs. An \emph{$\mc F$-model}  is an $H$-model for some $H \in \mc F$.
Two $\mc F$-models $M_1$ and $M_2$ are \emph{disjoint} if $V(M_1)\cap V(M_2) = \emptyset$.
Let $\nu_{\mc F}(G)$ be the
 maximum
cardinality of a packing of pairwise disjoint
 $\mc F$-models in $G$, and let $\tau_{\mc F}(G)$ be
the minimum size of a subset $X \subseteq V(G)$ such that $G \setminus X$ has no $\mc F$-model. Clearly,
$\nu_{\mc F}(G) \leq \tau_{\mc F}(G)$. We say that the \emph{Erd\H{o}s-P{\'o}sa property} holds for $\mc F$-models if there exists
a bounding function $f \colon \mathbb{N} \rightarrow \mathbb{N}$ such that, for every graph $G$,  $\tau_{\mc F}(G) \le f(\nu_{\mc F}(G))$.

In the case where $\mc F = \{H\}$ contains a single connected graph $H$, Robertson and Seymour~\cite{RobertsonS86} proved the following result.

\begin{theorem}[Robertson and Seymour~\cite{RobertsonS86}]\label{thm:EP}
Let $H$ be a connected graph. The Erd\H{o}s-P{\'o}sa property holds for $H$-models if and only if $H$ is planar.
\end{theorem}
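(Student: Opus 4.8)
The plan is to prove the two implications separately, using the Grid Minor Theorem for the direction "$H$ planar $\Rightarrow$ Erd\H{o}s-P\'osa holds" and a construction of graphs on surfaces for the converse.

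\textbf{Planar $H$ gives the Erd\H{o}s-P\'osa property.} Fix a connected planar $H$ on $m$ vertices. The starting point is the folklore fact that every planar graph on $m$ vertices is a minor of the $N\times N$ grid for some $N=N(m)$; partitioning a large grid into a diagonal family of $N\times N$ subgrids, the $(Nk)\times(Nk)$ grid contains $k$ pairwise disjoint $H$-models. Now let $G$ satisfy $\nu_H(G)<k$. If $\tw(G)$ exceeds the bound $g(Nk)$ from the Grid Minor Theorem (treewidth $\ge g(t)$ forces a $t\times t$ grid minor), then $G$ has a $(Nk)\times(Nk)$ grid minor, hence $k$ disjoint $H$-models, a contradiction; so $\tw(G)\le w:=g(Nk)$, a quantity depending only on $H$ and $k$. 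It then remains to prove the property for graphs of treewidth at most $w$, which I would do by an induction establishing $\tau_H(G)\le (w+1)\cdot\nu_H(G)$. Fix a rooted width-$w$ tree decomposition; for each $H$-model $M$ the set of decomposition nodes whose bag meets $V(M)$ induces a subtree, with a well-defined topmost node. Choose an $H$-model $M^\ast$ whose topmost node $t^\ast$ is as deep as possible. The bag $B$ at $t^\ast$ has size at most $w+1$, it contains a cut separating the part $C$ of $G$ "below" $t^\ast$ (which contains $M^\ast$) from the rest, and by the maximality of the depth every $H$-model that avoids $B$ is disjoint from $C$. Hence deleting $B$ removes all $H$-models meeting $C$, and on the remaining graph $G'$ (which has $\nu_H(G')\le\nu_H(G)-1$, since $M^\ast$ is disjoint from it) induction applies, giving $\tau_H(G)\le (w+1)+\tau_H(G')\le (w+1)\nu_H(G)$.

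\textbf{Non-planar $H$ destroys the property.} Let $g\ge 1$ be the Euler genus of $H$ and $\Sigma$ a surface of Euler genus exactly $g$, so $H$ embeds in $\Sigma$ but in no surface of smaller Euler genus. For each $r$, let $G_r$ be a graph embedded in $\Sigma$ with face-width (representativity) at least $r$, for instance a large grid-like triangulation of $\Sigma$. Two properties are needed. First, for $r$ above a threshold depending only on $H$, $G_r$ has an $H$-minor, since a graph embedded in $\Sigma$ with large enough face-width contains as a minor every graph that embeds in $\Sigma$; moreover deleting fewer than $r-c_H$ vertices (for a constant $c_H$) leaves a graph still embedded in $\Sigma$ with face-width above the threshold, hence still with an $H$-minor, so $\tau_H(G_r)=\Omega(r)\to\infty$. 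Second, $\nu_H(G_r)$ is bounded by a constant depending only on $g$: each $H$-model is non-planar, so inside $\Sigma$ it cannot lie in a disk and must use the topology of $\Sigma$ in an essential way; a homology/genus-counting argument then shows that a fixed surface cannot contain more than a bounded number of pairwise disjoint such subgraphs. Letting $k\to\infty$ shows that no bounding function can exist, so the Erd\H{o}s-P\'osa property fails.

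\textbf{Expected main obstacle.} Granting the (deep but standard) Grid Minor Theorem, the planar direction is routine. The delicate part is the converse: proving cleanly that high-face-width embeddings on $\Sigma$ realize all $\Sigma$-embeddable graphs as minors and survive the deletion of many vertices, and — the real crux — bounding the number of pairwise disjoint non-planar subgraphs of a fixed surface, which requires genuine surface topology (cutting along non-contractible curves to reduce genus). I expect that topological packing bound to be the hardest step.
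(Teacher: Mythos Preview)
The paper does not prove this theorem at all: it is quoted verbatim as a black box from Robertson and Seymour~\cite{RobertsonS86} and used only via its immediate Corollary~\ref{cor:ep}. So there is nothing to compare your argument against on the paper's side.

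That said, your sketch is the standard route to this result and is essentially sound. The planar direction via the Grid Minor Theorem plus the ``deepest topmost bag'' induction on a bounded-width tree decomposition is correct; the one step worth making explicit is why $\nu_H(G\setminus B)\le\nu_H(G)-1$: any $H$-model in $G\setminus B$ avoids $B$ and hence (by your depth-maximality argument) lies entirely outside $C\cup B$, so it is vertex-disjoint from $M^\ast\subseteq C\cup B$, and adding $M^\ast$ to a maximum packing in $G\setminus B$ yields a packing in $G$. For the non-planar direction, the three ingredients you list are indeed the classical ones (high-representativity embeddings contain every $\Sigma$-embeddable graph as a minor; representativity drops by at most one per vertex deletion; a fixed surface hosts only boundedly many pairwise disjoint non-planar subgraphs), and you correctly flag the last as the genuinely topological step. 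Your self-assessment of the difficulties is accurate.
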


It is worth mentioning that a tight bounding function when $H$ is planar has been recently obtained by van Batenburg et al.~\cite{tight-EP}.
Theorem~\ref{thm:EP} easily implies the following corollary.

\begin{corollary}\label{cor:ep}
 For every $t \geq 1$, the Erd\H{o}s-P{\'o}sa property holds for $N_t$-models.
\end{corollary}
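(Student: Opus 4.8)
The plan is to derive Corollary~\ref{cor:ep} directly from Theorem~\ref{thm:EP} by observing that $N_t$ is not literally a graph (it is a multigraph), but packing and covering $N_t$-models is equivalent to packing and covering $H_t$-models for a suitable \emph{simple} planar graph $H_t$. First I would record the basic fact that, by Observation~\ref{obs:neck}, a simple graph $G$ has an $N_t$-model exactly when $G$ has a collection of $t+1$ pairwise disjoint connected vertex sets $S_1,\dots,S_{t+1}$ with at least two edges between consecutive sets. This is precisely the condition for $G$ to contain as a minor the simple graph $H_t$ obtained from $N_t$ by subdividing one edge of each double-edge exactly once: $H_t$ has $2t+1$ vertices arranged in a path-like ``ladder'' pattern, and a minor model of $H_t$ in $G$ yields (after merging each subdivision vertex's branch set into one of its neighbours) an $N_t$-model, and conversely any $N_t$-model can be turned into an $H_t$-model by splitting off a single vertex from an appropriate branch set along one of the two parallel connections. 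Hence $\nu_{\{N_t\}}(G)=\nu_{\{H_t\}}(G)$ and $\tau_{\{N_t\}}(G)=\tau_{\{H_t\}}(G)$ for every simple graph~$G$.

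Next I would verify that $H_t$ is connected and planar. Connectedness is immediate from its description as a subdivided necklace. Planarity is also clear: $N_t$ is planar as a multigraph (it can be drawn as a chain of $t$ ``bigon'' faces along a line), and subdividing edges preserves planarity, so $H_t$ is planar. With $H_t$ a connected planar simple graph, Theorem~\ref{thm:EP} applies and gives a bounding function $f_t$ such that $\tau_{\{H_t\}}(G)\le f_t(\nu_{\{H_t\}}(G))$ for all graphs~$G$. Translating back through the equalities of the previous paragraph yields $\tau_{\{N_t\}}(G)\le f_t(\nu_{\{N_t\}}(G))$, which is exactly the Erd\H{o}s--P\'osa property for $N_t$-models.

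I do not expect any real obstacle here — the statement is essentially a bookkeeping translation between a multigraph minor and an associated simple-graph minor. The only point requiring a line or two of care is the correspondence between $N_t$-models and $H_t$-models in both directions: one must check that splitting a vertex off a branch set along a doubled connection is always possible (it is, because ``at least two edges between $S_i$ and $S_{i+1}$'' means one can designate one endpoint in $S_i$ of one such edge as its own branch set, using the connectivity of $S_i$ to reroute), and that merging a degree-two subdivision vertex's branch set back in recreates the two parallel edges. Once that correspondence is stated, the corollary follows from Theorem~\ref{thm:EP} with essentially no further work.
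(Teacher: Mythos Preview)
Your reduction to a single simple graph $H_t$ has a genuine gap: the claimed equivalence between $N_t$-models and $H_t$-models fails in the direction ``$N_t$-model $\Rightarrow$ $H_t$-model''. A concrete counterexample for $t=2$ is $G=K_4$. Taking $S_1=\{a\}$, $S_2=\{b,c\}$, $S_3=\{d\}$ gives an $N_2$-model (two edges from $S_1$ to $S_2$, namely $ab,ac$, and two from $S_2$ to $S_3$, namely $bd,cd$). But your $H_2$ (two triangles sharing a vertex) has five vertices, so it cannot be a minor of the four-vertex graph $K_4$. Hence $\nu_{\{N_2\}}(K_4)\ge 1$ while $\nu_{\{H_2\}}(K_4)=0$, and your asserted equalities $\nu_{\{N_t\}}=\nu_{\{H_t\}}$ and $\tau_{\{N_t\}}=\tau_{\{H_t\}}$ are false.

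The failure of your splitting argument is visible here: to manufacture the branch set for $w_1$ you must split a vertex off $S_2$ (since $|S_1|=1$), and likewise for $w_2$ you must again split from $S_2$ (since $|S_3|=1$); but $S_2$ has only two vertices and must still supply a nonempty branch set for $v_2$. More generally, a single branch set $S_i$ may be forced to supply subdivision vertices for \emph{both} adjacent double-edges, and need not have enough vertices to do so while remaining nonempty and connected. The inequality that \emph{does} hold, $\tau_{\{H_t\}}\le\tau_{\{N_t\}}$, points the wrong way for your purpose.

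This is exactly why the paper does not reduce to a single simple graph. Instead it takes the finite family $\mathcal{F}_t$ of all minor-minimal simple graphs containing $N_t$ as a minor (for $t=2$ this family contains, e.g., $K_4$ minus an edge as well as your $H_2$, and these are incomparable under the minor relation). Then a simple graph has an $N_t$-model iff it has an $\mathcal{F}_t$-model, each $F\in\mathcal{F}_t$ is connected and planar, and one applies Theorem~\ref{thm:EP} to every $F\in\mathcal{F}_t$ separately and sums the resulting hitting sets. Your proof can be repaired along the same lines, but not with a single $H_t$.
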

\begin{proof}
For $t\geq 1$, let ${\mathcal F}_t$ be the set containing all minor-minimal {\sl simple} graphs that contain the necklace $N_t$ as a minor. By definition, a simple graph $G$ contains an $N_t$-model if and only if it contains an ${\mathcal F}_t$-model. Clearly, all the graphs in  ${\mathcal F}_t$ are connected and planar, and it is easy to see that $|{\mathcal F}_t|$ is bounded by a function of $t$. For each~$F \in \mathcal{F}_t$, by Theorem~\ref{thm:EP}  there is a function~$f_F$ such that if~$G$ does not contain~$k$ vertex-disjoint models of~$F$, then all the $F$-models of $G$ can be hit by at most~$f_F(k)$ vertices. This implies that if $G$ does not contain $k$ models of any graph in~$\mathcal{F}_t$, then the union of all hitting sets has size bounded by~$\sum _{F \in \mathcal{F}_t} f_F(k)$, and since~$\mathcal{F}_t$ is finite this is a valid bounding function for $N_t$-models.
\end{proof}


\noindent We denote by $f_{N_t}$ the  bounding function for $N_t$-models given by Corollary~\ref{cor:ep}. In a connected bridgeless graph, each pair of maximum-length necklace models intersect at a vertex:

\begin{lemma}\label{lemma:pack}
  If $G$ is a connected bridgeless simple graph with $\nm(G) = t$, then $\nu_{N_t}(G) = 1$.
\end{lemma}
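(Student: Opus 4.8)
The plan is to argue by contradiction, following the blueprint sketched before Lemma~\ref{lemma:pack}: if a connected bridgeless graph~$G$ with $\nm(G) = t$ contained two vertex-disjoint $N_t$-models, then one could combine them using a path in~$G$ to build a necklace minor of length strictly greater than~$t$, contradicting $\nm(G) = t$. Concretely, suppose $M_1$ and $M_2$ are two disjoint $N_t$-models, say with branch sets $S^1_1, \dots, S^1_{t+1}$ and $S^2_1, \dots, S^2_{t+1}$ respectively, where consecutive branch sets are joined by at least two edges. Since $G$ is connected, there is a path $P$ from $V(M_1)$ to $V(M_2)$ whose internal vertices avoid both $V(M_1)$ and $V(M_2)$; say $P$ goes from $S^1_a$ to $S^2_b$. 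I would then try to reroute the models so that $P$ attaches at an \emph{endpoint} branch set of each necklace (i.e., to $S^1_1$ or $S^1_{t+1}$, and similarly for $M_2$): the reason we can do this is that each $S^i_j$ is connected, so we can ``slide'' parts of the model along the necklace, but the cleanest route is to observe that a necklace $N_t$ contains as a minor a necklace $N_{t}$ rooted anywhere, and truncating from the attachment point gives at least $\lceil t/2 \rceil$ ``beads'' usable on each side in the worst case — which is \emph{not} enough. So the naive combination only yields length about $t$, not $t+1$; this is the main obstacle, and it is exactly where bridgelessness must enter.

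The key new idea needed to beat length~$t$ is the following. Pick an $N_t$-model $M_1$ with branch sets $S^1_1, \dots, S^1_{t+1}$. Contract each $S^1_j$ to a single vertex $u_j$ and contract the rest of $G$ appropriately; since $G$ is bridgeless, every edge of the cycle-like structure obtained lies on a cycle, and in particular each ``bead'' $u_j$ has the property that the two parallel edges $u_j u_{j+1}$ together with the graph cannot be separated by a single vertex in a way that kills all cycles through it. More usefully: in a bridgeless graph, for the endpoint $S^1_1$ of the necklace, the double-edge to $S^1_2$ is part of a cycle, so there is a path from $S^1_1$ back into the model or into the rest of the graph that is edge-disjoint from that double edge. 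I would use this to show that the attachment point of the connecting path~$P$ can effectively be pushed to an \emph{extreme bead}: formally, if $P$ meets $M_1$ inside $S^1_a$ with $1 < a < t+1$, then using a cycle through the bridge-free double-edges one can reroute so that the first $a-1$ beads $S^1_1, \dots, S^1_{a-1}$ are absorbed, extending the chain past the attachment rather than wasting them. Doing this on both sides, the two full necklaces $N_t$ get concatenated via $P$ (possibly with the loss of at most the beads strictly between the attachment point and one endpoint on each side, but those are \emph{reclaimed} by the cycle-rerouting argument), yielding a necklace of length at least $t + 1 + t > t$, the desired contradiction.

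The cleanest packaging may actually avoid explicit rerouting: instead, observe that $\nu_{N_t}(G) \geq 2$ together with connectivity gives an $N_t$-model $M_1$, an $N_t$-model $M_2$, and an internally-disjoint $M_1$–$M_2$ path $P$; now contract everything outside $V(M_1) \cup V(M_2) \cup V(P)$ within each connected piece, and contract $P$ to a single edge $e$. The resulting multigraph~$H$ is bridgeless (it is a minor of~$G$, and we can choose the contractions so that bridgelessness is preserved, since $G$ being bridgeless means every edge of~$H$ lies on a cycle of~$G$ and hence—after contraction—on a closed walk of~$H$). In $H$, the edge $e$ joins bead $u_a$ of the first necklace to bead $w_b$ of the second. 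Because $H$ is bridgeless, $e$ lies on a cycle $Z$ of $H$; $Z$ must leave the first necklace through one of its two ends and re-enter the second through one of its two ends (a simple parity/separator argument, using that deleting any single bead leaves the necklace structure intact). Tracing $Z$ together with the two necklace paths produces a closed walk from which one extracts a necklace minor of length $\geq t+1$: quantitatively, going ``around'' via $Z$ uses up some beads, but we get to keep, on each necklace, the longer of the two arcs from the attachment bead to the two endpoints, and the two arcs of $Z$ supply the missing double-edges — a short counting argument then gives total length $> t$. The main obstacle, and the part I would spend the most care on, is exactly this counting: making sure the double-edges (not just single edges) survive all the contractions and the cycle-tracing, so that what we extract really is an $N_{t+1}$-model and not merely a long path. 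I expect this to go through because each ``bead-to-bead'' connection in the necklaces starts with multiplicity two, and bridgelessness guarantees at least one more edge-disjoint route for any segment we need to ``close up'', preserving multiplicity two along the whole extracted chain.
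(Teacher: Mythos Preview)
Your overall strategy matches the paper's --- assume two disjoint $N_t$-models $M^1, M^2$, connect them, and build an $N_{t'}$-model with $t' > t$ --- and you correctly isolate the obstacle: a single connecting path that hits the \emph{middle} bead of each necklace only yields a necklace of length about $t$, so bridgelessness must supply something extra. But your resolution of this obstacle is a genuine gap. In your ``cleanest packaging'', the contracted graph $H$ is not just the two necklaces joined by the edge $e$: since you only contract (never delete), $H$ also contains every contracted outside-component together with all its incident edges, so the cycle $Z$ through $e$ can cross between the two necklaces at arbitrary beads via those extra edges. Your assertion that $Z$ ``must leave the first necklace through one of its two ends'' is therefore unfounded, and the final counting (``the two arcs of $Z$ supply the missing double-edges'') is never actually carried out.

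The paper's proof supplies the missing ingredient directly. Since $G$ is connected and bridgeless it is $2$-edge-connected, so by Menger's theorem there exist \emph{two edge-disjoint paths} $P^1, P^2$ from a vertex of $M^1$ to a vertex of $M^2$. Let $Q^\ell$ be the subpath of $P^\ell$ from its last vertex in $M^1$ (lying in bead $S^1_{a_\ell}$) to its first vertex in $M^2$ (in bead $S^2_{b_\ell}$). If $t$ is odd, already $Q^1$ alone suffices: merging $S^1_{a_1} \cup V(Q^1) \cup S^2_{b_1}$ into one branch set and keeping the longer half of each necklace on either side gives at least $(t+1)/2 + 1 + (t+1)/2 = t+2$ branch sets. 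If $t$ is even, the only bad case is $a_1 = b_1 = a_2 = b_2 = t/2 + 1$; but then $Q^1$ and $Q^2$ are two edge-disjoint paths between the \emph{same} pair of middle beads, and their two distinct last edges furnish the required double edge between the merged branch set containing $S^1_{t/2+1}$ and the bead $S^2_{t/2+1}$, again yielding an $N_{t+1}$-model. The explicit use of two edge-disjoint paths and the parity case split are exactly what your proposal is missing.
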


\begin{proof}
  Suppose for a contradiction that $G$ contains two disjoint models $M^1$ and $M^2$ of $N_t$.
  For $i \in [t+1]$ and $\ell \in [2]$, let $S^{\ell}_i$ be the vertex set of $M^\ell$ given by Observation~\ref{obs:neck}. Note that these $2t+2$ subsets of vertices of $G$ are pairwise disjoint,
  and that for any $i \in [t]$, there are at least two edges between $S^{\ell}_i$ and $S^{\ell}_{i+1}$.
  Since $G$ is bridgeless and connected, it is $2$-edge-connected and by Menger's theorem~\cite[\S~3.3]{Diestel16}~$G$ contains two edge-disjoint paths between any pair of vertices.
	Pick two arbitrary vertices~$x_1 \in M^1, x_2 \in M^2$, and let~$P^1, P^2$ be two edge-disjoint paths between them.
  Consider the subpath $Q^\ell$ of $P^\ell$ between the last vertex of $M^1$ that is visited, until the first vertex of $M^2$.
  Let $Q^\ell=(v^\ell_1,\dots,v^\ell_{q_\ell})$ where $v^\ell_1 \in M^1$ and $v^\ell_{q_\ell} \in M^2$.
  Let $a_\ell$ such that $v^\ell_1 \in S^{1}_{a_\ell}$ and $b_{\ell}$ such that $v^\ell_{q_\ell} \in S^{2}_{b_\ell}$.

  Let us first show that if $t$ is odd, then we can use $Q^1$ to find an $N_{t'}$-model $M'$ for some $t' > t$ by ``gluing'' $M^1$ and $M^2$, leading to a contradiction.
  Let $S = S^{1}_{a_1} \cup V(Q^1) \cup S^{2}_{b_1}$. If $a_1 > \frac{t+1}{2}$ define $A = \{S^{1}_1,\dots,S^{1}_{a_1-1}\}$, and otherwise define $A = \{S^{1}_{a_1+1},\dots,S^{1}_{t+1}\}$.
  Similarly, if $b_1 > \frac{t+1}{2}$ define $B = \{S^{2}_1,\dots,S^{2}_{b_1-1}\}$, and otherwise define $B = \{S^{2}_{b_1+1},\dots,S^{2}_{t+1}\}$. Note that the sets in $A,S,B$ are pairwise disjoint. Since $t$ is odd, it can be easily checked that $M' = A \cup \{S\} \cup B$ is an $N_{t'}$-model in $G$ for some $t' > t$;  see
Figure~\ref{fig:EP}(a) for an illustration.

\begin{figure}[h!tbp]
\begin{center}
\includegraphics[width=0.88\textwidth]{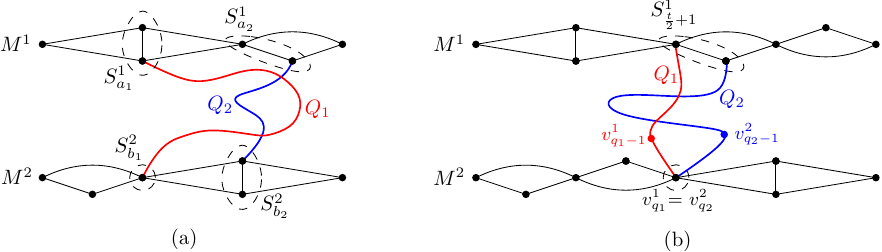}
\end{center}\vspace{-.2cm}
\caption{(a) Example with $t=3$ and $a_1=b_1=2$. (b) Example with $t=4$.}
\label{fig:EP}
\end{figure}

  Let us now consider the case where $t$ is even. Note first that if there exists $\ell \in [2]$ such that $a_\ell \neq \frac{t}{2}+1$ or $b_\ell \neq \frac{t}{2}+1$,
  then we can use $Q^\ell$ to find an $N_{t'}$-model for some $t' > t$ as in the previous case. Hence, it only remains to consider the case where $a_1=b_1=a_2=b_2=\frac{t}{2}+1$, meaning that $Q^1$ and $Q^2$ are two edge-disjoint paths, both between
  $S^{1}_{\frac{t}{2}+1}$ and $S^{2}_{\frac{t}{2}+1}$.
  Let $A = \{S^{1}_1,\dots,S^{1}_{\frac{t}{2}}\}$, $B= \{S^{2}_1,\dots,S^{2}_{\frac{t}{2}}\}$, and $S = S^{1}_{a_1} \cup (V(Q^1) \setminus \{v^1_{q_1}\}) \cup (V(Q^2) \setminus \{v^2_{q_2}\})$.
  We claim that $M' = A \cup \{S,S^{2}_{\frac{t}{2}+1}\} \cup B$ is an $N_{t+1}$-model. Indeed, note in particular there are two edges between
  $S$ and $S^{2}_{\frac{t}{2}+1}$ as we cannot have $v^1_{q_1-1} = v^2_{q_2-1}$ and $v^1_{q_1} = v^2_{q_2}$ because $Q^1$ and $Q^2$ are edge-disjoint and $G$ is a simple graph;  see
Figure~\ref{fig:EP}(b) for an illustration.
\end{proof}

By combining  Corollary~\ref{cor:ep} with Lemma~\ref{lemma:pack} we easily get the following corollary.

\begin{corollary}\label{cor:pack}
Let $G$ be a connected bridgeless graph and  $t = \nm(G)$.
    Then $G$ contains a set of vertices $X$ with $|X| \le f_{N_t}(1)$ such that $\nm(G \setminus X) < \nm(G)$, where $f_{N_t}: \mathbb{N} \to \mathbb{N}$ is the bounding function given by Corollary~\ref{cor:ep}.
\end{corollary}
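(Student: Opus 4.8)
The plan is to directly combine the Erd\H{o}s--P\'osa property for $N_t$-models (Corollary~\ref{cor:ep}) with the packing statement of Lemma~\ref{lemma:pack}. First I would observe that since $G$ is connected, bridgeless, and $t = \nm(G)$, Lemma~\ref{lemma:pack} gives $\nu_{N_t}(G) = 1$; in particular $G$ contains no two disjoint $N_t$-models. Then I would invoke Corollary~\ref{cor:ep}, which provides a bounding function $f_{N_t} \colon \mathbb{N} \to \mathbb{N}$ such that $\tau_{N_t}(G) \le f_{N_t}(\nu_{N_t}(G))$. Substituting $\nu_{N_t}(G) = 1$ yields a vertex set $X \subseteq V(G)$ with $|X| = \tau_{N_t}(G) \le f_{N_t}(1)$ such that $G \setminus X$ has no $N_t$-model.

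The remaining point is to translate ``$G \setminus X$ has no $N_t$-model'' into ``$\nm(G \setminus X) < \nm(G)$''. By definition, $\nm(G \setminus X)$ is the largest $t'$ such that $G \setminus X$ contains $N_{t'}$ as a minor. Since $G \setminus X$ contains no $N_t$-model, and since (trivially from the definition of $N_t$ as a path of double edges) containing $N_{t'}$ as a minor for $t' \ge t$ implies containing $N_t$ as a minor, we get $\nm(G \setminus X) \le t - 1 < t = \nm(G)$. This completes the argument.

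There is essentially no hard part here: the statement is a one-line consequence of the two results that precede it, so the proof is just a matter of plugging in $\nu_{N_t}(G) = 1$ and unwinding the definition of $\nm$. The only thing to be mildly careful about is the degenerate case $t = \nm(G) = 0$ (when $G$ has no cycle at all), but then $X = \emptyset$ works vacuously and the inequality $\nm(G \setminus X) < \nm(G)$ would be false --- so implicitly the corollary is meant for $t \ge 1$, matching the hypothesis $t \ge 1$ in Corollary~\ref{cor:ep}; I would state the result for $t = \nm(G) \ge 1$ (equivalently, assume $G$ contains a cycle) to be safe.
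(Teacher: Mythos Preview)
Your proposal is correct and follows exactly the paper's approach: apply Lemma~\ref{lemma:pack} to get $\nu_{N_t}(G)=1$, then invoke Corollary~\ref{cor:ep} to obtain a hitting set $X$ of size at most $f_{N_t}(1)$, and conclude $\nm(G\setminus X)<t$. Your remark about the degenerate case $t=0$ is a valid observation (the paper handles this separately in Theorem~\ref{thm:bd}, and Corollary~\ref{cor:ep} is only stated for $t\ge 1$), but otherwise the arguments are the same.
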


\begin{proof}
  By Lemma~\ref{lemma:pack}, it follows that $\nu_{N_t}(G) = 1$, and therefore by Corollary~\ref{cor:ep}
  $$
  \tau_{N_t}(G) \ \leq\  f_{N_t}(\nu_{N_t}(G))\ = \ f_{N_t}(1).
  $$
Thus, there exists a set $X \subseteq V(G)$ with $|X| \le f_{N_t}(1)$ such that
 $G \setminus X$ has no $N_t$-model, implying that $\nm(G\setminus X) < t$.
\end{proof}

We are finally in position to prove the following theorem.

\begin{theorem}\label{thm:bd}
There is a function $f \colon \mathbb{N} \to \mathbb{N}$ such that $\bd(G) \le f(\nm(G))$ for all graphs~$G$.
\end{theorem}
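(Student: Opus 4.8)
The plan is to prove Theorem~\ref{thm:bd} by induction on~$\nm(G)$, mirroring the alternative tree-depth proof sketched above, but using Corollary~\ref{cor:pack} in place of the naive ``delete a longest path'' step. Since~$\bd$ of a disconnected graph is the maximum over its connected components, and~$\nm$ likewise, it suffices to bound~$\bd(G)$ for connected~$G$. The base case is~$\nm(G)=0$: a graph with no necklace minor has no two edge-disjoint paths between any pair of vertices lying in a cycle, so in fact a connected graph with~$\nm(G) = 0$ contains no cycle through two ``independent'' routes — more precisely, $\nm(G) \ge 1$ as soon as~$G$ contains a cycle (contract the cycle minus a vertex to get~$N_1$), so~$\nm(G)=0$ forces~$G$ to be a forest, whence~$\bd(G) \le 1$ by Item~1 of Proposition~\ref{proposition:bd}. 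Set~$f(0) := 1$.

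For the induction step, suppose~$\nm(G) = t \ge 1$ and~$G$ is connected. Here is where the special treatment of bridges in the definition of bridge-depth enters: by Item~\ref{contractionsstable} of Proposition~\ref{proposition:bd} we have~$\bd(G) = \bd(\bar{G})$, and contracting bridges does not change the necklace-minor length (contracting an edge that is a bridge cannot destroy any necklace model, since a necklace is bridgeless and therefore its model sits entirely within the bridgeless part; conversely contraction only adds minors, so~$\nm(\bar G) = \nm(G) = t$ — this small claim I would spell out, observing that any~$N_t$-model in~$\bar G$ lifts to one in~$G$ and vice versa since bridges of~$G$ correspond bijectively to the trees-of-bridges contracted). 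So we may assume~$G$ is connected and bridgeless, i.e.~$\bar G = G$. Now apply Corollary~\ref{cor:pack}: there is a vertex set~$X$ with~$|X| \le f_{N_t}(1)$ such that~$\nm(G \setminus X) < t$, hence~$\nm(G \setminus X) \le t-1$. By Item~\ref{removeX} of Proposition~\ref{proposition:bd}, $\bd(G) \le |X| + \bd(G \setminus X) \le f_{N_t}(1) + \bd(G \setminus X)$. Each connected component~$C$ of~$G \setminus X$ has~$\nm(C) \le t-1$, so by the induction hypothesis~$\bd(C) \le f(t-1)$, and therefore~$\bd(G \setminus X) \le f(t-1)$. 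Combining, $\bd(G) \le f_{N_t}(1) + f(t-1)$, so we may set~$f(t) := f_{N_t}(1) + f(t-1)$, which unrolls to~$f(t) = 1 + \sum_{i=1}^{t} f_{N_i}(1)$.

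The main obstacle is making precise that reducing to the bridgeless case is legitimate, i.e.\ that~$\nm$ and~$\bd$ both pass to~$\bar G$ without loss. For~$\bd$ this is exactly Item~\ref{contractionsstable}. For~$\nm$, the subtlety is that although~$\nm(G) \ge \nm(\bar G)$ is immediate from minor-monotonicity (Observation~\ref{obs:neck} together with the fact that~$\bar G$ is a minor of~$G$), the reverse inequality~$\nm(\bar G) \ge \nm(G)$ needs the observation that any~$N_t$-model's branch sets and connecting edge-pairs avoid all bridges of~$G$ — a necklace~$N_t$ is $2$-edge-connected with no bridges, so in a model each pair of ``parallel'' connecting edges between consecutive branch sets, together with the connectivity inside branch sets, confines the whole model to a union of biconnected components of~$G$ glued at cut vertices, and in particular uses no bridge edge; contracting a bridge therefore maps such a model to a valid~$N_t$-model of~$\bar G$. (Alternatively: if~$e$ is a bridge and we contract it to obtain~$G/e$, then~$G$ is obtained from~$G/e$ by splitting a vertex, and one checks an~$N_t$-model surviving into~$G$ already lived in~$G/e$.) Beyond this, everything is routine bookkeeping. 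I would also remark that Corollary~\ref{cor:pack} is only stated for bridgeless graphs, which is precisely why the reduction to~$\bar G$ is needed before invoking it, and that the resulting bound~$f(t) = 1 + \sum_{i=1}^t f_{N_i}(1)$ is the bridge-depth analog of the~$\sum_{i=1}^t i$ bound for tree-depth, with the linear terms~$i$ replaced by the (much larger) Erd\H{o}s--P\'osa bounding values~$f_{N_i}(1)$.
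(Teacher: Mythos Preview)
Your proposal is correct and follows essentially the same approach as the paper: induction on~$\nm(G)$, passing to~$\bar G$ via Item~\ref{contractionsstable}, invoking Corollary~\ref{cor:pack} on the bridgeless graph, and using Item~\ref{removeX} to absorb the hitting set, arriving at the same function~$f(t) = 1 + \sum_{i=1}^{t} f_{N_i}(1)$.

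The one difference is your treatment of the equality~$\nm(\bar G) = \nm(G)$, which you identify as the ``main obstacle'' and argue carefully. The paper simply applies Corollary~\ref{cor:pack} to~$\bar G$ with the parameter~$t = \nm(G)$ without comment. This is in fact harmless without proving the equality: since~$\bar G$ is a minor of~$G$ we have~$\nm(\bar G) \le t$, so~$\bar G$ certainly cannot pack two disjoint~$N_t$-models (there is at most one, or none), whence~$\nu_{N_t}(\bar G) \le 1$ and the Erd\H{o}s--P\'osa bound~$\tau_{N_t}(\bar G) \le f_{N_t}(1)$ applies directly. Your route of establishing~$\nm(\bar G) \ge \nm(G)$ is also valid; your sketch can be made precise by noting that if a bridge~$e=uv$ had its endpoints in distinct branch sets~$S_i, S_j$ of an~$N_t$-model, then either~$|i-j|=1$ (and the two required~$S_i$--$S_j$ edges put~$e$ on a cycle) or~$|i-j|\ge 2$ (and the chain of consecutive branch sets, each pair joined by two edges on the same side of the bridge, forces~$S_i$ and~$S_j$ to the same side), a contradiction either way.
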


\begin{proof}
  We prove the statement by induction on~$\nm(G)$, for the function~$f$ defined by~$f(t) := 1 + \sum_{i=1}^t f_{N_i}(1)$.
  If $\nm(G)=0$, then $G$ is a forest, and by definition of bridge-depth we get $\bd(G) = 1 = f(0)$. Suppose now that $\nm(G)=t$ with $t > 0$.

  Consider the case that~$G$ is connected. Then $\bar{G}$ is also connected and has no bridge, and thus we can apply Corollary~\ref{cor:pack} and get a set $X \subseteq V(\bar{G})$ with $|X| \le f_{N_t}(1)$ such that $\nm(G') < t$, where $G'=\bar{G}\setminus X$.
  By Item~\ref{removeX} of Proposition~\ref{proposition:bd}, we get that $\bd(G) = \bd(\bar{G}) \le |X| + \bd(G')$.
  Let $G'_1, \ldots, G'_{\ell}$ be the connected components of $G'$. As $\nm(G') < t$, we get that $\nm(G'_i) < t$ for every $i \in [\ell]$.
  Then, by induction hypothesis it follows that, for every $i \in [\ell]$ , $\bd(G'_i) \le f(t-1)= 1 + \sum_{i=1}^{t-1}f_{N_i}(1)$.
  Thus, as $\bd(G')=\max_{i \in [\ell]} \bd(G'_i) \leq 1 + \sum_{i=1}^{t-1}f_{N_i}(1)$, we get that
  $$
  \bd(G)\ \le \ |X| + \bd(G') \ \leq \  f_{N_t}(1) + 1 + \sum_{i=1}^{t-1}f_{N_i}(1) \ = \ 1 + \sum_{i=1}^{t}f_{N_i}(1) \ = \ f(\nm(G)).
  $$

  Finally, if $G$ is disconnected, let $G_1, \ldots,G_{\ell}$ be its connected components, and note that $\bd(G)=\max_{i \in [\ell]} \bd(G_i)$. Since for every $i \in [\ell]$ it holds that $\nm(G_i)\leq \nm(G)$, and since the function $f$ is non-decreasing, by applying the above case to each connected component of $G$ we get that
  $$
  \bd(G)\ =\ \max_{i \in [\ell]} \bd(G_i) \ \leq \ \max_{i \in [\ell]}f(\nm(G_i))\ \leq \ \max_{i \in [\ell]}f(\nm(G)) \ = \ f(\nm(G)).\qedhere
  $$
\end{proof}

Now that we established a relation between bridge-depth and necklace minors, our next step is to relate necklace minors to blocking sets.
For this purpose, we use the known triangle-path gadget.

\begin{definition}\label{def:triangle-path}
A \emph{triangle-path of length $t$} is the graph consisting of $t$ vertex-disjoint triangles, with vertex sets $\{\{a_i,b_i,c_i\} \mid i \in [t]\}$, together with the $t-1$ edges $\{ \{b_i,a_{i+1}\} \mid i \in [t-1]\}$.
The \emph{triangle-path-minor length} of a graph $G$, denoted by $\tpm(G)$, is the largest length of a triangle-path contained in $G$ as a minor, or zero if no such minor exists.
\end{definition}

A slight variation of this gadget was used by Fomin and Str\o mme~\cite[Def.~6]{FominS16}. We observe the following (cf.~\cite[Obs.~3--5]{FominS16}).

\begin{observation}\label{obs:mbs-tp}
Let $G$ be a triangle-path of length $t \ge 2$. Then $\mbs(G) \ge t+2$, as $\{a_1,c_1\} \cup \{b_t,c_t\} \cup \{c_i \mid i \in [2,t-1]\}$ is a minimal blocking set.
\end{observation}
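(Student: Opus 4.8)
The plan is to exhibit an explicit maximum independent set structure in a triangle-path $G$ of length $t \ge 2$ and then verify that the proposed vertex set $Y := \{a_1,c_1\} \cup \{b_t,c_t\} \cup \{c_i \mid i \in [2,t-1]\}$ is indeed a blocking set, and moreover an inclusion-wise minimal one. First I would compute $\alpha(G)$: in each triangle $\{a_i,b_i,c_i\}$ an independent set can pick at most one vertex, so $\alpha(G) \le t$; and the set $\{c_i \mid i \in [t]\}$ is independent of size $t$ (the $c_i$ are never touched by the connecting edges $\{b_i,a_{i+1}\}$), so $\alpha(G) = t$. The key observation is a ``propagation'' phenomenon: any independent set of size $t$ must pick exactly one vertex per triangle, and if it picks $b_i$ in triangle $i$ then it cannot pick $a_{i+1}$ in triangle $i+1$, hence must pick $b_{i+1}$ or $c_{i+1}$ there; conversely if it picks $a_i$ or $c_i$ there is more freedom. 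I would formalize the claim that avoiding all of $Y$ forces a size-$(t{-}1)$ bound.

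Next I would show $Y$ is a blocking set, i.e.\ $\alpha(G \setminus Y) < t$. After removing $Y$, triangle $1$ retains only $b_1$, each middle triangle $i \in [2,t-1]$ retains $\{a_i,b_i\}$, and triangle $t$ retains only $a_t$. A maximum independent set $S$ of $G \setminus Y$ with $|S| = t$ would have to take exactly one vertex from each triangle, so $b_1 \in S$ and $a_t \in S$. Since $b_1 \in S$ and $\{b_1,a_2\} \in E(G)$, we get $a_2 \notin S$, so $b_2 \in S$; inductively $b_i \in S$ for all $i \in [1,t-1]$, in particular $b_{t-1} \in S$. But $\{b_{t-1},a_t\} \in E(G)$ and $a_t \in S$ — contradiction. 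Hence $\alpha(G \setminus Y) \le t-1 < t = \alpha(G)$, so $Y$ is a blocking set of size $|Y| = 2 + 2 + (t-2) = t+2$.

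For minimality, I would check that removing any single vertex from $Y$ destroys the blocking property, by producing a size-$t$ independent set in $G \setminus (Y \setminus \{y\})$ for each $y \in Y$. For $y = c_i$ with $i \in [2,t-1]$: take $\{b_1,\dots,b_{i-1}\} \cup \{c_i\} \cup \{a_{i+1},\dots,a_t\}$, which is independent (the only possible conflict, $\{b_{i-1},a_i\}$, does not arise as $a_i \notin$ set, and $\{b_{i-1}, a_i\}$ is fine since we jump from $b_{i-1}$ to $c_i$ to $a_{i+1}$) and has size $t$. For $y = a_1$ (resp.\ $c_1$): take $\{a_1\}$ (resp.\ $\{c_1\}$) together with $\{b_2,\dots\}$-style chains reaching into the right end; symmetrically for $y = b_t$ or $c_t$. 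Each case is a short, routine verification of independence and cardinality $t$.

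I do not expect a serious obstacle here — this is a gadget verification of the kind done in \cite{FominS16}. The only mildly delicate point is getting the endpoint cases ($i=1$ and $i=t$) of the minimality check exactly right, since the first and last triangles behave asymmetrically from the middle ones (each contributes two vertices to $Y$ rather than one); one must be careful that the witness independent sets one writes down really are independent across the $a_1$–$b_1$ and $b_{t-1}$–$a_t$ junctions. Everything else is the ``propagation forces $b_i$'' argument, which I would state once as a lemma-like sentence and reuse.
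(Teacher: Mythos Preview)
Your proposal is correct and follows the natural approach. The paper itself does not prove this observation, merely citing \cite[Obs.~3--5]{FominS16}; the analogous argument the paper \emph{does} spell out (for the truncated triangle-path in Theorem~\ref{thm:tightness}) proceeds exactly as you do: show $\alpha(G)=t$ via the triangle partition, observe that $G\setminus Y$ is a path on $2(t-1)$ vertices so $\alpha(G\setminus Y)=t-1$, and for minimality exhibit, for each $y\in Y$, a size-$t$ independent set consisting of $y$ together with the vertex of each remaining triangle closest to $y$---which is precisely your $\{b_1,\dots,b_{i-1}\}\cup\{c_i\}\cup\{a_{i+1},\dots,a_t\}$ construction and its endpoint variants.

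One small suggestion: your endpoint witnesses can be stated more crisply as $\{a_1,\dots,a_t\}$ for $y=a_1$, $\{c_1,a_2,\dots,a_t\}$ for $y=c_1$, $\{b_1,\dots,b_t\}$ for $y=b_t$, and $\{b_1,\dots,b_{t-1},c_t\}$ for $y=c_t$; this makes the ``routine verification'' you allude to entirely explicit and avoids any worry about the junction edges.
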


\begin{lemma}\label{lemma:tpm-nm}
For any graph $G$, $\tpm(G) \ge \lfloor \frac{\nm(G)+1}{2} \rfloor$.
\end{lemma}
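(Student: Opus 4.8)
The plan is to show that a necklace minor of length~$t$ in~$G$ can be ``converted'' into a triangle-path minor of length roughly~$t/2$ by a local modification of the branch sets. Recall that by Observation~\ref{obs:neck}, an~$N_t$-model in~$G$ consists of connected vertex sets~$S_1, \ldots, S_{t+1}$ with at least two edges between consecutive sets~$S_i, S_{i+1}$. The key observation is that a single ``double link'' between two connected sets~$S_i$ and~$S_{i+1}$, together with the connectivity inside, say,~$S_{i+1}$, already encodes a triangle: contracting~$S_i$ to a single vertex~$u$ and~$S_{i+1}$ to a single vertex~$w$ leaves a double edge~$uw$, and subdividing one of the two parallel edges with a new vertex of the branch set creates a triangle on~$\{u, w, \text{mid}\}$. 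More carefully, the double edge between~$S_i$ and~$S_{i+1}$ uses two distinct edges~$\{x_i, y_{i+1}\}$ and~$\{x_i', y_{i+1}'\}$ of~$G$ (where at least one endpoint differs on each side since~$G$ is simple); one can split the branch set~$S_{i+1}$ internally so that the triangle's three ``corners'' are realized.

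First I would set~$s := \lfloor (\nm(G)+1)/2 \rfloor$ and, given an~$N_t$-model with~$t = \nm(G)$, group the~$t+1$ consecutive branch sets into~$s$ blocks, where each block is formed by two consecutive necklace links~$S_{2j-1}, S_{2j}, S_{2j+1}$ sharing the double edges; this is where the factor~$2$ and the floor come from, since~$t$ double edges suffice to build~$\lfloor (t+1)/2\rfloor$ triangles with the required connecting edges. Within each block, I would contract the interior and designate three branch sets~$\{a_j, b_j, c_j\}$ of the triangle-path so that: (i) the two parallel edges from one necklace link give the triangle edges, using internal connectivity of the contracted blocks; and (ii) the remaining parallel edges of the adjacent link provide the connecting edge~$\{b_j, a_{j+1}\}$ between consecutive triangles. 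Thus the~$N_t$-model is refined into an~$N_t$-model whose branch sets can be regrouped and further contracted into a triangle-path model of length~$\lfloor (t+1)/2 \rfloor$, which is a minor of~$G$ since all operations (contraction, deletion) preserve the minor relation.

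The step I expect to require the most care is verifying the adjacencies in the triangle-path model precisely, in particular making sure that when~$G$ is a simple graph the ``doubled'' edges between~$S_i$ and~$S_{i+1}$ genuinely give rise to a triangle rather than collapsing into a single edge or a multi-edge that cannot be simplified — this is exactly the point where simplicity of~$G$ matters, analogously to the last line of the proof of Lemma~\ref{lemma:pack}. I would argue that since the two edges between~$S_i$ and~$S_{i+1}$ are distinct edges of the simple graph~$G$, they have at least three distinct endpoints in total among~$S_i \cup S_{i+1}$, so after contracting~$S_i$ appropriately one retains a vertex that becomes the third corner of the triangle, with both required edges present. Handling the boundary blocks (the first and last triangle, which in Definition~\ref{def:triangle-path} and Observation~\ref{obs:mbs-tp} are treated slightly asymmetrically) and the parity of~$t$ is routine bookkeeping: when~$t$ is even one obtains~$t/2$ full blocks plus a leftover link that is simply discarded, matching~$\lfloor (t+1)/2 \rfloor = t/2$, and when~$t$ is odd one obtains~$(t+1)/2$ blocks exactly. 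Once the triangle-path model is exhibited, the bound~$\tpm(G) \ge \lfloor (\nm(G)+1)/2 \rfloor$ follows immediately from the definition of~$\tpm$.
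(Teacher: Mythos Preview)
Your proposal is correct and follows essentially the same approach as the paper. The paper also pairs consecutive necklace branch sets~$S_{2i-1}, S_{2i}$, uses simplicity of~$G$ to conclude that the two edges between them have distinct endpoints on at least one side, splits that side into two connected parts to obtain the three branch sets of a triangle, and uses (one of) the edges between~$S_{2i}$ and~$S_{2i+1}$ as the connecting edge~$\{b_i, a_{i+1}\}$; your description ``each block is formed by two consecutive necklace links~$S_{2j-1}, S_{2j}, S_{2j+1}$'' is slightly imprecise (the blocks are really the disjoint pairs~$\{S_{2j-1}, S_{2j}\}$), but your points (i) and (ii) make clear you mean exactly this.
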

\begin{proof}
  Let $t=\nm(G)$, and let $\{S_i \mid i \in [t+1]\}$ be an $N_t$-model in $G$.
  Let $i \in [\lfloor \frac{t+1}{2} \rfloor]$ and let $e_1=\{u_1,v_1\}$ and $e_2=\{u_2,v_2\}$ be the two edges between $S_{2i-1}$ and $S_{2i}$, with $u_\ell \in S_{2i-1}$ and $v_{\ell} \in S_{2i}$.
  If $u_1 \neq u_2$ then there is a partition $A_1, A_2$ of $S_{2i-1}$ such that $u_i \in A_i$ and $A_i$ is connected for $i \in [2]$,
  and we define $L_i = \{A_1,A_2\}$, $R_i = \{S_{2i}\}$. Otherwise, if  $u_1 = u_2$, then necessarily $v_1 \neq v_2$, and we define symmetrically $L_i = \{S_{2i-1}\}$ and $R_i = \{A_1,A_2\}$.
  In both cases we get that $L_i \cup R_i$ is a model of a triangle, and moreover there is an edge between a vertex in $R_i$ and a vertex in $L_{i+1}$
  for every $i \in [\lfloor \frac{t+1}{2} \rfloor-1]$. This implies that $\bigcup_{i \in [\lfloor \frac{t+1}{2} \rfloor]} (L_i \cup R_i)$ is a model of a triangle-path of length $\lfloor \frac{t+1}{2} \rfloor$ in $G$.
\end{proof}

\begin{corollary}\label{cor:bd-triangle-path}
There is a function $g \colon \mathbb{N} \to \mathbb{N}$ such that~$\bd(G) \le g(\tpm(G))$ for all graphs~$G$.
\end{corollary}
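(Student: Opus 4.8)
The plan is to chain together the two relations already established in this section: Theorem~\ref{thm:bd}, which bounds $\bd(G)$ in terms of $\nm(G)$, and Lemma~\ref{lemma:tpm-nm}, which bounds $\nm(G)$ in terms of $\tpm(G)$. First I would observe that Lemma~\ref{lemma:tpm-nm} states $\tpm(G) \ge \lfloor \frac{\nm(G)+1}{2} \rfloor$, which rearranges to give an upper bound on $\nm(G)$ in terms of $\tpm(G)$: since $\lfloor \frac{\nm(G)+1}{2} \rfloor \le \tpm(G)$ implies $\frac{\nm(G)+1}{2} < \tpm(G)+1$, we get $\nm(G) \le 2\tpm(G)$. (One should double-check the off-by-one here carefully, but a bound of the form $\nm(G) \le 2\,\tpm(G)+1$ certainly holds and suffices.) Then, defining $g(s) := f(2s+1)$ where $f$ is the function from Theorem~\ref{thm:bd}, and using that $f$ is non-decreasing (it is a sum of non-negative terms with an increasing index range), we obtain
$$
\bd(G) \le f(\nm(G)) \le f(2\,\tpm(G)+1) = g(\tpm(G)),
$$
which is exactly the claimed statement.

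The only subtlety worth flagging is the precise constant in the inequality $\nm(G) \le c \cdot \tpm(G) + O(1)$ extracted from Lemma~\ref{lemma:tpm-nm}; this is a routine floor/ceiling manipulation and poses no real obstacle. Everything else is immediate from the monotonicity of $f$ and the composition of the two bounds. In fact, no new graph-theoretic idea is needed here at all: Corollary~\ref{cor:bd-triangle-path} is a purely bookkeeping consequence of Theorem~\ref{thm:bd} and Lemma~\ref{lemma:tpm-nm}, and its role is simply to package the section's results in a form (bridge-depth versus triangle-paths) that connects cleanly to the blocking-set bound of Observation~\ref{obs:mbs-tp}, which will be used to complete the proof that unbounded bridge-depth forces unbounded minimal blocking sets.
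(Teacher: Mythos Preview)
Your proposal is correct and follows essentially the same approach as the paper: both proofs simply rearrange Lemma~\ref{lemma:tpm-nm} to get $\nm(G) \le 2\,\tpm(G)$ and then plug this into Theorem~\ref{thm:bd}, defining $g$ as $f$ composed with the appropriate linear function (the paper uses $g(t)=f(2t)$, you use the slightly looser $g(s)=f(2s+1)$). Your explicit mention of the monotonicity of $f$ is a nice touch that the paper leaves implicit.
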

\begin{proof}
   By Lemma~\ref{lemma:tpm-nm}, we have that $\tpm(G) \geq \nm(G)/2$. By letting $g(t):=f(2t)$, where $f$ is the function given by Theorem~\ref{thm:bd}, we get the desired result.
\end{proof}


Note that in the next two results we need the assumption that $\F$ is minor-closed. The following immediate corollary of Corollary~\ref{cor:bd-triangle-path} is critically used in proof of the lower bound given in Theorem~\ref{thm:nokernel}.

\begin{corollary}\label{obs:Funbounded}
Let $\F$ be a minor-closed family of graphs. If $\F$ has unbounded bridge-depth then it contains the family $\F^{{\sf tp}}$ of all triangle-paths.
\end{corollary}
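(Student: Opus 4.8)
The plan is to derive the statement as an essentially immediate consequence of Corollary~\ref{cor:bd-triangle-path}, which provides a fixed function $g \colon \mathbb{N} \to \mathbb{N}$ (independent of $\F$, obtained from the universal function $f$ of Theorem~\ref{thm:bd} via $g(t) = f(2t)$, using Lemma~\ref{lemma:tpm-nm}) such that $\bd(G) \le g(\tpm(G))$ for every graph $G$. Reading this inequality contrapositively at the level of the family: if $\tpm$ were bounded on $\F$, say $\tpm(G) \le t_0$ for all $G \in \F$, then $\bd(G) \le g(t_0)$ for all $G \in \F$, contradicting the hypothesis that $\F$ has unbounded bridge-depth. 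Hence $\tpm$ must be unbounded on $\F$.

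Next I would invoke the minor-closedness of $\F$. Fix an arbitrary $\ell \in \mathbb{N}$; the goal is to show that the triangle-path of length $\ell$ belongs to $\F$. Since $\tpm$ is unbounded on $\F$, there is some $G \in \F$ with $\tpm(G) \ge \ell$, which by Definition~\ref{def:triangle-path} means that $G$ contains a triangle-path of length at least $\ell$ as a minor; as the triangle-path of length $\ell$ is itself a (subgraph, hence) minor of any longer triangle-path, $G$ contains the triangle-path of length $\ell$ as a minor. Because $\F$ is minor-closed, this triangle-path lies in $\F$. Since $\ell$ was arbitrary, every triangle-path belongs to $\F$, i.e., $\F^{{\sf tp}} \subseteq \F$, which is exactly the claim.

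There is essentially no genuine obstacle in this step: all the substance is already contained in Corollary~\ref{cor:bd-triangle-path} (and behind it, Theorem~\ref{thm:bd} and Lemma~\ref{lemma:tpm-nm}, relating bridge-depth, necklace-minor length, and triangle-path-minor length). The only mild point to keep explicit is that $g$ does not depend on $\F$, so that a uniform bound on $\tpm$ over $\F$ genuinely forces a uniform bound on $\bd$ over $\F$; granting this, the rest is a one-line unfolding of the definitions together with minor-closedness.
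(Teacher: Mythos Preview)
Your proposal is correct and matches the paper's intended argument: the paper states Corollary~\ref{obs:Funbounded} without proof, as an immediate consequence of Corollary~\ref{cor:bd-triangle-path} together with minor-closedness, which is precisely the route you take. Your explicit handling of the contrapositive and of the minor-closedness step spells out exactly what the paper leaves implicit.
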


Using this corollary, we can prove one direction of Theorem~\ref{thm:bridgedepth:blockingsets}.

\begin{theorem} \label{thm:largebd:largembs}
Let~$\F$ be a minor-closed family of graphs of unbounded bridge-depth. Then there are graphs in~$\F$ which have arbitrarily large minimal blocking sets.
\end{theorem}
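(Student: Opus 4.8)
The plan is straightforward given the machinery already developed in this section. The key observation is that Corollary~\ref{obs:Funbounded} tells us that a minor-closed family $\F$ of unbounded bridge-depth must contain \emph{every} triangle-path $T_t$ (the triangle-path of length $t$), for all $t \in \mathbb{N}$. So the entire burden is to exhibit large minimal blocking sets inside triangle-paths, and this is exactly what Observation~\ref{obs:mbs-tp} provides: for $t \ge 2$, the triangle-path of length $t$ satisfies $\mbs(T_t) \ge t+2$, witnessed by the explicit minimal blocking set $\{a_1,c_1\} \cup \{b_t,c_t\} \cup \{c_i \mid i \in [2,t-1]\}$.

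First I would invoke Corollary~\ref{obs:Funbounded} to conclude that $\F \supseteq \F^{{\sf tp}}$, so that $T_t \in \F$ for every $t$. Then, for any target size $m$, I would pick $t := m$ (or any $t \ge \max(m,2)$) and observe that $T_t \in \F$ has a minimal blocking set of size at least $t + 2 \ge m$ by Observation~\ref{obs:mbs-tp}. Hence $\F$ contains graphs with minimal blocking sets of size at least $m$ for every $m$, which is precisely the claim that $\F$ has arbitrarily large minimal blocking sets. This completes the proof; there is essentially no obstacle, since the two ingredients are already in hand. The only thing worth stating carefully is that the minimality of the blocking set (not just its size) is what Observation~\ref{obs:mbs-tp} asserts, so $\mbs(T_t)$ — the maximum size of an \emph{inclusion-wise minimal} blocking set — is genuinely large, which is the quantity that governs kernelization lower bounds.

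Concretely, the write-up is a two-line argument:

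\begin{proof}
Let $m \in \mathbb{N}$ be arbitrary, and set $t := \max(m,2)$. Since $\F$ has unbounded bridge-depth, Corollary~\ref{obs:Funbounded} implies that $\F$ contains the triangle-path $T$ of length $t$. By Observation~\ref{obs:mbs-tp}, the graph $T$ has a minimal blocking set of size at least $t+2 \ge m$, so $\mbs(T) \ge m$. As $m$ was arbitrary, $\F$ contains graphs with arbitrarily large minimal blocking sets.
\end{proof}

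The only subtlety one might want to flag for the reader is that $\F^{{\sf tp}}$ being contained in $\F$ relies on $\F$ being \emph{minor-closed} (this is where Corollary~\ref{obs:Funbounded}, and ultimately Corollary~\ref{cor:bd-triangle-path} via the function $g$, is used): unbounded bridge-depth forces unbounded triangle-path-minor length, and minor-closure then promotes ``contains as a minor'' to ``contains as a member.'' Everything else is immediate, so I expect no real difficulty in this particular statement — the hard work was all in Theorem~\ref{thm:bd} and Lemma~\ref{lemma:tpm-nm} leading up to it.
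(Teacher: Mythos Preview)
Your proof is correct and essentially identical to the paper's own argument: invoke Corollary~\ref{obs:Funbounded} to get that $\F$ contains all triangle-paths, then apply Observation~\ref{obs:mbs-tp} to exhibit a minimal blocking set of size $t+2$ in the triangle-path of length $t$. The paper's proof is the same two-line combination of these ingredients.
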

\begin{proof}
By Corollary~\ref{obs:Funbounded}, $\F$ contains all triangle-paths. Since a triangle-path of length~$t$ contains a minimal blocking set of size~$t+2$ by Observation~\ref{obs:mbs-tp}, the theorem follows.
\end{proof}

Theorem~\ref{thm:largebd:largembs} is phrased for graph families, rather than individual graphs. There is no function~$h$ such that~$\bd(G) \leq h(\mbs(G))$ for all~$G$: a bipartite grid graph can have arbitrarily large tree-width and therefore bridge-depth, but its minimal blocking sets have size at most two (cf. Lemma~\ref{lemma:bipartite:blockingsets}).

 \section{Bounded bridge-depth implies bounded-size blocking sets} \label{sec:blockingsets}
 In this section we prove the other direction of Theorem~\ref{thm:bridgedepth:blockingsets}: minimal blocking sets in a graph~$G$ have size at most~$2^{\bd(G)}$. We need the following consequence of K\H{o}nig's theorem.

\begin{restatable}{lemma}{lemmaXmisXwrtXmatching} \label{lemma:mis:wrt:matching}
Let~$G$ be a bipartite graph and let~$M$ be a maximum matching in~$G$. Every maximum independent set of~$G$ contains all vertices that are not saturated by~$M$, and exactly one endpoint of each edge in~$M$.
\end{restatable}
\begin{proof}
Consider a maximum independent set~$S$ in~$G$. Then~$\overline{S} := V(G) \setminus S$ is a minimum vertex cover of~$G$. By K\H{o}nig's theorem (cf.~\cite[Thm. 2.1.1]{Diestel16}) we have~$|\overline{S}| = |M|$. Since~$\overline{S}$ is a vertex cover it contains at least one endpoint of each edge of~$M$; since~$|\overline{S}| = |M|$ it contains exactly one endpoint of each edge of~$M$, and no other vertices of~$G$. So the complement~$S$ contains all vertices that are not saturated by~$M$, and exactly one endpoint of each edge in~$M$.
\end{proof}

The next lemma shows that minimal blocking sets in a bipartite graph have at most two vertices. This was known before, see~\cite[Thm.~14]{HolsK17}. Our self-contained proof highlights an additional property of such minimal blocking sets: the two vertices of minimal blocking sets of size two belong to opposite partite sets. This will be crucial later on.

\begin{restatable}{lemma}{lemmaXbipartiteXblockingsets} \label{lemma:bipartite:blockingsets}
Let~$G$ be a bipartite graph with partite sets~$A$ and~$B$. If~$Y \subseteq V(G)$ is a blocking set in~$G$, then there is a blocking set~$Y' \subseteq Y$ in~$G$ such that one of the following holds:
\begin{itemize}
	\item $|Y'| = 1$, or
	\item $Y' = \{a,b\}$ for some~$a \in A$ and~$b \in B$.
\end{itemize}
\end{restatable}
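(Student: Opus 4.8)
The plan is to reason about the matching structure of the bipartite graph~$G$. Let~$M$ be a maximum matching in~$G$; by Lemma~\ref{lemma:mis:wrt:matching}, every maximum independent set contains all unsaturated vertices, and picks exactly one endpoint from each edge of~$M$. The key observation is that a blocking set~$Y$ must ``hit'' every maximum independent set, so in particular~$Y$ cannot contain any unsaturated vertex without that single vertex already being a blocking set --- but an unsaturated vertex is in \emph{every} maximum independent set, so a single unsaturated vertex is itself a blocking set, and we are done with~$|Y'|=1$. Hence I may assume~$Y$ consists only of~$M$-saturated vertices, and moreover that no single vertex of~$Y$ is a blocking set (otherwise take~$Y'$ of size one).

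Next I would analyze which maximum independent sets exist. For each edge~$\{a_i,b_i\}\in M$ with~$a_i\in A$, $b_i\in B$, a maximum independent set chooses exactly one of~$a_i,b_i$; not every combination of choices is independent, but the constraint is captured by the standard alternating-path structure. The crucial fact I would extract: if~$v$ is an~$M$-saturated vertex that lies in some maximum independent set and whose~$M$-partner~$v'$ also lies in some maximum independent set, then choosing~$v$ versus~$v'$ is the ``free'' choice at that matched edge, modulo alternating-path propagation. Since~$Y$ is a blocking set but no single element of~$Y$ is, there must be two vertices~$u,v\in Y$ whose deletion \emph{together} destroys all maximum independent sets while deleting either one alone does not. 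I claim~$\{u,v\}$ can be taken with~$u\in A$, $v\in B$: if both were in~$A$, say, then consider a maximum independent set~$S_u$ avoiding~$u$ and a maximum independent set~$S_v$ avoiding~$v$; using the exchange/alternating-path argument on the symmetric difference of~$S_u$ and~$S_v$ (both chosen via Lemma~\ref{lemma:mis:wrt:matching} to differ only along alternating paths/cycles), one can splice them into a single maximum independent set avoiding both~$u$ and~$v$, contradicting that~$\{u,v\}$ is a blocking set. The point is that within one partite set~$A$, the ``avoid~$u$'' and ``avoid~$v$'' choices do not conflict because they correspond to putting the respective~$B$-partners into the independent set, and these can be combined.

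I expect the exchange argument in the last step to be the main obstacle: making precise why, when~$u,v$ are on the same side, a maximum independent set avoiding both exists. The clean way is to take~$S$ a maximum independent set avoiding~$u$ (exists since~$u$ is not a blocking set) and modify it to also avoid~$v$ by flipping along the alternating path in~$G$ that starts at~$v$; I would argue that since~$v\in A$ is~$M$-saturated and lies in~$S$, the~$M$-alternating path from~$v$ consisting of vertices alternately in~$S$ and not in~$S$ either ends at an~$M$-unsaturated vertex (impossible, as those are always in~$S$ and in~$A$... need to check parity) or can be flipped, swapping which endpoint of each matched edge along it is chosen, yielding a maximum independent set that avoids~$v$; and crucially this flip does not re-introduce~$u$, because~$u\in A$ and the alternating path from~$v\in A$ alternates~$A,B,A,B,\dots$ starting with~$v\in S\cap A$ --- so the vertices removed from~$S$ along the path are all in~$A$ and the vertices added are all in~$B$; if~$u$ were on this path it would be removed, not added, so~$u$ stays out of the new set. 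Therefore the modified set avoids both~$u$ and~$v$, contradicting that~$\{u,v\}$ blocks. This forces any size-two minimal blocking subset to have one vertex in~$A$ and one in~$B$, completing the proof.

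A subtle point I would need to handle carefully is the reduction from an arbitrary blocking set~$Y$ to a minimal one and then to the size-at-most-two case: I should take~$Y'\subseteq Y$ inclusion-minimal among blocking sets; then every proper subset of~$Y'$ is not blocking, and I argue~$|Y'|\le 2$ by essentially the same exchange argument (if~$|Y'|\ge 3$ or if~$|Y'|=2$ with both on the same side, splice maximum independent sets avoiding the individual vertices into one avoiding all of~$Y'$). I would phrase the core lemma as: for~$M$-saturated vertices~$w_1,\dots,w_r$ all in~$A$ (or all in~$B$), if for each~$j$ there is a maximum independent set avoiding~$w_j$, then there is a maximum independent set avoiding all of them simultaneously --- proved by iterating the single-vertex flip above, noting each flip only ever moves~$A$-vertices out and~$B$-vertices in, so flips for distinct~$A$-vertices never undo each other. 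Combining: a minimal blocking set of size~$\ge 2$ cannot be monochromatic, hence has the claimed form~$\{a,b\}$ with~$a\in A$, $b\in B$.
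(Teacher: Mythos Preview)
Your approach via the exchange/lattice structure of maximum independent sets is a legitimate alternative to the paper's argument, which instead performs an explicit case analysis on alternating-path reachability sets~$R_{A\cap U}$ and~$R_{B\cap Y}$ and either directly exhibits a blocking subset of size at most two or constructs a maximum independent set avoiding~$Y$. Your core lemma---that $M$-saturated vertices~$w_1,\dots,w_r$ all lying in~$A$ which are each individually avoidable are in fact jointly avoidable---is correct; it follows cleanly from the lattice property that if~$S_1,S_2$ are maximum independent sets then so is~$(S_1\cap S_2\cap A)\cup((S_1\cup S_2)\cap B)$, which is what your flip argument is implicitly using. (A minor imprecision: the ``flip'' to remove a single~$v\in A$ may propagate along a \emph{tree} of alternating paths rather than a single path, but your conclusion that it only removes~$A$-vertices and adds~$B$-vertices remains valid.)

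There is, however, a genuine gap in the step ``a minimal blocking set of size~$\ge 2$ cannot be monochromatic, hence has the claimed form~$\{a,b\}$''. Your core lemma rules out monochromatic minimal blocking sets, but it does \emph{not} rule out a minimal blocking set of size~$\ge 3$ with vertices on both sides, and ``not monochromatic'' certainly does not imply ``size two''. Concretely, for~$Y'=\{a_1,a_2,b\}$ with~$a_1,a_2\in A$ and~$b\in B$, splicing the sets~$S_y$ that each avoid a \emph{single} vertex~$y$ fails: your $A$-side flips add $B$-vertices and may re-introduce~$b$. The fix uses minimality more strongly: when~$|Y'|\ge 3$, without loss of generality~$|Y'\cap A|\ge 2$; pick~$a_1,a_2\in Y'\cap A$ and take maximum independent sets~$S_1,S_2$ avoiding~$Y'\setminus\{a_1\}$ and~$Y'\setminus\{a_2\}$ respectively (these exist because \emph{every} proper subset of~$Y'$ is non-blocking). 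Then~$T:=(S_1\cap S_2\cap A)\cup((S_1\cup S_2)\cap B)$ is a maximum independent set avoiding all of~$Y'\cap A$ (as~$a_1\notin S_2$, $a_2\notin S_1$, and any other $A$-vertex of~$Y'$ is in neither) and all of~$Y'\cap B$ (since both~$S_i$ already avoid~$Y'\cap B$), contradicting that~$Y'$ is blocking. With this correction your argument goes through.
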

\begin{proof}
Let~$M$ be a maximum matching in~$G$, let~$V(M)$ be the saturated vertices, and let~$U := V(G) \setminus V(M)$ be the unsaturated vertices. Let~$R_{A \cap U}$ be the vertices that can be reached by a possibly empty $M$-alternating path from~$A \cap U$ (which necessarily starts with a non-matching edge). Let~$R_{B \cap Y}$ be the vertices that can be reached by a possibly empty $M$-alternating path that starts with a {\sl matching edge} from a vertex of~$B \cap Y$. Note that both types of alternating paths move from~$A$ to~$B$ over non-matching edges, and move from~$B$ to~$A$ over matching edges.

We first deal with some cases in which we easily obtain a blocking set~$Y'$ as desired.

(\textbf{Case 1: $A \cap Y \cap R_{A \cap U} \neq \emptyset$}) Let~$a \in A \cap Y \cap R_{A \cap U}$. Then~$a \in A$ can be reached by an $M$-alternating path~$P$ that starts in an unsaturated vertex in the same partite set, implying that~$P$ has even length and ends with a matching edge into~$a$. Hence $M':=M \oplus E(P)$, where $\oplus$ denotes the symmetric difference, is a new maximum matching (which is equal to $M$ if $P$ is empty), and it does not saturate~$a \in A \cap Y$. Lemma~\ref{lemma:mis:wrt:matching} applied to~$M'$ implies that all maximum independent sets of~$G$ contain~$a$, showing that~$Y' := \{a\}$ is a blocking set of size one.

(\textbf{Case 2: $B \cap U \cap R_{B \cap Y} \neq \emptyset$}) By definition, some~$u \in B \cap U$ can be reached by an $M$-alternating path~$P$ that starts in some vertex~$b \in B \cap Y$ that belongs to the same partite set. Similarly as in the previous case, $M':=M \oplus E(P)$ is a new maximum matching (which is equal to $M$ if $P$ is empty) that does not saturate~$b$, so by Lemma~\ref{lemma:mis:wrt:matching} applied to~$M'$ we conclude that~$Y' := \{b\}$ is a blocking set of size one.

(\textbf{Case 3: $A \cap Y \cap R_{B \cap Y} \neq \emptyset$}) By definition, some~$a \in A \cap Y$ is reachable by an $M$-alternating path~$P$ from some~$b \in B \cap Y$, and~$P$ starts with a matching edge. Since it ends in the other partite set, it ends with a matching edge as well; hence both~$a$ and~$b$ are saturated. We claim that~$Y' := \{a,b\}$ is a blocking set in~$G$, as desired. Let~$a = a_1, b_1, \ldots, a_k, b_k = b$ be the vertices on~$P$, so that~$\{a_i, b_i\} \in M$ for all~$i \in [k]$ and~$\{b_i, a_{i+1}\} \in E(G) \setminus M$ for~$i \in [k-1]$. By Lemma~\ref{lemma:mis:wrt:matching}, a maximum independent set in~$G$ contains one endpoint of each of the edges~$\{a_i, b_i\} \in M$. A maximum independent set avoiding~$a_1$ therefore has to contain~$b_1$, preventing it from containing~$a_2$, forcing it to contain~$b_2$, and so on. Hence a maximum independent set avoiding~$a_1$ contains~$b_k$, proving that~$Y' := \{a,b\} = \{a_1, b_k\}$ is a blocking set in~$G$.

(\textbf{Case 4: $B \cap U \cap R_{A \cap U} \neq \emptyset$}) Then some unsaturated vertex of~$A$ can reach an unsaturated vertex of~$B$ by an $M$-alternating path~$P$. But then~$M$ is not a maximum matching since~$M \oplus E(P)$ is larger; a contradiction. Hence this case cannot occur.

\medskip

Assume now that none of the cases above hold. We will conclude the proof of the lemma by deriving a contradiction. Let~$R := R_{A \cap U} \cup R_{B \cap Y}$. The following will be useful.

\begin{claim} \label{claim:reach:x:then:y}
If~$a \in A \cap R$ and~$\{a,b\} \in E(G)$, then~$b \in B \cap R$.
\end{claim}
\begin{claimproof}
By definition,~$a \in A \cap R$ implies~$a$ is reachable by some $M$-alternating path~$P$ that moves to~$A$ over matching edges and moves to~$B$ over non-matching edges, such that~$P$ starts in a vertex~$v \in (A \cup U) \cup (B \cap Y)$. But then~$b$ is also reachable by such an alternating path from~$v$: if~$\{a,b\} \in M$ then, since~$P$ ends at~$a$, edge~$\{a,b\}$ must be the last edge of~$P$, so a prefix of~$P$ is an $M$-alternating path reaching~$b$; if~$\{a,b\} \notin M$ then appending~$\{a,b\}$ to~$P$ yields such an $M$-alternating path. Hence~$b \in R$, and~$b \in B$ follows since~$G$ is bipartite.
\end{claimproof}

Now consider the following set: $S := (A \cap R) \cup (B \setminus R)$.

\smallskip

We will prove that~$S$ is a maximum independent set of~$G$ disjoint from~$Y$, contradicting the assumption that~$Y$ is a blocking set. To see that~$S$ is indeed an independent set, consider any vertex from~$A \cap S$, which belongs to~$A \cap R$. By Claim~\ref{claim:reach:x:then:y} all neighbors of~$a$ belong to~$B \cap R$, and are therefore not contained in~$S$. Hence~$S$ is indeed an independent set. To see that it is maximum, by Lemma~\ref{lemma:mis:wrt:matching} it suffices to argue it contains all of~$U$ and one endpoint of each edge in~$M$.

To see that~$S$ contains all vertices of~$A \cap U$, note that all such vertices are trivially in~$R_{A \cap U}$ and therefore in~$R$, implying their presence in~$A \cap R$ and therefore in~$S$. To see that~$S$ contains all vertices of~$B \cap U$, it suffices to show that~$B \cap U \cap R = \emptyset$, which follows from the fact that neither Case 2 nor Case 4 is applicable. Hence~$S$ contains all vertices of~$U$.

To see that~$S$ contains an endpoint of each edge of~$M$, let~$\{a,b\} \in M$ be arbitrary with~$a \in A$ and~$b \in B$. If~$b \notin R$ then clearly~$b \in S$, as desired. If~$b \in R$, then this is witnessed by an alternating path~$P$ that reaches~$b$ and ends with a non-matching edge. Extending~$P$ with the edge~$\{a,b\} \in M$ then shows that~$a \in R$, so that~$a \in A \cap R$ is an endpoint of the edge contained in~$S$.

Hence~$S$ is a maximum independent set in~$G$. Since Case 1 and Case 3 do not apply, it follows that~$A \cap R \cap Y = \emptyset$, so that~$S \cap A$ contains no vertex from~$Y$. Since all vertices of~$B \cap Y$ are trivially in~$R_{B \cap Y}$ and therefore in~$R$, it follows that~$B \setminus R$ contains no vertex from~$Y$. Hence~$S$ is a maximum independent set in~$G$ disjoint from~$Y$, contradicting the assumption that~$Y$ is a blocking set.
\end{proof}

We will use Lemma~\ref{lemma:bipartite:blockingsets} to power the induction step in the proof of the next theorem, which gives the desired upper-bound on the size of minimal blocking sets in terms of bridge-depth. The main idea in the induction step is as follows. For a connected graph~$G$, we consider a tree of bridges~$T$ for which~$\bd(G \setminus V(T)) < \bd(G)$. We can summarize the relevant ways in which a maximum independent set in~$G$ can be composed out of maximum independent sets for the connected components of~$G \setminus E(T)$, into a weighted tree~$T'$ that is obtained from~$T$ by adding a pendant leaf to each vertex. In turn, maximum-weight independent sets in~$T'$ correspond to maximum independent sets in a bipartite graph obtained from~$T'$ by replacing each vertex by a set of false twins. Applying Lemma~\ref{lemma:bipartite:blockingsets} to this bipartite graph points to two vertices that form a blocking set. We can translate this back into two components of~$G \setminus E(T)$ which are sufficient for constructing a blocking set in~$G$, and apply induction using the fact that~$\bd(G \setminus V(T)) < \bd(G)$.

\begin{restatable}{theorem}{theoremXblockingsetXsize} \label{theorem:blockingset:size}
Let~$G$ be a graph and~$Y_G \subseteq V(G)$ a blocking set in~$G$. There is a blocking set~$Y'_G \subseteq Y_G$ in~$G$ of size at most~$2^{\bd(G)}$.
\end{restatable}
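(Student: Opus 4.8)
The plan is to prove the statement by induction on $\bd(G)$. For the base case $\bd(G) = 0$, the graph $G$ is empty, so there is no blocking set and the claim is vacuous; for $\bd(G) = 1$, the graph $G$ is a forest and hence bipartite, so Lemma~\ref{lemma:bipartite:blockingsets} already gives a blocking set of size at most $2 = 2^{\bd(G)}$. For the inductive step, first observe that if $G$ is disconnected then any blocking set $Y_G$ must contain a blocking set for a single connected component $G_i$ of $G$ (a maximum independent set of $G$ is the union of maximum independent sets of the components, so $Y_G$ blocks $G$ iff it blocks some $G_i$); since $\bd(G_i) \le \bd(G)$, we apply induction to $G_i$. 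So we may assume $G$ is connected.

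For connected $G$, fix a lowering tree $T$ (Definition~\ref{def:loweringtree}), which exists by Item~\ref{rectreeofbridges} of Proposition~\ref{proposition:bd}, so that $\bd(G \setminus V(T)) = \bd(G) - 1$. The key structural step is to encode the interaction between $T$ and the rest of the graph in a bipartite auxiliary graph. Removing the bridge edges $E(T)$ splits $G$ into connected components: one "piece" for each vertex $v \in V(T)$, consisting of $v$ together with the subtrees of $G \setminus E(T)$ hanging off $v$; equivalently, each such piece is obtained from a connected component of $G \setminus V(T)$ by attaching it to the unique vertex of $T$ to which it was joined by a bridge, or it is an isolated vertex of $T$. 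A maximum independent set of $G$ is assembled by choosing, for each vertex $v$ of $T$, whether $v$ is "in" or "out", and then completing optimally inside each piece subject to that choice — and the only constraint coupling the choices across $T$ is that two $T$-adjacent vertices cannot both be "in" (they are joined by a bridge edge). This turns the problem of computing $\alpha(G)$ into a maximum-weight independent set problem on the tree $T'$ obtained from $T$ by adding a pendant leaf $\ell_v$ to each $v$ (the leaf representing the "$v$ is out" option, the vertex $v$ itself the "$v$ is in" option), with weights recording the gain inside the corresponding piece. Finally, a weighted independent set problem on a tree with integer weights can be modeled as an unweighted independent set problem on a bipartite graph $H$ obtained by blowing up each vertex of $T'$ into a module of false twins of the appropriate size (with the bipartition inherited from the $2$-coloring of the tree $T'$).

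Now apply Lemma~\ref{lemma:bipartite:blockingsets} inside $H$. The set $Y_G$, restricted to the pieces and their choices, corresponds to a set of modules of $H$; since $Y_G$ is a blocking set in $G$, the corresponding vertex set is a blocking set in $H$, so Lemma~\ref{lemma:bipartite:blockingsets} yields a blocking set in $H$ of size one, or of size two with one endpoint in each partite class. Translating back: either a single piece $P$ already forces a blocking set, or two pieces $P_1, P_2$ together do — and in the size-two case the bipartiteness / opposite-partite-class property tells us the two pieces sit at vertices of $T$ whose tree-distance in $T'$ has the right parity, which (as in the bipartite Case 3 argument) lets us propagate a forcing argument along the path in $T$ between them. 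In either situation, within the relevant piece(s) we are forced to avoid at most two vertices, but the "interesting" blocking behavior is now confined to what happens inside the components of $G \setminus V(T)$ that make up those pieces: we can choose a minimal blocking set $Y''$ inside $G \setminus V(T)$ (of a single component, using the disconnected reduction above), which by induction has size at most $2^{\bd(G) - 1}$, together with at most one extra vertex from each of at most two pieces — and a careful accounting (at most two sub-blocking-sets from the lower-bridge-depth graph, doubling the bound) gives $2 \cdot 2^{\bd(G)-1} = 2^{\bd(G)}$.

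The main obstacle I expect is making the reduction to the bipartite auxiliary graph $H$ faithful in both directions: one must show not only that $\alpha(G)$ is recovered correctly from weighted independent sets of $T'$, but that a blocking set of $G$ maps to a blocking set of $H$ and, conversely, that the one- or two-vertex blocking set returned by Lemma~\ref{lemma:bipartite:blockingsets} in $H$ can be "pulled back" to $G$ while only spending a constant number of vertices beyond a recursively-obtained blocking set of $G \setminus V(T)$. In particular, the size-two case of Lemma~\ref{lemma:bipartite:blockingsets} — where the two vertices lie in opposite partite classes — is exactly what is needed so that the forcing chain along the bridge-path of $T$ works out (odd/even alternation along the path), mirroring Case 3 of the proof of Lemma~\ref{lemma:bipartite:blockingsets}; getting the parity bookkeeping and the factor-$2$ in the size bound to line up is the delicate part.
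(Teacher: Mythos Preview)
Your plan matches the paper's proof: induct, take a lowering tree~$T$, encode the components of~$G \setminus E(T)$ as a bipartite blow-up~$H$ of the tree~$T$-with-pendants, apply Lemma~\ref{lemma:bipartite:blockingsets} to~$H$, and recurse on at most two pieces of bridge-depth~$\bd(G)-1$ to get the bound~$2 \cdot 2^{\bd(G)-1}$. The only subtleties to watch are (i) inducting on~$\bd(G)+|V(G)|$ so that the disconnected-component reduction terminates, and (ii) first deleting any vertex contained in no maximum independent set of~$G$---this is what guarantees~$\alpha(C_i \setminus N[t_i]) = \alpha(C_i)-1$ and hence makes the recursion on~$C_i \setminus N[t_i]$ valid when the chosen $H$-vertex lies in~$Z^+_i$ but~$t_i \notin Y_G$; the opposite-partite-class guarantee is then used only locally, to ensure each module~$Z^\pm_i$ meets~$Y'_H$ in at most one vertex, rather than to propagate a forcing chain along a path in~$T$.
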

\begin{proof}
We use induction on~$\bd(G) + |V(G)|$. If~$\bd(G) = 1$ then~$G$ is bipartite (it is even a forest, by Proposition~\ref{proposition:bd}) and therefore the claim follows from Lemma~\ref{lemma:bipartite:blockingsets}. For the induction step, assume~$\bd(G) > 1$.

If~$G$ is disconnected, then since a maximum independent set in~$G$ contains a maximum independent set from each connected component, there is a connected component~$C$ such that~$Y_G \cap V(C)$ is a blocking set of~$C$. By induction on~$C$ and~$Y_G \cap V(C)$ we obtain a blocking set~$Y'_G \subseteq Y_G$ for~$C$ of size at most~$2^{\bd(C)} \leq 2^{\bd(G)}$, and~$Y'_G$ is also a blocking set in~$G$.

It remains to deal with the case that~$G$ is connected. If there is a vertex~$v \in V(G)$ such that no maximum independent set in~$G$ contains~$v$, then any blocking set for~$G \setminus v$  is a blocking set in~$G$. Hence we may invoke induction on~$G \setminus v$ and~$Y_G \setminus v$ and output the resulting blocking set~$Y'_G$.

From now on, we assume that each vertex of~$G$ is contained in at least one maximum independent set, and that~$G$ is connected. By Proposition~\ref{proposition:bd}, there is a tree of bridges~$T$ in~$G$ such that~$\bd(G \setminus V(T)) < \bd(G)$. Let~$C_1, \ldots, C_k$ be the connected components of the graph~$G \setminus E(T)$, that is, the graph we obtain by removing all bridges of~$T$ but keeping the vertices incident on them. (We may have~$k=1$ if~$T$ consists of a single vertex.) Since~$T$ is a tree of bridges, every connected component~$C_i$ contains exactly one vertex of~$T$; denote this vertex by~$t_i$.

\medskip

(\textbf{Constructing an auxiliary graph}) To identify a bounded-size blocking set in~$G$, we build an auxiliary bipartite graph~$H$ and vertex subset~$Y_H$, as follows.
\begin{enumerate}
	\item For each~$i \in [k]$, do the following.
	\begin{enumerate}
		\item Add a vertex set~$Z^+_i$ of size~$\alpha(C_i)$ to~$H$. Select~$\alpha(C_i) - \alpha(C_i \setminus Y_G)$ vertices from~$Z^+_i$ arbitrarily, and add them to~$Y_H$.\label{step:zplus}
		\item Add a vertex set~$Z^-_i$ of size~$\alpha(C_i \setminus t_i)$ to~$H$. ($Z^-_i$ may be empty.) Select~$\alpha(C_i \setminus t_i) - \alpha(C_i \setminus t_i \setminus Y_G)$ vertices from~$Z^-_i$ arbitrarily, and add them to~$Y_H$.\label{step:zminus}
		\item Add all possible edges between~$Z^+_i$ and~$Z^-_i$. \label{step:join:plus:minus}
	\end{enumerate}
	\item For each edge~$\{t_i, t_j\}$ of~$T$, add all possible edges between~$Z^+_i$ and~$Z^+_j$.
\end{enumerate}

One can think of~$H$ as being obtained from the tree~$T$ by attaching a degree-1 pendant leaf to each vertex, and then blowing each vertex up into a set of false twins (vertices with the same open neighborhoods), whose size is determined by an independence number. From this interpretation, it is easy to verify that~$H$ is indeed bipartite. Note that for each~$i \in [k]$, all vertices of~$Z^+_i$ belong to the same partite set. Similarly, all vertices of~$Z^-_i$ belong to the same (but opposite) partite set. Intuitively, the bipartite graph~$H$ captures the structure of independent sets in~$G$: for each component~$C_i$ you can choose whether or not to include the attachment point~$t_i$ in your independent set or not. If you do, then you can get the corresponding~$\alpha(C_i)$ vertices from set~$Z^+_i$, but the adjacencies to sets~$Z^+_j$ for~$t_j \in N_T(t_i)$ then prevent you from picking an independent set of size~$\alpha(C_j)$ from~$Z^+_j$ of the neighboring components. If you do not use~$t_i$ in the independent set, you can pick the~$\alpha(C_i \setminus t_i)$ vertices from~$Z^-_i$ instead, which does not impose any restrictions on what you choose for neighboring components. The set~$Y_H$ is chosen in such a way that the loss of having to avoid~$Y_G$ in an independent set in~$G$, corresponds to the loss of having to avoid~$Y_H$ in an independent set for~$H$. We now formalize these ideas.

\begin{claim} \label{claim:bs:gh}
$\alpha(G) = \alpha(H)$ and~$\alpha(G \setminus Y_G) = \alpha(H \setminus Y_H)$.
\end{claim}
\begin{claimproof}
Consider an independent set~$S_G$ in~$G$ (respectively, in~$G \setminus Y_G$). Initialize~$S_H$ as an empty set. For each~$t_i \in V(T)$, if~$t_i \in S_G$ then add~$Z^+_i$ to~$S_H$ (respectively, add~$Z^+_i \setminus Y_H$ to~$S_H$). If~$t_i \notin S_G$, then add~$Z^-_i$ (respectively,~$Z^-_i \setminus Y_H$) to~$S_H$. Observe that by Steps~\ref{step:zplus} and~\ref{step:zminus}, in either case we add at least~$|V(C_i) \cap S_G|$ vertices to~$S_H$, therefore~$|S_H| \geq |S_G|$. To see that~$S_H$ is independent in~$H$, observe that if~$t_i \in S_G$ and we use~$Z^+_i$, then all~$t_j \in N_T(t_i)$ do not belong to~$S_G$ since~$T$ is a subgraph of~$G$, and hence we use the sets~$Z^-_j$ for such~$t_j$. Hence~$\alpha(H) \geq \alpha(G)$. Since the set~$S_H$ avoids~$Y_H$ if~$S_G$ avoids~$Y_G$, we also have~$\alpha(H \setminus Y_H) \geq \alpha(G \setminus Y_G)$.

Consider an independent set~$S_H$ in~$H$ (respectively, in~$H \setminus Y_H)$. For each~$t_i \in V(T)$, by Step~\ref{step:join:plus:minus} of the construction, the set~$S_H$ cannot contain vertices from both~$Z^+_i$ and~$Z^-_i$. If~$S_H \cap Z^+_i \neq \emptyset$, then add a maximum independent set of~$C_i$ (respectively, of~$C_i \setminus Y_G$) to~$S_G$, which has size at least~$|Z^+_i \cap S_H|$. Otherwise, add a maximum independent set of~$C_i \setminus t_i$ (respectively, of~$(C_i \setminus t_i) \setminus Y_G$) to~$S_G$, which has size at least~$|Z^-_i \cap S_H|$. Hence~$|S_G| \geq |S_H|$, and~$S_G$ avoids~$Y_G$ if~$S_H$ avoids~$Y_H$. Using the fact that the components~$C_i$ arose by deleting the edges of the tree~$T$, it is easy to verify that~$S_G$ is independent in~$G$. Hence~$\alpha(G) \geq \alpha(H)$ and~$\alpha(G \setminus Y_G) \geq \alpha(H \setminus Y_H)$.
\end{claimproof}

Since~$Y_G$ is a blocking set in~$G$, Claim~\ref{claim:bs:gh} shows that~$Y_H$ is a blocking set in~$H$. Since~$H$ is bipartite, by Lemma~\ref{lemma:bipartite:blockingsets} there is a blocking set~ $Y'_H \subseteq Y_H$ in~$H$ of size at most two, with the guarantee that if it has size two then its vertices belong to opposite partite sets. Before using~$Y'_H$, we establish a structural claim that will be useful later.

\begin{claim} \label{claim:independence:closedneighbors}
For each~$t_i \in V(T)$ we have~$\alpha(C_i) = \alpha(C_i \setminus N_{C_i}[t_i]) + 1$.
\end{claim}
\begin{claimproof}
Since each vertex of~$G$ is contained in a maximum independent set by the assumption in the beginning of the proof, there is a maximum independent set~$S_G$ of~$G$ containing~$t_i$. Then~$S_G \cap V(C_i)$ is a maximum independent set of~$C_i$ which contains~$t_i$: since~$t_i$ is the only vertex that has neighbors in~$G$ outside of~$C_i$, if there was an independent set in~$C_i$ larger than~$S_G \cap V(C_i)$ then we could substitute it into~$S_G$ to obtain a larger independent set of~$G$, which is impossible. So there is a maximum independent set of~$C_i$ that contains~$t_i$, implying that removing~$t_i$ and its closed neighborhood from~$C_i$ decreases the independence number by exactly one.
\end{claimproof}

(\textbf{Building a blocking set}) Now we build a small blocking set~$Y'_G \subseteq Y_G$ for~$G$, as follows. Initialize~$Y'_G$ as an empty set, and do the following for each vertex~$y \in Y'_H \subseteq Y_H$.

\begin{enumerate}[(I)]
	\item If~$y \in Z^+_i$ for some~$i$ and~$t_i \in Y_G$, then add~$t_i$ to~$Y'_G$. \label{step:ti:blockingset}
	\item If~$y \in Z^+_i$ for some~$i$ but~$t_i \notin Y_G$, then let~$C'_i := C_i \setminus N_{C_i}[t_i] = C_i \setminus N_G[t_i]$. Since~$C'_i$ is a subgraph of~$G \setminus V(T)$, we have~$\bd(C'_i) < \bd(G)$. We claim that~$\alpha(C'_i \setminus Y_G) < \alpha(C'_i)$. To see that, note that~$y \in Z^+_i$ implies that~$Z^+_i \cap Y_H \neq \emptyset$, which implies by Step~\ref{step:zplus} that~$\alpha(C_i) > \alpha(C_i \setminus Y_G)$. By Claim~\ref{claim:independence:closedneighbors}, we know~$\alpha(C'_i) = \alpha(C_i \setminus N_{C_i}[t_i]) = \alpha(C_i) - 1$. If~$\alpha(C'_i \setminus Y_G) = \alpha(C'_i)$, then any independent set of this size in~$C'_i \setminus Y_G$ combines with~$t_i$ to form a maximum independent set in~$C_i$ that is disjoint from~$Y_G$, contradicting the fact that~$\alpha(C_i) > \alpha(C_i \setminus Y_G)$. Hence~$Y_G$ is  indeed a blocking set for~$C'_i$ and we may invoke induction on~$C'_i$ and~$Y_G \cap V(C'_i)$ to obtain a blocking set~$Y^y_G \subseteq Y_G$ of~$C'_i$ of size at most~$2^{\bd(C'_i)} \leq 2^{\bd(G)-1}$. We add~$Y^y_G$ to~$Y'_G$.\label{step:zplus:induction}
	\item If~$y \in Z^-_i$ for some~$i$, then~$Z^-_i \cap Y_H \neq \emptyset$ which implies by Step~\ref{step:zminus} that~$\alpha(C_i \setminus t_i) > \alpha((C_i \setminus t_i) \setminus Y_G)$, hence~$Y_G \cap V(C_i \setminus t_i)$ is a blocking set for~$C_i \setminus t_i$. Since~$C_i \setminus t_i$ is a subgraph of~$G \setminus V(T)$, we have~$\bd(C'_i) < \bd(G)$. Hence we may invoke induction on~$C_i \setminus t_i$ and~$Y_G \cap V(C_i \setminus t_i)$ to obtain a blocking set~$Y^y_G$ for~$C_i \setminus t_i$ of size at most~$2^{\bd(G)-1}$. We add~$Y^y_G$ to~$Y'_G$.\label{step:zminus:induction}
\end{enumerate}

Since~$|Y'_H| \leq 2$, the process above results in a set~$Y'_G \subseteq Y_G$ of size at most~$2^{\bd(G)}$. To complete the proof, it suffices to show that~$Y'_G$ is indeed a blocking set in~$G$.

\begin{claim} \label{claim:independence:number}
$\alpha(G \setminus Y'_G) < \alpha(G)$.
\end{claim}
\begin{claimproof}
Assume for a contradiction that~$\alpha(G \setminus Y'_G) = \alpha(G)$, and let~$S_G$ be a maximum independent set in~$G$ disjoint from~$Y'_G$. We use a similar process as in the proof of Claim~\ref{claim:bs:gh} to build a maximum independent set~$S_H$ in~$H$ disjoint from~$Y'_H$, contradicting the fact that~$Y'_H$ is a blocking set in~$H$.

Initialize~$S_H$ as the empty set. For each~$t_i \in V(T)$, we do the following.
\begin{itemize}
	\item If~$t_i \in S_G$ and~$Y'_H \cap Z^+_i = \emptyset$, then add~$Z^+_i$ to~$S_H$. Note that~$|Z^+_i| \geq |S_G \cap V(C_i)|$ by Step~\ref{step:zplus}.
	\item If~$t_i \notin S_G$ and~$Y'_H \cap Z^-_i = \emptyset$, then add~$Z^-_i$ to~$S_H$. Note that~$|Z^-_i| \geq |S_G \cap V(C_i)|$ by Step~\ref{step:zminus}.
	\item If~$t_i \in S_G$ and~$Y'_H \cap Z^+_i \neq \emptyset$, then we claim that~$|Y'_H \cap Z^+_i| = 1$. This follows from the fact that if~$Y'_H$ has size two, then by Lemma~\ref{lemma:bipartite:blockingsets} its two vertices belong to opposite partite sets, while all of~$Z^+_i$ belongs to the same partite set of~$H$. Since~$t_i \in S_G$ while~$S_G$ avoids~$Y'_G$ and~$Y'_H \cap Z^+_i \neq \emptyset$, Step~\ref{step:ti:blockingset} ensures that~$t_i \notin Y_G$. Hence during the construction of~$Y'_G$ we executed Step~\ref{step:zplus:induction} on account of the unique vertex in~$Y'_H \cap Z^+_i$, which caused a blocking set for~$C'_i := C_i \setminus N_{C_i}[t_i]$ to be added to~$Y'_G$. Since~$S_G$ avoids~$Y'_G$, it follows that~$|S_G \cap V(C'_i)| < \alpha(C'_i)$. As~$|S_G \cap N_{C_i}[t_i]| = 1$ since~$t_i \in S_G$, together with Claim~\ref{claim:independence:closedneighbors} this implies
$$|S_G \cap V(C_i)| = |S_G \cap V(C'_i)| + 1 < \alpha(C'_i) + 1 = \alpha(C_i)  = |Z^+_i|.$$
Now add~$Z^+_i \setminus Y'_H$ to~$S_H$, which has size at least~$|S_G \cap V(C_i)|$ and is disjoint from~$Y'_H$.
	\item If~$t_i \notin S_G$ and~$Y'_H \cap Z^-_i \neq \emptyset$, then similarly as in the previous case we have~$|Y'_H \cap Z^-_i| = 1$, and on account of this vertex we executed Step~\ref{step:zminus:induction} when constructing~$Y'_G$. Hence~$Y'_G$ contains a blocking set for~$C_i \setminus t_i$, and together with the assumption~$t_i \notin S_G$ this implies~$|S_G \cap V(C_i)| \leq \alpha(C_i \setminus t_i) - 1 = |Z^-_i| - 1$. Now add~$Z^-_i \setminus Y'_H$ to~$S_H$, which has size at least~$|S_G \cap V(C_i)|$ and is disjoint from~$Y'_H$.
\end{itemize}

It follows directly from the construction that~$S_H$ is at least as large as~$S_G$ and is disjoint from~$Y'_H$. The fact that~$S_H$ is independent follows for the same reasons as in Claim~\ref{claim:bs:gh}. Since~$S_G$ is a maximum independent set in~$G$, Claim~\ref{claim:bs:gh} then implies  that~$S_H$ is a maximum independent set in~$H$. But this contradicts the fact that~$Y'_H$ is a blocking set in~$H$.
\end{claimproof}

\noindent Claim~\ref{claim:independence:number} shows that~$Y'_G$ is a blocking set in~$G$, which concludes the proof of Theorem~\ref{theorem:blockingset:size}.
\end{proof}

Note that Theorem~\ref{theorem:blockingset:size} and Theorem~\ref{thm:largebd:largembs} together prove Theorem~\ref{thm:bridgedepth:blockingsets}. We finish the section by showing that the upper-bound of~$2^{\bd(G)}$ on the size of minimal blocking sets is tight.

\begin{theorem} \label{thm:tightness}
For every~$c \in \mathbb{N}$, there is a graph~$G$ with~$\bd(G) \leq c$ that contains a minimal blocking set of size~$2^c$.
\end{theorem}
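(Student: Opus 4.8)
The plan is to exhibit, for each $c$, an explicit graph built from triangle-paths. For $c = 1$, take $G := K_2$: it is a forest, so $\bd(G) = 1$ by Proposition~\ref{proposition:bd}, and its two vertices form a minimal blocking set of size $2 = 2^1$ (deleting both yields the empty graph, dropping $\alpha$ from $1$ to $0$, while deleting a single vertex leaves $K_1$ and keeps $\alpha = 1$). For $c \ge 2$, let $\mathrm{TP}_t$ denote the triangle-path of length $t$ from Definition~\ref{def:triangle-path}, and take $G := \mathrm{TP}_{2^c - 2}$. Since $2^c - 2 \ge 2$, Observation~\ref{obs:mbs-tp} directly supplies a minimal blocking set of $G$ of size $(2^c - 2) + 2 = 2^c$, so the whole proof reduces to showing $\bd(\mathrm{TP}_{2^c - 2}) \le c$. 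Combined with Theorem~\ref{theorem:blockingset:size}, this will moreover force $\bd(\mathrm{TP}_{2^c-2}) = c$ and $\mbs(\mathrm{TP}_{2^c-2}) = 2^c$, i.e.\ the bound of Theorem~\ref{theorem:blockingset:size} is attained.

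To bound the bridge-depth, I first contract all bridges. The edges inside the triangles are not bridges, whereas the linking edges $\{b_i, a_{i+1}\}$ are, and contracting them identifies $b_i$ with $a_{i+1}$; by Proposition~\ref{proposition:bd}(\ref{contractionsstable}) this leaves the bridge-depth unchanged. The result is a simple graph $T_t$, with $t = 2^c - 2$, consisting of a path of ``junction'' vertices $w_0 w_1 \cdots w_t$ (where $w_0$ is the image of $a_1$, $w_t$ the image of $b_t$, and $w_i$ the image of $b_i = a_{i+1}$ for $0 < i < t$), together with, for each $i \in [t]$, a degree-two vertex $c_i$ adjacent exactly to $w_{i-1}$ and $w_i$. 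I then prove $\bd(T_{2^c-2}) \le c$ by induction on $c \ge 2$. The base case $c = 2$ is checked directly: $T_2$ is the ``bowtie'' (two triangles sharing the vertex $w_1$), and deleting $w_1$ leaves two disjoint edges, so $\bd(T_2) \le 1 + \bd(2K_2) = 2$. For the inductive step with $c \ge 3$, set $j := 2^{c-1} - 1$ and delete the junction $w_j$ from $T_{2^c - 2}$. The resulting graph splits into exactly two connected components: one on the junctions $w_0, \dots, w_{j-1}$ (a path carrying the $j - 1 = 2^{c-1} - 2$ triangles on the edges $w_0w_1, \dots, w_{j-2}w_{j-1}$, together with the vertex $c_j$, now a pendant neighbour of $w_{j-1}$), and symmetrically one on $w_{j+1}, \dots, w_t$ (carrying $t - j - 1 = 2^{c-1} - 2$ triangles, with $c_{j+1}$ a pendant neighbour of $w_{j+1}$). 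Each component is thus $T_{2^{c-1} - 2}$ with one pendant leaf attached; since a pendant leaf is joined by a bridge, contracting that bridge and applying Proposition~\ref{proposition:bd}(\ref{contractionsstable}) shows the pendant does not change the bridge-depth, so each component has bridge-depth at most $c - 1$ by the induction hypothesis. Hence by Definition~\ref{def:bd}, $\bd(T_{2^c-2}) = 1 + \min_{v} \bd(T_{2^c-2} \setminus v) \le 1 + (c-1) = c$, as required.

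The main point requiring care is the structural claim in the inductive step: that removing a single junction vertex from $T_t$ disconnects it into precisely two pieces, each isomorphic to a shorter contracted triangle-path with one pendant leaf, and that the choice $j = 2^{c-1}-1$ balances the two sides to length exactly $2^{c-1}-2$. This is a finite verification amounting to counting the junctions and triangles on either side of $w_j$ and checking that $c_j$ (resp.\ $c_{j+1}$) attaches to the left (resp.\ right) component after the deletion. Everything else --- invariance of $\bd$ under bridge contractions and pendant additions, and the base-case computations for $c \in \{1, 2\}$ --- follows immediately from Definition~\ref{def:bd} and Proposition~\ref{proposition:bd}.
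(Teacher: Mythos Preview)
Your proof is correct. The underlying idea---build a triangle-path-type graph and halve it recursively to bound the bridge-depth---is the same as the paper's, but you implement it differently. The paper works with the \emph{truncated} triangle-path $U_{2^c}$ (the triangle-path of length $2^c$ with $a_1$ and $b_t$ deleted), proves from scratch that $\{c_i \mid i \in [2^c]\}$ is a minimal blocking set there, and then uses the middle bridge $\{b_{2^{c-1}}, a_{2^{c-1}+1}\}$ as a two-vertex tree of bridges via Proposition~\ref{proposition:bd}(\ref{rectreeofbridges}) to split $U_{2^c}$ cleanly into two copies of $U_{2^{c-1}}$. You instead keep the full triangle-path $\mathrm{TP}_{2^c-2}$, which lets you invoke Observation~\ref{obs:mbs-tp} off the shelf for the blocking-set part, and you perform the recursion on the bridge-contracted graph $T_t$ by deleting a single junction vertex. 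The trade-offs are minor: your route reuses the paper's existing Observation~\ref{obs:mbs-tp} and avoids introducing a new graph family, at the cost of a separate base case for $c=1$ and the small extra step of arguing away the pendant leaves after the split; the paper's route is uniform over all $c$ and the split is exactly into two isomorphic copies with no pendants, at the cost of re-deriving the blocking-set property for $U_t$. Both arguments are equally elementary and of the same length.
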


\begin{proof}
%
Recall the notion of triangle-path from Definition~\ref{def:triangle-path}. For~$t \geq 2$, let a \emph{truncated triangle-path of length~$t$} be the graph~$U_t$ obtained from a triangle-path of length~$t$ by removing vertices~$a_1$ and~$b_t$; see Figure~\ref{fig:truncated}. Analogously to Observation~\ref{obs:mbs-tp}, we show that~$Y_t := \{c_i \mid i \in [t]\}$ is a minimal blocking set in~$U_t$. Since~$Y_t$ is an independent set of size~$t$, while (the remainders of) the triangles in~$U_t$ partition the vertices of~$U_t$ into~$t$ cliques, it follows that~$\alpha(U_t) = t$. The set~$Y_t$ is a blocking set, since~$U_t \setminus Y_t$ is a path on~$2(t-1)$ vertices, whose independence number is only~$t-1$. Finally, it is easy to see that for any~$y \in Y_t$, there is a size-$t$ independent set in~$U_t \setminus (Y_t \setminus y)$ that consists of the vertex~$y$ and, for every (remainder of a) triangle in~$U_t$, the vertex closest to~$y$.

\begin{figure}[htb]
\begin{center}
\includegraphics{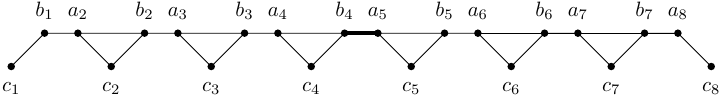}
\end{center}\vspace{-.2cm}
\caption{Truncated triangle path~$U_8$ of length~$8$, illustrating Theorem~\ref{thm:tightness}. Removing the fat middle bridge and its incident vertices, leaves two connected components isomorphic to~$U_4$.}
\label{fig:truncated}
\end{figure}

Hence~$U_t$ has a minimal blocking set of size~$t$, for all~$t \geq 2$. To prove the theorem, it therefore suffices to show that~$\bd(U_{2^c}) \leq c$ for all~$c \in \mathbb{N}$. We prove this by induction on~$c$. For~$c=1$, note that the graph~$U_2$ is just the four-vertex path. Hence it is a forest, implying~$\bd(U_2) = 1$ by Proposition~\ref{proposition:bd}. For~$c > 1$, consider the graph~$U_{2^c}$. By construction, the middle edge~$e = \{b_{2^{c-1}}, a_{2^{c-1}+1}\}$ is a bridge in~$U_{2^c}$. Let~$T$ be the tree in~$U_{2^c}$ consisting of the single bridge~$e$. Note that removing~$V(T)$ splits~$U_{2^c}$ evenly, into two connected components that are both isomorphic to~$U_{2^{c-1}}$. By induction,~$\bd(U_{2^{c-1}}) \leq c-1$. Then Proposition~\ref{proposition:bd} shows that~$\bd(U_{2^c}) \leq 1 + \bd(U_t \setminus V(T)) = 1 + (c-1) = c$.
\end{proof}

 \section{Kernelization for modulators to bounded bridge-depth} \label{sec:kernel:summary}
 To establish the positive direction of Theorem~\ref{thm:characterization}, we develop a polynomial kernel for \textsc{Vertex Cover} parameterized by the size of a modulator~$X$ whose removal leaves a graph of constant bridge-depth; an approximately optimal such set~$X$ can be computed using Proposition~\ref{prop:approxmod}. As the kernelization is technical and consists of many different reduction rules, with a nontrivial size analysis,  we first present below the high-level idea behind the kernelization and the role of bridge-depth.


\subparagraph{High-level ideas of the kernelization algorithm.} Consider an instance~$(G,k)$ of \textsc{Vertex Cover} with a modulator~$X$ such that~$\bd(G \setminus X)$ is bounded by the constant given by the graph class $\F$. As explained in the introduction, using the fact that minimal blocking sets for the components~$C$ of~$G \setminus X$ have bounded size, the number of such components can easily be bounded by~$|X|^{\Oh(1)}$. To bound the size of individual components, the definition of bridge-depth ensures that in each connected component~$C$ of~$G \setminus X$ there is a tree of bridges~$T \subseteq E(C)$ (called a \emph{lowering tree}) such that removing the vertex set~$V(T)$ from~$C$ decreases the bridge-depth of~$C$. By designing new problem-specific reduction rules, we shrink the tree of bridges to size polynomial in the parameter. This is where the main technical work of the kernelization step lies. It properly subsumes the earlier kernelization for the parameterization by distance to a forest, which is imported as a black box in several previous works~\cite{BougeretS18,FominS16,HolsK17,HolsKP19,MajumdarRR18}. Having bounded the number of components of~$G \setminus X$, together with the size of a lowering tree of bridges in each component, we now proceed as follows: in each component~$C$ of~$G \setminus X$ we move the vertices from a lowering tree of bridges into the set~$X$. This blows up~$|X|$ by a polynomial factor, but strictly decreases the bridge-depth of the graph~$G \setminus X$. We then recursively kernelize the resulting instance. When the bridge-depth of~$G \setminus X$ reaches zero, the graph~$G \setminus X$ is empty and the kernelization is completed. Full details are given below. We start with a formal description of the algorithm in Section~\ref{sec:formalAlgo} (cf. Algorithm~\ref{algo:kernel}), where we also  explain how the remainder of this section is organized.

\medskip

The negative direction of Theorem~\ref{thm:characterization}, presented in Section~\ref{sec:negative-kernel},  is much easier to establish. Using the fact that a minor-closed family~$\mathcal{F}$ of unbounded bridge-depth contains all triangle paths, a kernelization lower bound for modulators to such~$\mathcal{F}$ follows easily using known gadgets.

\subsection{Formal description of the kernelization algorithm}
\label{sec:formalAlgo}

For each integer~$c \in \mathbb{N}$, we will obtain a kernel for the following parameterized problem.

\defparproblem{\textsc{Independent Set} with $c$-bridge-depth modulator (\pb)}
{Undirected graph~$G$, integer~$k$, and a set~$X \subseteq V(G)$ such that~$\bd(G \setminus X) \leq c$.}
{$|X|$.}
{Does~$G$ have an independent set of size~$k$?}

Note that, for structural graph parameterizations that do not refer to the solution size, the \IS and \VC problems are equivalent: an instance~$(G,X,k)$ of \IS with a structural parameter~$|X|$ is equivalent to the similarly-parameterized instance~$(G,X,|V(G)| - k)$ of \VC, and this reduction preserves the parameter. Similarly as in previous work~\cite{JansenB13,BougeretS18}, we therefore present the kernelization algorithm for \IS, since it makes some arguments more intuitive.

We may assume that a modulator~$X$ is given in the input, since one may use the polynomial-time approximation algorithm given by Proposition~\ref{prop:approxmod} to obtain a modulator that can be used to compute the kernel.  This is a standard assumption also used in related work; see~\cite[\S 2.2]{FellowsJR13} for a detailed discussion.

Given an input $(G,X,k)$ of \pb, we henceforth denote by $R: = V(G) \setminus X$ the \emph{remaining} bounded-bridge-depth graph that results from removing the modulator. The kernelization algorithm is defined in Algorithm~\ref{algo:kernel}; throughout this section we present a series of definitions and results that will eventually lead to a proof that this algorithm indeed computes a polynomial kernel for \pb.

Let us first explain the roadmap of the proof, keeping in mind the outline given in the beginning of this section.
Section~\ref{subsec:nbcc} corresponds to the easy part of the kernel, as we invoke the classical machinery to bound the number of connected components of $R$ by a polynomial in the parameter~$|X|$. We employ the notions of \emph{conflict} and \emph{chunk} that were introduced in~\cite{JansenB13} for analyzing how a choice of an independent set from the modulator~$X$ affects the number of additional vertices from~$R$ that can be added to the independent set. The fact that having bounded minimal blocking set size (in $R$) allows for an efficient reduction of the number of connected components of~$R$ (cf. Lemma~\ref{lemma:nbcc}) is implicit in previous work~\cite[\S 3.1]{JansenB13} \cite[Rule 3]{BougeretS18}; see~\cite[Thm.~3]{HolsKP19} for an explicit argument. We use the same kind of arguments in our Rules~\ref{rule:free} and~\ref{rule:chunkdegree}. While some earlier work~\cite{BougeretS18} uses an \emph{annotated} version of the problem, in this work we introduce the notion of \emph{almost-free} set (cf. Definition~\ref{def:almost-free}) to avoid having annotations. This allows us to work in a conceptually cleaner setting.

Let us now turn to Section~\ref{subsec:size}. When working with a modulator to bounded \emph{tree-depth}~\cite{BougeretS18}, once the number of connected components of $R$ is bounded by a polynomial in $|X|$, we can move a tree-depth decreasing vertex from each connected component into the modulator, and get a $(c-1)$-tree-depth-modulator $X'$, with $|X'|$ being still polynomial in $|X|$. This facilitates a recursion on $c$ presented in Section~\ref{sec:recursion}, immediately leading to a polynomial kernel. In our case, we cannot move an entire lowering tree (cf. Definition~\ref{def:loweringtree}) of a connected component of~$R$ into the modulator, as its size may be unbounded.
Thus, the main challenge to obtain the polynomial kernel is addressed in Section~\ref{subsec:size}, where we define new rules to shrink the size of lowering trees. Notice that shrinking these lowering trees generalizes the reduction rules for \textsc{Independent Set} parameterized by distance to a forest~\cite{JansenB13}, since for any component in~$R$ that is a tree, its unique lowering tree consists of the entire component. This explains why we cannot simply use the kernelization for the parameterization by feedback vertex set~\cite{JansenB13} as a black box, as it was the case in previous work~\cite{BougeretS18,FominS16,HolsK17,HolsKP19,MajumdarRR18}.

Let us now formally define the above notions.

\begin{definition}\label{def:conflict}
	For a graph~$G$ and disjoint vertex sets~$X', R' \subseteq V(G)$, we define the number of \emph{conflicts induced on~$R'$ by~$X'$} as
 $$
 \conf^G_{R'}(X') = \alpha(G[R'])-\alpha(G[R'\setminus N_G(X')]).
 $$
When $G$ is clear from the context, $\conf^G_{R'}(X')$ will be simply denoted by $\conf_{R'}(X')$.
\end{definition}
Intuitively,~$\conf_{R'}(X')$ measures how much smaller the independence number of~$G[R']$ becomes when one is forbidden from picking vertices that are adjacent in~$G$ to~$X'$. This allows us to reason about which subsets~$X'$ are viable candidates for occurring in a maximum independent set of~$G$.
\begin{definition}\label{def:almost-free}
Let $(G,X,k)$ be an input of \pb and~$R := G \setminus X$.
\begin{itemize}
\item  A \emph{chunk} is an independent set $X' \subseteq X$ of size at most $2^c$ in~$G$.
We denote by $\X$ the set of all chunks.
\item The \emph{degree} of a chunk $X'$, denoted~$d(X')$, is defined as the number of connected components~$R'$ of~$R$ for which~$\conf_{R'}(X') \neq 0$.
\item A set $Z \subseteq V(R)$ is \emph{free} if~$\conf_{Z}(X')=0$ for all~$X' \in \X$.
\item A set $Z \subseteq V(R)$ is \emph{$x$-almost-free} for~$x \in \mathbb{N}$, if for any $X' \in \X$ such that $\conf_{Z}(X') \neq 0$, it holds that~$\conf_R(X') \ge x$.
\end{itemize}
\end{definition}

Slight abusing notation, for a subgraph~$R'$ of~$R$ we will sometimes write~$\conf_{R'}(X')$ as a shorthand for~$\conf_{V(R')}(X')$.

Note the subtle difference in the subscripts of~$\conf$ for the definition of almost-free. Intuitively, if~$Z$ is $x$-almost-free, then any chunk that makes a conflict on the subgraph induced by~$Z$, will make at least~$x$ conflicts on the entire bounded bridge-depth graph~$R$. For~$x \geq |X|$, this will allow us to infer that there is a maximum independent set of~$G$ that does not contain any chunk that makes a conflict on~$Z$.

\begin{lemma}\label{lemma:conflictandcopoly}
Let $(G,X,k)$ be an instance of \pb. There is a polynomial-time algorithm to compute the function $\conf$ for any~$R' \subseteq V(R)$, to compute the degree~$d$ of a chunk, and to decide if a subset $Z \subseteq V(R)$ is free or $x$-almost free.
\end{lemma}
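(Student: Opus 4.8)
The plan is to observe that every quantity in the statement reduces to a small number of independent-set computations in induced subgraphs of~$R$, and that each such computation is polynomial-time solvable because~$R$ has bounded bridge-depth, hence bounded tree-width by Item~\ref{item:treewidth} of Proposition~\ref{proposition:bd}. First I would note that for any fixed~$R' \subseteq V(R)$, computing $\conf_{R'}(X') = \alpha(G[R']) - \alpha(G[R' \setminus N_G(X')])$ amounts to computing $\alpha$ on two induced subgraphs of~$R$; since $\tw(G[R']) \leq \tw(R) \leq \bd(R) \leq c = \Oh(1)$, Courcelle's theorem (or a direct dynamic-programming argument over a tree decomposition, computable in linear time since~$c$ is a constant, by Bodlaender's algorithm~\cite{Bodlaender96}) yields $\alpha$ in polynomial (indeed linear) time. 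This handles the first claim for any given~$R'$ passed as input.

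Next I would handle the degree of a chunk. Given a chunk~$X' \in \X$, its degree~$d(X')$ is by definition the number of connected components~$R'$ of~$R$ with~$\conf_{R'}(X') \neq 0$. Since the connected components of~$R$ can be enumerated in linear time and there are at most~$|V(R)|$ of them, it suffices to evaluate~$\conf_{R'}(X')$ for each component~$R'$ using the bounded-tree-width $\alpha$-computation above, and to count those for which the value is nonzero. This is a polynomial number of polynomial-time computations, hence polynomial overall.

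Finally I would address testing whether a given~$Z \subseteq V(R)$ is free, respectively $x$-almost-free. The subtlety here is that both definitions quantify over \emph{all} chunks~$X' \in \X$, and a priori there are $\Omega(|X|^{2^c})$ of them. The key observation that makes this polynomial is that we do not need to iterate over chunks at all: a set~$Z$ fails to be free exactly when some chunk~$X'$ satisfies $\alpha(G[Z]) > \alpha(G[Z \setminus N_G(X')])$, i.e.\ when forbidding the neighborhood in~$Z$ of some size-$\leq 2^c$ independent subset of~$X$ strictly decreases~$\alpha(G[Z])$; equivalently, $Z$ is \emph{not} free iff there is an inclusion-minimal blocking set~$B$ of~$G[Z]$ with $|B| \leq 2^{\bd(R)} \leq 2^c$ (by Theorem~\ref{theorem:blockingset:size}) such that every vertex of~$B$ has a neighbor among~$X$ forming an independent set of size~$\leq 2^c$ — more simply, one enumerates all vertex subsets~$B \subseteq V(Z)$ of size at most~$2^c$, and for each checks both that $\alpha(G[Z \setminus B]) < \alpha(G[Z])$ and that there exists an independent set $X' \subseteq X$ of size at most~$2^c$ with $B \subseteq N_G(X')$; the latter is itself a small covering question solvable in $|X|^{\Oh(2^c)}$ time, which is polynomial for constant~$c$. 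Since Theorem~\ref{theorem:blockingset:size} guarantees that if~$Z$ is not free then a \emph{minimal} blocking set of size at most~$2^c$ exists, and such a minimal blocking set, being contained in a single maximum independent set's complement structure, is in fact realizable as $N_G(X') \cap V(Z)$ for a suitable chunk whenever any chunk realizes a conflict at all, this enumeration is complete. The $x$-almost-free test is the same enumeration, except that whenever a conflicting~$B$ (with its witnessing chunk~$X'$) is found, one additionally computes $\conf_R(X')$ via the component-wise $\alpha$-computations above and checks it is at least~$x$; $Z$ is $x$-almost-free iff every conflicting chunk found this way passes the test.

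The main obstacle is the last part: justifying that the quantification over \emph{all} chunks can be replaced by an enumeration over bounded-size vertex subsets of~$Z$ together with a bounded-size "witness" search in~$X$, which is precisely where Theorem~\ref{theorem:blockingset:size} is essential — it bounds by~$2^c$ the size of a minimal blocking set of~$G[Z]$, so only $\Oh(|V(Z)|^{2^c})$ candidate blocking sets need to be examined, and only chunks "aligned" with such a blocking set can witness non-freeness. Everything else is a routine application of fixed-tree-width dynamic programming. I would therefore structure the write-up as: (i) recall $\tw(R) \leq c$ and hence polynomial-time~$\alpha$ on induced subgraphs; (ii) derive the $\conf$ and degree claims immediately; (iii) use Theorem~\ref{theorem:blockingset:size} to bound the search space for the free / $x$-almost-free tests and conclude.
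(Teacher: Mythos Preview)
Your treatment of parts (i) and (ii) --- computing~$\conf_{R'}(X')$ via two bounded-tree-width \IS computations, and computing~$d(X')$ by iterating over the components of~$R$ --- matches the paper exactly.

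The divergence is in part (iii), and it stems from a misreading of what counts as polynomial. You write that there are ``$\Omega(|X|^{2^c})$'' chunks and treat this as an obstacle to be circumvented. But~$c$ is a \emph{fixed constant} baked into the problem \pb, so~$|X|^{2^c}$ is a polynomial in~$|X|$ (and hence in the input size). The paper's proof therefore does the obvious thing: it simply enumerates all~$\Oh(|X|^{2^c})$ chunks and, for each one, evaluates~$\conf_Z(X')$ and~$\conf_R(X')$ using the bounded-tree-width $\alpha$-computation. Freeness and $x$-almost-freeness are then checked directly against the definition, chunk by chunk. That is the entire argument.

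Your detour through enumerating candidate blocking sets~$B \subseteq Z$ of size at most~$2^c$ and invoking Theorem~\ref{theorem:blockingset:size} is not needed, and it does not actually avoid the enumeration you are worried about: your own step ``there exists an independent set~$X' \subseteq X$ of size at most~$2^c$ with~$B \subseteq N_G(X')$ \ldots\ solvable in~$|X|^{\Oh(2^c)}$ time'' is precisely an enumeration over chunks. Moreover, for the $x$-almost-free test your phrasing ``whenever a conflicting~$B$ (with its witnessing chunk~$X'$) is found'' is ambiguous about whether you check \emph{all} chunks~$X'$ covering~$B$ or just one; the definition requires checking every conflicting chunk, so you would have to enumerate them all anyway. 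Once you accept that enumerating chunks is polynomial, all of this collapses to the paper's two-line argument.
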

\begin{proof}
For any~$R' \subseteq V(R)$ we have~$\tw(G[R']) \leq \bd(G[R']) \leq c$, by Item~\ref{item:treewidth} of Proposition~\ref{proposition:bd}. Since \IS is fixed-parameter tractable parameterized by treewidth (cf.~\cite[\S 7.3.1]{CyganFKLMPPS15}), and a tree decomposition of constant width~$c$ can be computed in linear time~\cite{Bodlaender96}, this means independence numbers of subgraphs of~$G \setminus X$ can be computed in polynomial time. Since the number of potential chunks is polynomial in~$|X|$ since~$c$ is a constant, this allows all mentioned quantities to be efficiently computed.
\end{proof}

The following lemma shows why having small blocking sets is useful to characterize the interaction between maximum independent sets in $R$ and chunks.

\begin{lemma}\label{lemma:smallkiller}
Let $(G,X,k)$ be an instance of \pb and let $R' \subseteq V(R)$.
For every independent set $S_X \subseteq X$ such that $\conf_{R'}(S_X) \neq 0$ there exists a chunk $X' \in \X$, with $X' \subseteq S_X$, such that $\conf_{R'}(X') \neq 0$.
\end{lemma}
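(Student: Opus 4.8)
The statement asserts that whenever a whole independent set $S_X \subseteq X$ induces a conflict on $R'$ (i.e.\ picking vertices of $R'$ adjacent to $S_X$ reduces the independence number of $G[R']$), this conflict is already ``witnessed'' by a small independent subset of $S_X$ --- one of size at most $2^c$, hence a chunk. The plan is to isolate a maximum independent set of $G[R']$, compare it against one in $G[R' \setminus N_G(S_X)]$, and use Theorem~\ref{theorem:blockingset:size} on the forbidden neighborhood to extract a small ``blocking'' portion.

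First I would set $R'' := R' \setminus N_G(S_X)$ and pick a maximum independent set $I$ of $G[R']$; by assumption $\alpha(G[R'']) = \alpha(G[R']) - \conf_{R'}(S_X) < \alpha(G[R'])$, so $Y := N_G(S_X) \cap R'$ is a blocking set in the graph $G[R']$. Now $\bd(G[R']) \le \bd(R) \le c$, so by Theorem~\ref{theorem:blockingset:size} there is a \emph{minimal} blocking set $Y' \subseteq Y$ in $G[R']$ with $|Y'| \le 2^{\bd(G[R'])} \le 2^c$. The key observation is that each vertex $y \in Y'$ lies in $N_G(S_X)$, so we can choose for each such $y$ a vertex $x_y \in S_X$ with $\{x_y,y\} \in E(G)$; set $X' := \{x_y : y \in Y'\}$. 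Then $X' \subseteq S_X$ is independent (as a subset of the independent set $S_X$), has size $|X'| \le |Y'| \le 2^c$, hence is a chunk, and it remains to check $\conf_{R'}(X') \neq 0$.

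For that final check I would argue $N_G(X') \cap R' \supseteq Y'$, so $G[R' \setminus N_G(X')]$ is an induced subgraph of $G[R' \setminus Y']$, and since $Y'$ is a blocking set of $G[R']$ we get $\alpha(G[R' \setminus N_G(X')]) \le \alpha(G[R' \setminus Y']) < \alpha(G[R'])$, i.e.\ $\conf_{R'}(X') \ge 1$. (One should make sure the removals are consistent: removing $N_G(X') \cap R'$ from $R'$ only deletes more vertices than removing $Y'$, so the independence number can only go down, which is all we need.)

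The only mild subtlety --- the step I'd expect to need slight care --- is the applicability of Theorem~\ref{theorem:blockingset:size}: it is stated for a graph and a blocking set in it, so one must apply it to the graph $G[R']$ with blocking set $Y = N_G(S_X) \cap V(R')$, and invoke $\bd(G[R']) \le c$, which follows since $G[R']$ is a subgraph of $G[R] = G \setminus X$ and bridge-depth is minor-monotone (Item~\ref{minorclosed} of Proposition~\ref{proposition:bd}, together with the fact that a subgraph is a minor). Everything else is a short bookkeeping argument; no combinatorial difficulty arises beyond correctly tracking which neighborhoods are being deleted.
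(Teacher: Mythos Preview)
Your proof is correct and follows essentially the same approach as the paper: both recognize that $N_G(S_X)\cap R'$ is a blocking set in $G[R']$, invoke Theorem~\ref{theorem:blockingset:size} (using $\bd(G[R'])\le c$) to extract a small blocking subset, and then pick one neighbor in $S_X$ per vertex of that subset to form the chunk $X'$. Your write-up is in fact slightly more explicit than the paper's in verifying $\conf_{R'}(X')\neq 0$; the only cosmetic slip is calling $Y'$ \emph{minimal} (Theorem~\ref{theorem:blockingset:size} only guarantees a small blocking set, not a minimal one), but you never use minimality so this is harmless.
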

\begin{proof}
Let $Z = N_G(S_X) \cap R'$. The fact that $\conf_{R'}(S_X) \neq 0$ implies that $Z$ is a blocking set of $G[R']$.
As $\bd(R') \le c$, which implies that $\mbs(R') \le 2^c$ by Theorem~\ref{theorem:blockingset:size}, there exists $Z' \subseteq Z$ with $|Z'| \le 2^c$ such that $Z'$ is a blocking set of $R'$.
Thus, with any $z \in Z'$ we associate a vertex $v_z \in S_X$ such that $v_z$ is adjacent to $z$, and we get that $X' := \{v_z \mid z \in Z'\}$ is the desired chunk.
\end{proof}

\begin{corollary}\label{corollary:free}
Let $(G,X,k)$ be an instance of \pb. A set $Z \subseteq V(R)$ is free if and only if for any independent set $S \subseteq X$, $\conf_Z(S)=0$ (or equivalently $\alpha(G[Z \setminus N_G(S)])=\alpha(G[Z])$).
\end{corollary}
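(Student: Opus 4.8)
The plan is to prove the two implications separately, noting that almost all of the work has already been done in Lemma~\ref{lemma:smallkiller}.

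The backward implication is immediate from the definitions. If $\conf_Z(S)=0$ for every independent set $S \subseteq X$, then in particular $\conf_Z(X')=0$ for every chunk $X' \in \X$, since by definition a chunk is an independent set $X' \subseteq X$ (of size at most $2^c$). Hence $Z$ is free.

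For the forward implication, I would argue by contradiction. Suppose $Z \subseteq V(R)$ is free, but there is an independent set $S \subseteq X$ with $\conf_Z(S) \neq 0$. Since $Z \subseteq V(R)$, we may apply Lemma~\ref{lemma:smallkiller} with $R' := Z$ and $S_X := S$: it yields a chunk $X' \in \X$ with $X' \subseteq S$ and $\conf_Z(X') \neq 0$. This contradicts the assumption that $Z$ is free, which requires $\conf_Z(X')=0$ for all chunks $X'$. Finally, the parenthetical reformulation is simply the definition $\conf_Z(S) = \alpha(G[Z]) - \alpha(G[Z \setminus N_G(S)])$ unwound: $\conf_Z(S)=0$ holds if and only if $\alpha(G[Z \setminus N_G(S)]) = \alpha(G[Z])$, because $Z \setminus N_G(S)$ induces a subgraph of $G[Z]$, so its independence number can never exceed $\alpha(G[Z])$.

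I do not expect any genuine obstacle here: the statement is essentially a repackaging of Lemma~\ref{lemma:smallkiller}, whose proof in turn relies on the bound $\mbs(R') \le 2^c$ from Theorem~\ref{theorem:blockingset:size} to shrink an arbitrary blocking set $N_G(S) \cap Z$ down to one of size at most $2^c$, and on the bound $\bd(R') \le c$ which holds because $R'$ is a subgraph of $R = G \setminus X$. The only thing to be careful about is invoking Lemma~\ref{lemma:smallkiller} with the correct choice $R' = Z$, which is legitimate since $Z \subseteq V(R)$.
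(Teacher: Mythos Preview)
Your proof is correct and matches the paper's intended approach: the paper states this as a corollary of Lemma~\ref{lemma:smallkiller} without further proof, and your argument spells out exactly the two implications one would expect, invoking Lemma~\ref{lemma:smallkiller} with $R' = Z$ for the nontrivial direction.
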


In Algorithm~\ref{algo:kernel} we present the pseudo-code of the kernelization algorithm. The reduction rules employed by the algorithm will be presented in the subsequent sections.

\SetKwInput{KwData}{Input}
\SetKwInput{KwResult}{Output}


\medskip
\begin{algorithm}[!ht]
\SetEndCharOfAlgoLine{}
 \KwData{an input $(G,X,k,c)$ of \pb}
 \KwResult{an equivalent instance of size polynomial (for fixed $c$) in $|X|$ (see Theorem~\ref{thm:kernel:is})}

	\vspace{.2cm}

 \If{$c=0$}{
return $G$
}

 If Rule 1 can be applied to $(G,X,k)$, apply it and restart from Line 1

 If Rule 2 can be applied to $(G,X,k)$, apply it and restart from Line 1

 \tcc*{$\cc(G[R])$ is now bounded according to Lemma~\ref{lemma:nbcc}}

 Define $X_1 = \emptyset$

\For{each connected component $R'$ of $R$}{
    Compute a lowering tree $T$ \tcc*{done in polynomial time by Proposition~\ref{prop:computingTreeDec}} \label{algo:T}

    Compute a longest path $T'$ in $T$ \label{algo:T'}

    If Meta-Rule~\ref{rule:*A} can be applied to $((G,X,k), T')$, apply it and restart from Line 1

    If Meta-Rule~\ref{rule:B2B2} can be applied to $((G,X,k), T')$, apply it and restart from Line 1 \label{algo:diambounded}

    \tcc*{$T$ now has bounded diameter $\diam(T)$ according to Lemma~\ref{lemma:diameter}}

    If Meta-Rule~\ref{rule:deg} can be applied to $((G,X,k), T)$, apply it and restart from Line 1 \label{algo:degreebounded}

   \tcc*{$T$ now has bounded max.\ degree $\Delta(T)$ according to Lemma~\ref{obs:degree}}

   \tcc*{It now remains to bound the number of leaves of $T$}

   Define $T_1$ by removing from $T$ any $A$-leaf that has a $B$-parent

   \tcc*{$T_1$ has no $A$-leaf with $B$-parent, $\Delta(T_1) \le \Delta(T)$ and $\diam(T_1) \le \diam(T)$ according to Lemma~\ref{lemma:absorb}}




    If Meta-Rule~\ref{rule:*A} can be applied to $((G,X,k), T_1)$, apply it and restart from Line 1 \label{algo:n*abounded}

   \tcc*{$n_{{\sf *A}}(T_1)$ is now bounded according to Lemma~\ref{lemma:n*A}}

    If Meta-Rule~\ref{rule:Bleaf} can be applied to $((G,X,k), T_1)$, apply it and restart from Line 1 \label{algo:nBBBbounded}

   \tcc*{$n_{{\sf BBB}}(T_1)$ is now bounded according to Lemma~\ref{lemma:BBBleaf}}

    If Meta-Rule~\ref{rule:BB2leaf} can be applied to $((G,X,k), T_1)$, apply it and restart from Line 1 \label{algo:nBB2bounded}

    \tcc*{$n_{{\sf BB2}}(T_1)$ is now bounded according to Lemma~\ref{lemma:BB2leaf}}

    \tcc*{$|V(T)|$  is now bounded according to Lemma~\ref{lemma:Tbounded}}

   $X_1 : = X_1 \cup V(T)$
 }

 Define $X_2 = X \cup X_1$ \label{algo:end}

 \tcc*{$|X_2|$ is bounded and is a $(c-1)$-\bd-modulator}

 Return Algorithm~\ref{algo:kernel}$(G,X_2,k,c-1)$

 \caption{A polynomial kernel for \pb}
\label{algo:kernel}
\end{algorithm}

\subsection{Bounding the number of connected components}\label{subsec:nbcc}


Let us first define some rules that will be used in Algorithm~\ref{algo:kernel}. In these rules, we let $k'$ denote the desired independent set size of the resulting instance.

\begin{ruleN}\label{rule:free}
Let $(G,X,k)$ be an input of \pb.
If there exists a connected component $R'$ of $R$ such that $V(R')$ is free, then delete $V(R')$ from~$G$ and define $k'=k- \alpha(R')$.
\end{ruleN}
\begin{ruleproof}
Let $G'$ be the graph obtained after applying the rule.
We just  prove  that $\alpha(G') \ge k'$ implies $\alpha(G) \ge k$, as the other implication is straightforward.
Let $S'$ be an independent set of $G'$ with $|S'| \ge k'$.
Recall that for any subset $V'$ we use $S'_{V'}$ to denote $S' \cap V'$.
As $V(R')$ is free, according to Corollary~\ref{corollary:free} we get that $\conf_{R'}(S'_X)=0$, implying that there exists a maximum independent set
$Z$ of $R'$ such that $S' \cup Z$ is independent in $G$, implying the desired inequality.
\end{ruleproof}

\begin{ruleN}\label{rule:chunkdegree}
Let $(G,X,k)$ be an input of \pb.
If there exists a connected component $R'$ of $R$ such that for any chunk $X'$ with~$\conf^G_{R'}(X') \neq 0$ we have that $d(X') \geq |X|+1$, then delete all edges from $X$ to $R'$ and define $k'=k$.
\end{ruleN}
\begin{ruleproof}
Let~$G'$ be the graph obtained by applying the rule; then~$\alpha(G) \geq k$ trivially implies~$\alpha(G') \geq k$. Let us prove that $\alpha(G') \ge k$ implies that  $\alpha(G) \ge k$.
Let $S'$ be an independent set of $G'$ of size at least $k$.
If $\conf^G_{R'}(S'_X) = 0$ then there exists a maximum independent set $S^*_{R'}$ of $G[R']$ such that $S'_X \cup S^*_{R'}$ is still independent in $G$.
Thus, in this case we can replace $S'_{R'}$ by $S^*_{R'}$ in $S'$ and get an independent set of $G$ of size at least $|S'|$.
Otherwise, $\conf^G_{R'}(S'_X) \neq 0$ implies, by Lemma~\ref{lemma:smallkiller}, that there exists a chunk $X' \subseteq S'_X$ such that
$\conf^G_{R'}(X') \neq 0$. Since the rule applied, we know that $d(X') \geq |X|+1$, implying that $\conf_{R \setminus R'}(S'_X) \ge \conf_{R \setminus R'}(X') \ge d(X')-1 \ge |X|$.
Thus, consider any maximum independent set $S^*_{R \setminus R'}$ of $R \setminus R'$ and any maximum independent set $S^*_{R'}$ of $R'$.
We have $|S^*_{R \setminus R'}| \ge |S'_X \cup S'_{R\setminus R'}|$, implying that $S^*_{R \setminus R'} \cup S^*_{R'}$ is an independent set of $G$ of size at least $|S'|$.
\end{ruleproof}

Notice that applying Rule~\ref{rule:chunkdegree} may trigger the application of Rule~\ref{rule:free}.
Observe also that applying Rule~\ref{rule:chunkdegree} does not guarantee that $d(X') \le |X|$ for any chunk $X'$.
However, as proved in the next lemma, applying Rules~\ref{rule:free} and~\ref{rule:chunkdegree} exhaustively is sufficient to bound the number of connected components.



\begin{lemma}\label{lemma:nbcc}
Let $(G,X,k)$ be an input of \pb.
Let $(G',X',k')$ be the input of \pb obtained after applying exhaustively Rule~\ref{rule:free} and Rule~\ref{rule:chunkdegree}.
Recall that $R = V(G)\setminus X$, and let $\tilde{R}=V(G')\setminus X$.
Then,
\begin{itemize}
  \item $X'=X$ and $k' \le k$,
  \item $G'$ is a subgraph of $G$ and $\tilde{R} \subseteq R$, and
  \item $\cc(G'[\tilde{R}]) \le |\X|\cdot |X|$.
\end{itemize}
\end{lemma}
\begin{proof}
  The claims in the first two items hold as, given an input $(G,X,k)$ of \pb, Rule~\ref{rule:free} and Rule~\ref{rule:chunkdegree} either delete
  connected components of $R$, or delete edges between $X$ and $R$.
  As $X'=X$, the set of chunks in $(G',X,k)$ is also $\X$.
  Let us now turn to the last claim.
Let $R'$ be a connected component of $G'$.
As Rule~\ref{rule:free} cannot be applied, by Corollary~\ref{corollary:free} 
we have that $\{X' \in \X \mid \conf^{G'}_{R'}(X') \neq 0 \} \neq \emptyset$.
As Rule~\ref{rule:chunkdegree} cannot be applied, we can associate with $R'$ a chunk $X^{R'} \in \X$ such that $\conf^{G'}_{R'}(X^{R'}) \neq 0$ and $d(X^{R'}) \le |X|$. Thus, if $\cc(G'[\tilde{R}]) > |\X| \cdot |X|$, by the pigeonhole principle more than $|X|$ different connected components will be associated with the same chunk $X'$, contradicting the fact that $d(X') \le |X|$.
\end{proof}

We remark that by using a more sophisticated marking scheme for connected components based on maximum matchings in an auxiliary bipartite graph~\cite[Thm 3]{HolsKP19}, it is possible to reduce the number of connected components to~$|\mathcal{X}|$. We have opted for a simpler approach since it suffices to obtain polynomial kernels that complete the dichotomy.

\subsection{Bounding the size of a lowering tree of each connected component}\label{subsec:size}
Before formally defining the tools we need, let us explain the common ideas behind all the Meta-Rules (except Meta-Rule~\ref{rule:deg}) of Sections~\ref{subsubsec:diam} and~\ref{subsubsec:degleaves}, which
are designed to shrink a lowering tree $T$ of a connected component $R'$ of $R$.
Roughly speaking, a $T$-conflict structure $C$ is formed by a subset of vertices of~$T$ together with the connected components of~$R - E(T)$ that they are contained in (see Definition~\ref{def:conflictstrucutre}), which has a simple structure that allows us to shrink~$T$ locally provided that $C$ does not interact too much with the modulator~$X$. The condition of not interacting too much is captured by the notion of $C$ being almost free in our formalism (more precisely $(|X|+\Delta)$-almost-free where $\Delta$ is a constant).
This explains why all Meta-Rules below (except Meta-Rule~\ref{rule:deg}) take the following form:
\begin{quote}
 ``If there is a $T$-conflict structure $C$ that is $(|X|+\Delta)$-almost-free, then reduce $T$ (by performing local modification around $C$).''
\end{quote}
Suppose now that such a rule does not apply anywhere on a lowering tree $T$ of a connected component $R'$ of $R$.
The idea to bound the size of $T$ is as follows. If by contradiction $T$ is very large, then we can find a partition $\P = \{V^T_i \mid i \in [|\P|]\}$ of $T$
(and thus a partition of $R'$ by defining $V^{R'}_i = \bigcup_{v \in V^T_i}H^T_v$ and $\P^{R'}=\{V^{R'}_i \mid i \in [|\P|]\}$, where $H^T_v$ is defined in Definition~\ref{def:pending}) such that many of the parts $V' \in V^{R'}_i$ are $T$-conflict structures.
As the Meta-Rule cannot be applied on any of these parts, it implies that none of them is $(|X|+\Delta)$-almost-free. Now we conclude by using the definition of almost-free as follows.
If a part $V'$ is not $(|X|+\Delta)$-almost-free, it means that $V'$ has a ``private'' chunk $X'_{V'} \in \X$ such that $\conf_{V'}(X'_{V'}) \neq 0$ and $\conf_{R'}(X'_{V'}) < |X|+\Delta$.
By a pigeonhole argument, if $T$ is too large, then there will exist a chunk $X'$ that is the private chunk of many (say $x \ge |X|+\Delta$) different $V'$.
However, and this is where the notion of \emph{$\alpha$-additive partition} (see Definition~\ref{def:additive-partition}) comes into play, as we will define $\P^{R'}$ such that it is an $\alpha$-additive partition,
such a chunk will have $\conf_{R'}(X') \ge x$, contradicting the fact that  $\conf_{R'}(X') < |X|+\Delta$.
The notion of \emph{types} (see Definition~\ref{def:pending}) also plays a crucial role in our approach.
Indeed, given a lowering tree $T$ (or any tree of bridges) of a connected component $R'$ of $R$, types allow us to capture in a simple way the condition we need in $T$ and in the relations
between $T$ and $R' \setminus V(T)$, while hiding the complex structure of $G[R' \setminus V(T)]$. For example, a conflict structure of type~$1$ only needs two adjacent (in $T$) vertices $u$ and $v$ where at least one has type $A$, regardless of the exact structure of $H^T_u$ and $H^T_v$.

\begin{definition}\label{def:pending}
Let $R$ be a graph, let $T$ be a tree of bridges of $R$, and let $R^* = R \setminus E(T)$.
For a vertex $v \in V(T)$, the \emph{pending component of $v$ in $T$}, denoted by $H^T_v$, or simply $H_v$ when clear from context, is the connected component of $R^*$ containing $v$. We call $v$ the \emph{root} of $H_v$.
We say that $v$ has
\begin{itemize}
\item \emph{type A in $T$} if $\alpha(R[H_v])=\alpha(R[H_v \setminus \{v\}])$, and
\item \emph{type B in $T$} if $\alpha(R[H_v])=\alpha(R[H_v \setminus \{v\}])+1$.
\end{itemize}
\end{definition}

Notice that for any distinct vertices $u,v \in V(T)$ we have $H_u \cap H_v = \emptyset$. Moreover, since the pending components arise by removing a tree of bridges, it follows that the only vertex of~$H_u$ that can have neighbors outside~$H_u$ is the root~$u$ itself.

When working with pending components, like in the following definition, we will sometimes refer to the vertex set of a component~$H_v$ simply as~$H_v$, if there is no risk of confusion.


\begin{definition}\label{def:conflictstrucutre}
Let $R$ be a graph and let $T$ be a tree of bridges of $R$.
We say that a subset of vertices $C \subseteq V(R)$ is a \emph{$T$-conflict structure}
\begin{itemize}
\item \emph{of type 1} if $C=H_{v_1} \cup H_{v_2}$ where $\{v_1,v_2\} \in E(T)$, and $v_1$ or $v_2$ has type $A$ in $T$,
\item \emph{of type 2} if $C=H_{v_1} \cup H_{v_2}$ where
  \begin{itemize}
   \item there exists $u_1, u_2 \in V(T)$ such that $(u_2,v_1,v_2,u_1)$ is a path in $T$,
   \item $d_T(v_1)=d_T(v_2)=2$, and
   \item $v_1$ and $v_2$ have type $B$ in $T$,
   \end{itemize}
\item \emph{of type 3} if $C=H_u$ where $u$ is a leaf of $T$ and has type $B$,
\item \emph{of type 4} if $C=H_{v_1} \cup H_{v_2}$ where
  \begin{itemize}
   \item there exists $u \in V(T)$ such that $(v_1,v_2,u)$ is a path in $T$,
   \item $v_1$ is a leaf in $T$ and $d_T(v_2)=2$, and
   \item $v_1$ and $v_2$ have type $B$ in $T$.
   \end{itemize}
\end{itemize}
\end{definition}

The following lemma holds by the same arguments used in the proof of Lemma~\ref{lemma:conflictandcopoly}.

\begin{lemma}\label{lemma:typesandcopoly}
Let $(G,X,k)$ be an instance of \pb and $T$ be a tree of bridges of a connected component $R'$ of $R$.
Computing the type of a vertex $v \in V(T)$, as well as deciding if there exists a $T$-conflict structure $C$ of any fixed type, can done in polynomial time.
\end{lemma}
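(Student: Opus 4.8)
The plan is to reduce every quantity mentioned in the statement to an independence-number computation on a bounded-treewidth graph, and then invoke the same algorithmic toolkit as in Lemma~\ref{lemma:conflictandcopoly}. Recall that for any subset~$R'' \subseteq V(R)$ we have~$\tw(G[R'']) \leq \bd(G[R'']) \leq \bd(R') \leq c$ by Item~\ref{item:treewidth} of Proposition~\ref{proposition:bd} and the fact that bridge-depth is minor-closed (hence monotone under taking subgraphs); since a tree decomposition of constant width~$c$ can be computed in linear time~\cite{Bodlaender96} and \IS is fixed-parameter tractable parameterized by treewidth (cf.~\cite[\S 7.3.1]{CyganFKLMPPS15}), we can compute~$\alpha(G[R''])$ for any such~$R''$ in polynomial time. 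In particular the pending components~$H^T_v$ and their vertex-deletions~$H^T_v \setminus \{v\}$ are subgraphs of~$R$, so their independence numbers are computable in polynomial time.

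\textbf{Computing the type of a vertex.} First I would observe that deciding whether a vertex~$v \in V(T)$ has type~$A$ or type~$B$ in~$T$ amounts, by Definition~\ref{def:pending}, to comparing~$\alpha(R[H^T_v])$ with~$\alpha(R[H^T_v \setminus \{v\}])$. Both pending components are obtained from~$R$ by first deleting the edges of~$T$ (which can be done in linear time once~$T$ is given) and then restricting to a connected component; both are subgraphs of~$R$, hence have treewidth at most~$c$. Therefore both independence numbers can be computed in polynomial time, and the type of~$v$ is determined by whether the difference is~$0$ (type~$A$) or~$1$ (type~$B$). Note that the difference is always~$0$ or~$1$, since deleting a single vertex changes the independence number by at most one, so these two cases are exhaustive and the type is well-defined.

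\textbf{Deciding existence of a conflict structure of a fixed type.} Next I would handle each of the four types of~$T$-conflict structure from Definition~\ref{def:conflictstrucutre} by a brute-force search over the relevant bounded-size configuration in~$T$. For type~1 one iterates over all edges~$\{v_1, v_2\} \in E(T)$ (at most~$|V(T)| - 1$ of them) and checks, using the type computation above, whether~$v_1$ or~$v_2$ has type~$A$; for type~2 one iterates over all length-three paths~$(u_2, v_1, v_2, u_1)$ in~$T$, checks that~$d_T(v_1) = d_T(v_2) = 2$ (degrees in a tree are read off directly), and checks that both~$v_1$ and~$v_2$ have type~$B$; for type~3 one iterates over all leaves~$u$ of~$T$ and checks whether~$u$ has type~$B$; for type~4 one iterates over all length-two paths~$(v_1, v_2, u)$ with~$v_1$ a leaf and~$d_T(v_2) = 2$ and both of type~$B$. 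In every case the number of candidate configurations is polynomial in~$|V(T)| \leq |V(R)|$, and for each candidate the verification requires only a constant number of type computations plus reading degrees in~$T$, each of which is polynomial-time by the preceding paragraph. Hence for each fixed type we obtain a polynomial-time decision procedure.

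\textbf{Main obstacle.} The only point that requires care is the uniform treewidth bound: one must make sure that \emph{every} graph on which an independence number is queried — the pending component~$H^T_v$, its one-vertex deletion, and~$R'$ itself — is genuinely a subgraph of the input graph, so that Item~\ref{item:treewidth} of Proposition~\ref{proposition:bd} applies and the treewidth stays bounded by the constant~$c$. Since~$T$ is a tree of bridges of the connected component~$R'$ of~$R$, and pending components are defined by deleting edges of~$T$ (not vertices), all of these are indeed induced subgraphs of~$R \subseteq G$, so this obstacle is dispatched immediately. Everything else is bookkeeping over polynomially many bounded-size configurations, exactly as in the proof of Lemma~\ref{lemma:conflictandcopoly}.
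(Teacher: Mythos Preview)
Your proposal is correct and follows essentially the same approach as the paper: the paper's proof is the single sentence ``The following Lemma holds for the same reason as Lemma~\ref{lemma:conflictandcopoly},'' and you have simply unpacked that reason in detail, reducing type computations and conflict-structure detection to independence-number queries on bounded-treewidth subgraphs of~$R$ plus polynomial enumeration over~$T$.
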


The following definition will be useful to argue that a chunk makes a large number of conflicts on~$R$, which will in turn allow us to argue that a reduction rule must be applicable if the instance is too large compared to the parameter. Namely, we will use Lemma~\ref{lemma:partition} in the proofs of Lemmas~\ref{lemma:diameter},~\ref{lemma:n*A}, and~\ref{lemma:BBBleaf}.

\begin{definition}\label{def:additive-partition}
An \emph{$\alpha$-additive partition $\P$} of a graph $R$ is a partition $\P=\{V_i \mid i\in[|\P|]\}$ of $V(R)$ such that $\alpha(R) = \sum_{i \in [|\P|]}\alpha(R[V_i])$.
\end{definition}

Observe that if~$\P$ is an $\alpha$-additive partition of the graph~$R = G \setminus X$ of an instance~$(G,X,k)$, then the number of conflicts induced on~$R$ by~$X'$ is at least as large as the number of blocks~$V' \in \P$ of the partition for which~$\conf_{V'}(X') \neq 0$.

\begin{lemma}\label{lemma:partition}
Let $(G,X,k)$ be an input of \pb, $R'$ be a connected component of $R$, and $\P$ be an $\alpha$-additive partition of $G[R']$.
Suppose $\P = \P_1 \cup \P_2$, where each $V' \in \P_1$ is not $x$-almost-free.
Then, $|\P_1| < |\X| \cdot x$.
\end{lemma}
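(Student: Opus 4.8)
The plan is a proof by contradiction. Assume that $|\P_1| \geq |\X| \cdot x$; I will show that this forces some chunk to induce at least $x$ conflicts on $R$, contradicting the hypothesis that one of the blocks it is assigned to is not $x$-almost-free.

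First I would unfold the definition of ``not $x$-almost-free'' for each block: for every $V' \in \P_1$ there is a chunk $X'_{V'} \in \X$ with $\conf_{V'}(X'_{V'}) \neq 0$ and $\conf_R(X'_{V'}) < x$. This assigns to each of the $|\P_1|$ blocks of $\P_1$ a ``private'' chunk from the finite set $\X$. Since $|\P_1| \geq |\X| \cdot x$, the pigeonhole principle produces a single chunk $X^* \in \X$ that is the private chunk of at least $x$ distinct blocks $V'_1, \dots, V'_x \in \P_1$; in particular $\conf_{V'_j}(X^*) \geq 1$ for each $j \in [x]$, and $\conf_R(X^*) < x$.

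Next I would exploit that $\P$ is an $\alpha$-additive partition of $G[R']$. By definition $\alpha(G[R']) = \sum_{V' \in \P}\alpha(G[V'])$, whereas restricting any independent set of $G[R' \setminus N_G(X^*)]$ to each block gives $\alpha(G[R' \setminus N_G(X^*)]) \leq \sum_{V' \in \P}\alpha(G[V' \setminus N_G(X^*)])$. Subtracting these and using that $\conf_{V'}(X^*) \geq 0$ for every block while $\conf_{V'_j}(X^*) \geq 1$ for the $x$ chosen blocks yields $\conf_{R'}(X^*) \geq \sum_{V' \in \P}\conf_{V'}(X^*) \geq x$; this is exactly the observation stated just after the definition of $\alpha$-additive partition, applied to the component $R'$. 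Finally, since $R'$ is a connected component of $R$ there are no edges between $R'$ and $R \setminus R'$, so $\conf_R(X^*) = \conf_{R'}(X^*) + \conf_{R \setminus R'}(X^*) \geq \conf_{R'}(X^*) \geq x$, contradicting $\conf_R(X^*) < x$. Hence $|\P_1| < |\X| \cdot x$.

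There is essentially no real obstacle here; the only step needing a little care is the inequality $\conf_{R'}(X^*) \geq \sum_{V' \in \P}\conf_{V'}(X^*)$, which genuinely uses $\alpha$-additivity of $\P$ (the ``$\geq$'' on $\alpha(G[R'])$ would fail for an arbitrary partition) together with the trivial fact that $\alpha$ of a graph is at most the sum of $\alpha$ over the parts of any partition of its vertex set into induced subgraphs, and the nonnegativity of $\conf$.
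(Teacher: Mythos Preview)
Your proof is correct and follows essentially the same approach as the paper: assign a private chunk to each non-$x$-almost-free block, pigeonhole to find a chunk shared by at least $x$ blocks, and use $\alpha$-additivity to push the conflicts up to $R'$ and then to $R$. You spell out the $\alpha$-additivity step and the passage from $R'$ to $R$ in slightly more detail than the paper does, but the argument is the same.
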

\begin{proof}
For any $V' \in \P_1$, as $V'$ is not $x$-almost-free there exists $X^{V'} \in \X$ such that $\conf_{V'}(X^{V'}) \neq 0$ and $\conf_R(X^{V'}) < x$.
Suppose for a  contradiction that $|\P_1| \ge |\X| \cdot x$. By the pigeonhole principle, this implies that there exists $\P'_1 \subseteq \P_1$ with $|\P'_1| \ge x$ and $X' \in \X$ such that $\conf_{V'}(X') \neq 0$ for all $V' \in \P'_1$, while $\conf_R(X') < x$. As $\P$ is an $\alpha$-additive partition, this implies that $\conf_{R'}(X') \ge |\P'_1| \ge x$.
Finally, as $\conf_{R}(X') \ge \conf_{R'}(X')$ since~$R'$ is a connected component of~$R$, we get our contradiction.
\end{proof}

Finally, we need the following technical lemma. It will be used in the safeness proofs of our rules to say that
if we have a chunk $X' \in \X$ that induces many conflicts on~$R$ in the instance $(G,X,k)$ before the reduction, then~$X'$ also induces many conflicts in the instance $(G',X,k)$ obtained after the reduction.
\begin{lemma}\label{lemma:shift}
Let $(G,X,k)$ and $(G',X,k)$ be two inputs of \pb such that $G[X]=G'[X]$ (implying that $G$ and $G'$ have the same set of chunks $\X$).
Suppose that there exist two non-negative integers $\Delta_1, \Delta_2$ such that
\begin{enumerate}
\item $\alpha(G' \setminus X) \ge \alpha(G \setminus X)-\Delta_1$ and\label{cond:612:first}
\item for any $X' \in \X$ we have $\alpha((G \setminus X) \setminus N_G(X')) \ge \alpha((G' \setminus X) \setminus N_{G'}(X'))-\Delta_2$.\label{cond:612:second}
\end{enumerate}
Then, for any $X' \in \X$ it holds that $\conf^{G'}_{V(G') \setminus X}(X') \ge \conf^G_{V(G) \setminus X}(X')-\Delta_1-\Delta_2$.
\end{lemma}
\begin{proof}
Let $X' \in \X$. We have
\begin{align*}
  \alpha((G' \setminus X') \setminus N_{G'}(X')) & \leq \alpha((G \setminus X) \setminus N_G(X'))+\Delta_2  & \mbox{By \eqref{cond:612:second}} \\
  & = \alpha(G \setminus X)-\conf^G_{V(G) \setminus X}(X')+\Delta_2 & \mbox{Definition \ref{def:conflict}} \\ 
  & \le \alpha(G' \setminus X)+\Delta_1-\conf^G_{V(G) \setminus X}(X')+\Delta_2. & \mbox{By \eqref{cond:612:first}}
\end{align*}
The desired bound follows.
\end{proof}

\subsubsection{Bounding the diameter}\label{subsubsec:diam}

In the following we say that a rule is a \emph{meta-rule} if it takes as input, in addition to $(G,X,k)$, a tree of bridges that will be computed in the course of the kernelization algorithm (see Algorithm~\ref{algo:kernel}).
The two following meta-rules are designed to shrink the diameter of the lowering tree of a connected component of~$R$.
Whenever we refer to a pending component~$H_{v}$ in the statement of a meta-rule, these should be understood to be pending components for the graph~$R = G \setminus X$, so that a pending component for a tree~$T$ of bridges in~$R$ is a connected component of~$R \setminus E(T)$.

\begin{MetaruleN}\label{rule:*A}~\\
\textbf{Input:} An input $(G,X,k)$ of \pb and a tree of bridges $T$ of a connected component $R'$ of~$R$.\\
\textbf{Action: } If there exists a $T$-conflict structure of type 1, namely $C=H_{v_1} \cup H_{v_2}$ using the notation of Definition~\ref{def:conflictstrucutre}, such that $C$ is $(|X|+2)$-almost-free,
then remove edge $\{v_1,v_2\}$ and define $k'=k$.
\end{MetaruleN}

\begin{ruleproof}
Let $G'$ be the graph obtained by removing $\{v_1,v_2\}$. Let us only prove that $\alpha(G') \ge k' $ implies that $\alpha(G) \ge k$ as the other direction is straightforward.
Let $S'$ be an independent set of $G'$ with $|S'| \ge k$. If~$S'$ contains at most one endpoint of the removed edge~$\{v_1, v_2\}$, then it is an independent set in~$G$ and we are done. Suppose $\{v_1,v_2\} \subseteq S'$. We distinguish two cases.

\emph{Case 1}: $\conf_C(S'_X)=0$. In this case there exists an independent set $Z$ of $C$ such that $S'_X \cup Z$ is still an independent set in $G$, and
$|Z|=\alpha(G[C])$. Now observe that since at least one of~$v_1, v_2$ is of type~$A$, we have~$\alpha(G[C])=\alpha(G[H_{v_1}])+\alpha(G[H_{v_2}])$: if~$i^* \in [2]$ such that~$v_{i^*}$ is of type~$A$, then a maximum independent set of~$G[H_{v_{i^*}}]$ exists that does not use~$v_{i^*}$, and it combines with a maximum independent set of~$G[H_{v_{3-i^*}}]$ into a maximum independent set of~$G[C]$. This implies that $|Z| \ge |S'_C|$.
For each~$i \in [2]$, since the pending components arise by removing the edges of a tree of bridges from~$R = G \setminus X$, it follows that~$v_i$ is the only vertex of~$H_{v_i}$ that potentially has neighbors in~$R$ outside of~$H_{v_i}$.
Since $Z \cap \{v_1,v_2\} \subseteq S'_C \cap \{v_1,v_2\}$, we get that $Z \cup S'_{R' \setminus C}$ is also an independent set in $G$.
Thus, $Z \cup S'_X \cup S'_{R' \setminus C} \cup S'_{R \setminus R'}$ is an independent set in $G$ of size at least $k$.

\emph{Case 2}: $\conf_C(S'_X) \neq 0$. According to Lemma~\ref{lemma:smallkiller}, there exists $X' \in \X$ with $X' \subseteq S'_X$ such that $\conf_C(X') \neq 0$.
As $C$ is $(|X|+2)$-almost-free, this implies that $\conf^G_R(X') \ge |X|+2$.
Observe that $\alpha(G' \setminus X) \ge \alpha(G \setminus X)$, and that for any $X' \in \X$ we have $\alpha((G \setminus X) \setminus N_G(X')) \ge \alpha((G' \setminus X) \setminus N_G(X'))-1$.
Thus, we can apply Lemma~\ref{lemma:shift} with $\Delta_1=0$ and $\Delta_2=1$, and we get that $\conf^{G'}_{V(G') \setminus X}(X') \ge |X|+1$.
This implies that $|S'_{V(G') \setminus X}| \le \alpha(G' \setminus X)-(|X|+1) \le \alpha(G \setminus X) -|X|$, implying in turn that $\alpha(G[R]) \ge |S'|$.
\end{ruleproof}


\begin{MetaruleN}\label{rule:B2B2}~\\
\textbf{Input: } An input $(G,X,k)$ of \pb and  a tree of bridges $T$ of a connected component $R'$ of $R$.\\
\textbf{Action: } If there exists a $T$-conflict structure of type 2, namely $C=H_{v_1} \cup H_{v_2}$, with a path $(u_2,v_1,v_2,u_1)$ in $T$  using the notation of Definition~\ref{def:conflictstrucutre}, such that $C$ is $(|X|+1)$-almost-free, then identify $u_i$ and $v_i$ for each~$i \in [2]$, and denote by $w_i$ the obtained vertex. Finally, define $k'=k-1$.
\end{MetaruleN}
\begin{ruleproof}
  Let $G'$ be the graph obtained after applying the rule and
  let $U = \{u_1,u_2\}$.

  Let us first prove  that $\alpha(G) \ge k$ implies that $\alpha(G') \ge k'$.
  Let $S$ be an independent set of $G$ with $|S| \ge k$. Recall that~$S_X = S \cap X$. We distinguish several cases.

  \emph{Case 1}: $|S \cap U| = 0$. In this case, $S'=S \setminus \{v_1,v_2\}$ has size at least $k-1$ and is still
  an independent set in $G'$.

  \emph{Case 2}: $|S \cap U| = 1$. Let $\{u_i\} = S \cap U$. If $S$ also contains $v_i$ then define $S'=(S \setminus \{u_i,v_i\}) \cup \{w_i\}$.
  Then~$S'$ is still an independent set in $G'$ as, in particular, $N_{G'}(w_i) \cap S' = (N_G(u_i) \cup N_G(v_i)) \cap S = \emptyset$, as $S$ contains $\{u_i,v_i\}$.
  Otherwise ($S$ does not contain $v_i$), define $S' = S \setminus \{u_i\}$.
  \emph{Case 3}: $|S \cap U| = 2$. We distinguish two subcases.

  \emph{Case 3.1}:  $\conf_C(S_X)=0$. This implies that there exists an independent set $Z \subseteq C$ such that $|Z|=\alpha(G[C])$ and such that $S_X \cup Z$ is still an independent set in $G$.
  As $v_1$ and $v_2$ have type $B$, we get that $\alpha(G[C])=\alpha(G[H_{v_1}])+\alpha(G[H_{v_2}])-1$. Moreover, we have $|Z \cap \{v_1,v_2\}| = 1$.
  Suppose $Z \cap \{v_1,v_2\} = \{v_1\}$; the other case follows symmetrically.
	For~$i \in [2]$, recall that~$v_i$ is the only vertex of~$H_{v_i}$ that potentially has neighbors in~$R$ outside of~$H_{v_i}$.
	Hence we get that $Z \cup Z'$ is also an independent set in $G$ where $Z' = S_{R' \setminus C} \setminus \{u_2\}$.
  This implies that $\tilde{S}=Z \cup Z' \cup S_X \cup S_{R \setminus R'}$ is an independent set of $G$.
  Moreover, we get that $|\tilde{S}| \ge k$. Indeed, as $S$ contains $u_1$ and $u_2$, it does not contain $v_1$ nor $v_2$, implying that $|S_C| < \alpha(G[C])$ as
  both $v_i$ have type $B$. This implies that $|Z \cup Z'| \ge |S_C \cup S_{R'\setminus C}|$, leading to $|\tilde{S}| \ge k$.
  Finally, we define $S' = (\tilde{S}  \setminus \{u_1,v_1\}) \cup \{w_1\}$ and we get that $S'$ is an independent set in $G'$ of size at least $k-1$.

  \emph{Case 3.2}: $\conf_C(S_X)\neq 0$. According to Lemma~\ref{lemma:smallkiller}, there exists $X' \in \X$ with $X' \subseteq S_X$ such that $\conf_C(X') \neq 0$.
  As $C$ is $(|X|+1)$-almost-free, this implies that $\conf^G_R(X') \ge |X|+1$. This in turn implies that $\alpha(G[R]) \ge |S|+1$. As $\alpha(G[R']) \ge \alpha(G[R])-1$, we get the desired result.

\medskip

  Let us now prove that $\alpha(G') \ge k'$ implies that $\alpha(G) \ge k$.
  Let $W = \{w_1,w_2\}$.
  Let $S'$ be an independent set of $G'$ with $|S'| \ge k-1$.
  Suppose first that $|S' \cap W|=1$ and let~$w_i \in S' \cap W$. As $w_i$ is the result of identifying the nonadjacent vertices $u_i$ and $v_i$,
  we get that $(S' \setminus \{w_i\}) \cup \{u_i,v_i\}$ is an independent set of $G$ of size at least $k$.
  Since the identification has made~$w_1$ and~$w_2$ adjacent in~$G'$, in the remainder we have $|S' \cap W|=0$. Let $H'_{v_i}=H_{v_i} \setminus \{v_i\}$ for $i \in [2]$, and let $V'$ such that we get a partition $R' = C \cup \{u_1,u_2\} \cup V'$ that satisfies~$N_G(V') \cap C = \emptyset$.
  Let us partition $S' = S'_X \cup S'_{H'_{v_1}} \cup S'_{H'_{v_2}} \cup S'_{V'} \cup S'_{R \setminus R'}$. As $|S' \cap W|=0$ and as both $v_i$ are of type $B$,
  we get that $|S'_{H'_{v_1}} \cup S'_{H'_{v_2}}| \le \alpha(G[C])-1$. We distinguish two cases.

  \emph{Case 1}: $\conf_C(S'_X)=0$. This implies that there exists an independent set $Z$ of $G[C]$ such that $|Z|=\alpha(G[C])$ and
 $Z \cup S'_X$ is an independent set in $G$. Moreover, $N_G(V') \cap C = \emptyset$ implies that $Z \cup S'_{V'}$ is an independent set in $G$.
 This implies that $S'_X \cup Z \cup S'_{V'} \cup S'_{R \setminus R'}$ is an independent set in $G$ of size at least $k$.

  \emph{Case 2}: $\conf_C(S'_X)\neq 0$. According to Lemma~\ref{lemma:smallkiller}, there exists $X' \in \X$ with $X' \subseteq S'_X$ such that $\conf_C(X') \neq 0$.
As $C$ is $(|X|+1)$-almost-free, this implies that $\conf^G_R(X') \ge |X|+1$.
Observe that $\alpha(G' \setminus X) \ge \alpha(G \setminus X)-1$, and that for any $X' \in \X$, $\alpha((G \setminus X) \setminus N_G(X')) \ge \alpha((G' \setminus X) \setminus N_G(X'))$.
Thus, we can apply Lemma~\ref{lemma:shift} with $\Delta_1=1$ and $\Delta_2=0$, and we get that $\conf^{G'}_{V(G') \setminus X}(X') \ge |X|$.
This implies that $|S'_{V(G') \setminus X}| \le \alpha(G' \setminus X)-|X| \le \alpha(G \setminus X)-|X|$, implying in turn that $\alpha(G[R]) \ge |S'|$.
\end{ruleproof}

Notice that in Algorithm~\ref{algo:kernel} between Lines~\ref{algo:T} and~\ref{algo:diambounded}, we first compute a lowering tree $T$, but we try to apply
Meta-Rules~\ref{rule:*A} and~\ref{rule:B2B2} on a longest path $T'$ of $T$ (and not directly on $T$). Observe also, as depicted in Figure~\ref{fig:diam}, that types of vertices may change when
considered in $T$ or $T'$.
Let us now prove that if neither Meta-Rule~\ref{rule:*A} nor Meta-Rule~\ref{rule:B2B2} can be applied to a longest path $T'$ of a lowering tree $T$, then $T'$ has bounded length.

\begin{figure}[t]
\begin{center}
\includegraphics[width=\textwidth]{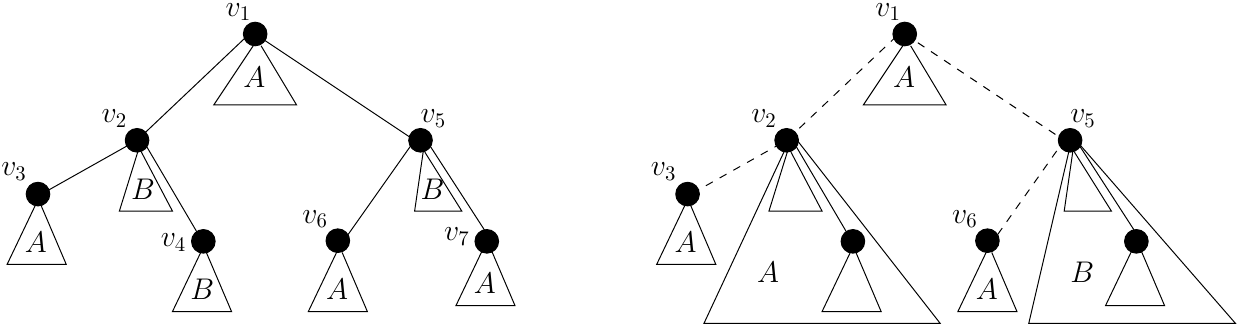}
\end{center}
\caption{On the left: example of a lowering tree $T$ (with $V(T)=\{v_i \mid i \in [8]\}$) computed in Line~\ref{algo:T} of Algorithm~\ref{algo:kernel}. On the right: a longest path $T'=(v_3,v_2,v_1,v_5,v_6)$ computed in Line~\ref{algo:T'}.
The triangle below each vertex represents its pending component, and the type of each vertex is indicated inside its triangle.
Notice that types of vertices in $T'$ may be different from types in $T$. For example, $H^{T'}_{v_2} = H^T_{v_2} \cup H^T_{v_4}$, implying that $v_2$ has type $A$ in $T'$, whereas
it has type $B$ in $T$.}
\label{fig:diam}
\end{figure}

\begin{lemma}\label{lemma:diameter}
Let $(G,X,k)$ be an input of \pb, $R'$ be a connected component of $R$, $T$ be a lowering tree of $R'$, and $T'$ be a longest path of $T$.
Suppose that neither Meta-Rule~\ref{rule:*A}  nor Meta-Rule~\ref{rule:B2B2} can be applied to $((G,X,k),T')$.
Then, $|V(T')| \le \O(|\X|\cdot |X|)$.
This implies that after Line~\ref{algo:diambounded} of Algorithm~\ref{algo:kernel}, $\diam(T) = \O(|\X|\cdot|X|)$.
\end{lemma}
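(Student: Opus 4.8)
Write the longest path as $T'=(w_1,\dots,w_m)$, and let $H_{w_i}:=H^{T'}_{w_i}$ denote the pending component of $w_i$ with respect to $T'$. Since $T'$ has $m$ vertices and $m-1$ edges, all of which are bridges of the connected graph $R'$, the graph $R'\setminus E(T')$ has exactly $m$ connected components, namely $H_{w_1},\dots,H_{w_m}$; so $\{H_{w_i}\mid i\in[m]\}$ is a partition of $V(R')$. The plan is to show that if $m$ is much larger than $|\X|\cdot|X|$ then one of Meta-Rule~\ref{rule:*A}, Meta-Rule~\ref{rule:B2B2} must be applicable to $((G,X,k),T')$, contradicting the hypothesis. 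The starting observation is that every \emph{interior} edge of $T'$ produces a $T'$-conflict structure: for $2\le i\le m-2$, both $w_i$ and $w_{i+1}$ have degree $2$ in $T'$ and each has a $T'$-neighbour on both sides, so $C_i:=H_{w_i}\cup H_{w_{i+1}}$ is a conflict structure of type $1$ if at least one of $w_i,w_{i+1}$ has type $A$, and of type $2$ if both have type $B$ (Definition~\ref{def:conflictstrucutre}). As neither meta-rule applies, $C_i$ is not $(|X|+2)$-almost-free: for a type-$1$ structure this is immediate from Meta-Rule~\ref{rule:*A}, and for a type-$2$ structure Meta-Rule~\ref{rule:B2B2} gives that $C_i$ is not $(|X|+1)$-almost-free, which implies it is not $(|X|+2)$-almost-free since $(|X|+2)$-almost-freeness is a strictly stronger condition than $(|X|+1)$-almost-freeness.

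\textbf{The key step: an $\alpha$-additive partition with many conflict structures.} The partition $\{H_{w_i}\}$ need not be $\alpha$-additive, so I would merge the pending components in consecutive pairs $H_{w_i}\cup H_{w_{i+1}}$ along $T'$, choosing which vertices to pair according to their types so that $\alpha$-additivity is preserved. This is governed by the bridge-sum criterion: if a bridge $\{u,v\}$ separates a graph $H$ into $H_1\ni u$, $H_2\ni v$, then $\alpha(H)=\alpha(H_1)+\alpha(H_2)$ iff $u$ has type $A$ in $H_1$ or $v$ has type $A$ in $H_2$, and in that case the type of the far endpoint of $H_2$ in $H$ is determined by the types inside $H_1$ and $H_2$. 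Iterating this along $T'$ shows that a pairing yields an $\alpha$-additive partition as soon as, at every boundary between two consecutive blocks, at least one of the two interface vertices has type $A$ \emph{in its own block}; one checks that a two-component block $H_u\cup H_v$ has \emph{both} interface vertices of type $A$ in it whenever $u,v$ have the same type, that a mixed block $H_u\cup H_v$ has a type-$A$ interface exactly on the side of the type-$A$ vertex, and that a singleton block $H_u$ has a type-$A$ interface iff $u$ has type $A$. Pairing the vertices inside each maximal run of equal type, and pairing the vertices of each maximal alternating stretch phase-consistently (so that all resulting mixed pairs carry their type-$A$ component on the same side), with occasional re-phasing at the transitions, produces such an $\alpha$-additive partition $\P$ in which every block, except a bounded number of singletons, equals some interior $C_i$ (a same-type interior pair is a type-$1$ or type-$2$ conflict structure; a mixed interior pair is a type-$1$ conflict structure).

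\textbf{Conclusion.} Let $\P=\P_1\cup\P_2$ where $\P_1$ consists of the blocks of $\P$ that equal some $C_i$; each block of $\P_1$ is not $(|X|+2)$-almost-free, so Lemma~\ref{lemma:partition} with $x=|X|+2$ gives $|\P_1| < |\X|\cdot(|X|+2)$. By construction $|\P_1| \ge \Omega(m) - \Oh(1)$, hence $m=|V(T')| = \Oh(|\X|\cdot|X|)$, the constants depending only on $c$. Finally, since the diameter of a tree equals the length of a longest path and $T'$ is a longest path of $T$, we obtain $\diam(T) = |V(T')|-1 = \Oh(|\X|\cdot|X|)$, which is precisely what is claimed after Line~\ref{algo:diambounded} of Algorithm~\ref{algo:kernel}.

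\textbf{Main obstacle.} The technical heart is the construction of $\P$ and the verification of its $\alpha$-additivity. The obvious pairings do not work: pairing every other vertex fails on patterns such as $\dots ABBA\dots$ (and, more insidiously, the ``loss'' caused by a mixed pair can propagate through later blocks and destroy additivity even without a local $ABBA$), while pairing only within monochromatic runs leaves arbitrarily many vertices in singleton (non-conflict-structure) blocks on alternating patterns. Obtaining a single pairing that is simultaneously $\alpha$-additive and leaves only $\Oh(1)$ non-conflict blocks requires a careful case analysis on the type sequence of $w_1,\dots,w_m$ together with local re-phasing at type transitions; this is the main work of the proof.
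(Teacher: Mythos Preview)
Your high-level approach is exactly the paper's: build an $\alpha$-additive partition of $R'$ whose blocks are mostly interior $T'$-conflict structures of type~1 or~2, observe that none of them is $(|X|+2)$-almost-free because neither meta-rule applies, and finish with Lemma~\ref{lemma:partition}. Your analysis of ``interface types'' of two-component blocks is also correct and is precisely what makes the construction work.

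The gap is that you do not actually construct $\P$. You describe a scheme involving monochromatic runs, alternating stretches, and ``occasional re-phasing at the transitions,'' then declare in your \textbf{Main obstacle} paragraph that carrying this out is ``the main work of the proof.'' That work is absent. The difficulty you anticipate, with mixed pairs propagating losses through the sequence, is real for an arbitrary pairing, but you are overcomplicating the combinatorics; a much simpler scheme suffices and avoids all the case analysis you foresee.

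The paper's construction is: call a maximal subpath of $T'$ consisting entirely of type-$B$ vertices a \emph{$B$-block}. Inside each $B$-block, take a maximum matching $M_1$ (leaving the rightmost vertex unmatched if the block has odd length). For each odd $B$-block not ending at $v_t$, pair the leftover $B$-vertex with the adjacent type-$A$ vertex to its right; call these edges $M_2$. Then take a maximum matching $M_3$ among the remaining type-$A$ vertices. The partition consists of $C_e=H_{v_i}\cup H_{v_{i+1}}$ for each $e\in M_1\cup M_2\cup M_3$, plus singletons $H_v$ for the leftover vertices $N$. In your own language: every $BB$ pair has type-$A$ interfaces on both sides; every $BA$ pair in $M_2$ has its type-$A$ interface on the right, which is exactly the side facing the rest of the path; every $AA$ pair has type-$A$ interfaces on both sides; and every singleton in $N$ is a type-$A$ vertex (except possibly $v_t$, which is an endpoint and has no right interface). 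So your interface criterion is satisfied at every boundary, giving $\alpha$-additivity directly, with no re-phasing and no alternating-stretch analysis. All matched pairs except possibly those touching $v_1$ or $v_t$ are interior conflict structures, so $|\P_1|\ge |M_1\cup M_2\cup M_3|-2$ and the bound follows.
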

\begin{proof}
Let $T' = (v_1,\ldots,v_t)$.
We consider types and pending components in $T'$.
Let us define a \emph{block} as an inclusion-maximal subpath of $T'$ with only type $B$ vertices.
Our goal is to define an $\alpha$-additive partition of $R'$.
For any block $L=(v_f,\ldots,v_\ell)$, let $M^L$ be the maximum matching using all vertices of $L$ if $|L|$ is even, and all vertices except $v_{\ell}$ otherwise.
Let $M_1 = \bigcup_{\text{$L$ is a block}}M^L$. For a block $L=(v_f,\dots,v_\ell)$ such that $v_{\ell} \neq v_t$ and $L$ has an odd number of vertices, let $e^L = \{v_{\ell},v_{\ell+1}\}$,
where $v_{\ell+1}$ has type $A$. Let $M_2$ be the union of all such $e^L$ edges.
Let $M_3$ be a maximum matching of type $A$ vertices not belonging to $M_1 \cup M_2$.
Finally, let $N = V(T') \setminus (V(M_1) \cup V(M_2) \cup V(M_3))$. Notice that vertices of $N$ are either type $A$ vertices between two blocks that were not matched in
$M_2$, or $v_t$ if the last block contains $v_t$ and has an odd number of vertices.

For any $e \in M_1 \cup M_2 \cup M_3$, $e=\{v_i,v_{i+1}\}$, let $C_e = H_{v_{i}} \cup H_{v_{i+1}}$.
Let $\P_1 = \{C_e \mid e \in M_1 \cup M_2 \cup M_3 \mid C_e \mbox{ is a $T'$-conflict structure (of type 1 or 2)}\}$.
Observe that $|\P_1| \le |M_1 \cup M_2 \cup M_3|-2$, where we subtract two because
if $\{v_1,v_2\}$ or $\{v_{t-1},v_t\}$ are in $M_1$, then the corresponding set $C_e$ is not a conflict structure of type 2, as we require that both vertices should be of degree two in $T'$.
Let $\P'_1 = \{C_e \mid e \in M_1 \cup M_2 \cup M_3\} \setminus \P_1$ and $\P_2 = \{H_v \mid v \in N\}$. Observe that $\P = \P'_1 \cup \P_1 \cup \P_2$ is a partition of $R'$.
Let us even prove that $\P$ is an $\alpha$-additive partition by constructing a maximum independent set of $R'$ of size $\sum_{V' \in \P} \alpha(G[V'])$.
For any $C_e, e \in M_1$, $e=\{v_i,v_{i+1}\}$, both $v_i$ and $v_{i+1}$ have type $B$, implying that $\alpha(G[C_e])=\alpha(H_{v_i})+\alpha(H_{v_{i+1}})-1$,
and thus we consider a maximum independent set $S_e$ of $G[C_e]$ that uses $v_i$ and not $v_{i+1}$.
For any $C_e, e \in M_2$, $e=\{v_i,v_{i+1}\}$, $v_i$ has type $B$ and $v_{i+1}$ has type $A$,
implying that $\alpha(G[C_e])=\alpha(H_{v_i})+\alpha(H_{v_{i+1}})$, and thus  we consider a maximum independent set $S_e$ of $G[C_e]$ that uses $v_i$ and not $v_{i+1}$.
For any $C_e, e \in M_3$, we consider a maximum independent set $S_e$ of $G[C_e]$ such that $S_e \cap V(T') = \emptyset$.
For $H_v, v \in N$, we consider a maximum independent set $S_v$ of $G[H_v]$ such that $S_v \cap V(T') = \emptyset$ if $v \neq v_t$, and $S_v \cap V(T') = \{v_t\}$ otherwise.
Observe that by definition $\bigcup_{e \in M_1 \cup M_2 \cup M_3}S_e \cup \bigcup_{v \in N}S_v$ is an independent set of $R'$ of the claimed size.

Now, as by hypothesis neither Meta-Rule~\ref{rule:*A}  nor Meta-Rule~\ref{rule:B2B2} can be applied to $((G,X,k),T')$, none of the $T'$-conflict structures of type $1$ of $\P_1$ is ($|X|+2$)-almost-free, and none of the $T'$-conflict structures of type $2$ of $\P_1$ is ($|X|+1)$-almost-free, thus not  ($|X|+2$))-almost-free either.
Thus, for any $C_e \in \P_1$, $C_e$ is not $(|X|+2)$-almost-free.
By Lemma~\ref{lemma:partition}, this implies that $|\P_1| \le |\X|  \cdot (|X|+2)$. As $|\P|=\O(|\P_1|)$ and $|V(T')|=\O(|\P|)$, we get the desired result.
\end{proof}

\subsubsection{Bounding the degree and the number of leaves}\label{subsubsec:degleaves}

We define another meta-rule based on the two previous ones.
Informally, in this meta-rule we move $v$ to the modulator, apply rules to decrease the number of connected components, and move $v$ back to $R$.

\begin{MetaruleN}\label{rule:deg}~\\
\textbf{Input: }  An input $(G,X,k)$ of \pb and  a tree  of bridges $T$ of a connected component $R'$ of $R$.\\
\textbf{Action: } If there exists $v \in V(T)$ such that $d_T(v) > 3|\X| \cdot |X|$ then
\begin{itemize}
\item apply exhaustively Rule~\ref{rule:free} and Rule~\ref{rule:chunkdegree} to $(G, X \cup \{v\},k)$, getting an instance $(G', X \cup \{v\},k')$, and
\item define the new instance as $(G',X,k')$.
\end{itemize}
\end{MetaruleN}
\begin{lemma}
  The following hold:
  \begin{itemize}
  \item Meta-Rule~\ref{rule:deg} is safe,
  \item $X$ is still a $c$-\bd-modulator in $G'$ (implying that $(G',X,k')$ is indeed an instance of \pb), and
  \item $|V(G')| < |V(G)|$.
  \end{itemize}
\end{lemma}
\begin{proof}
The safeness of this meta-rule follows directly from the safeness of Rule~\ref{rule:free} and Rule~\ref{rule:chunkdegree}.

Let us now prove the two other claims.
Recall that, by definition of $R$, we have the partition $V(G) = R \cup X$.
Observe that as $v \in V(T)$ and $T$ is a tree of bridges, $\cc(G[R'-\{v\}]) = d_T(v)$, implying by hypothesis that $\cc(G[R-\{v\}]) \ge \cc(G[R'-\{v\}]) > 3|\X|\cdot |X|$.
Notice also that in $(G,X \cup \{v\},k)$, $X \cup \{v\}$ is still a $c$-\bd-modulator.
Moreover, the set of chunks in this instance is $\X' = \{X' \subseteq  X \cup \{v\} \mid |X'| \le 2^c\}$, 
implying that $|\X'| \leq 2 |\X|$.

Choose $\tilde{R}$ such that we have the partition $V(G') = \tilde{R} \cup (X \cup \{v\})$.
After applying exhaustively Rule~\ref{rule:free} and Rule~\ref{rule:chunkdegree} to $(G,X \cup \{v\},k)$,
by Lemma~\ref{lemma:nbcc} it follows that $\cc(G'[\tilde{R}]) \le |\X'|\cdot|X \cup \{v\}|$.
As  $|\X'|\cdot|X \cup \{v\}| \le 2|\X| \cdot (|X|+1) \le 3|\X| \cdot |X|$, where we have assumed that $|X| \ge 2$ (as otherwise the problem can be solved in polynomial time), we deduce that at least one connected component of $G[R-\{v\}]$ has been deleted in order to obtain $G'$,
and thus that $|V(G')| < |V(G)|$.
Moreover, by Lemma~\ref{lemma:nbcc}
it follows that $G'$ is a subgraph of $G$, and that $\tilde{R}$ is a subset of $R-\{v\}$.
This implies that $G'[\tilde{R} \cup \{v\}]$ is a subgraph of $G[R]$, and thus that $X$ is still a $c$-\bd-modulator in $G'$.
\end{proof}

\begin{observation}\label{obs:degree}
If Meta-Rule~\ref{rule:deg} cannot be applied to $((G,X,k),T)$, then $\Delta(T) \le 3|\X|\cdot|X|$.
\end{observation}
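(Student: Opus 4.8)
The plan is to observe that the statement is an immediate consequence of the applicability condition of Meta-Rule~\ref{rule:deg}, so the ``proof'' is essentially a contrapositive of how the rule is phrased. First I would recall that Meta-Rule~\ref{rule:deg} has a single precondition, namely the existence of a vertex $v \in V(T)$ with $d_T(v) > 2|\X| \cdot |X|$, and that whenever this precondition is met the rule does something nontrivial: as established in the proof of Meta-Rule~\ref{rule:deg}, the resulting graph $G'$ satisfies $|V(G')| < |V(G)|$, so the rule genuinely transforms the instance (and, incidentally, makes progress towards termination, since exhaustively applying Rules~\ref{rule:free} and~\ref{rule:chunkdegree} terminates). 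Consequently, the phrase ``Meta-Rule~\ref{rule:deg} cannot be applied to $((G,X,k),T)$'' can only mean that its precondition fails.

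The second and final step is then to take the contrapositive of the precondition: if no vertex $v \in V(T)$ satisfies $d_T(v) > 2|\X| \cdot |X|$, then every vertex of $T$ has degree at most $2|\X| \cdot |X|$, which is precisely $\Delta(T) = \max_{v \in V(T)} d_T(v) \le 2|\X| \cdot |X|$ (the bound holding vacuously when $T$ consists of a single vertex and has no edges). There is no real obstacle here; the observation merely records, for later use in the size analysis of Algorithm~\ref{algo:kernel} in Section~\ref{subsubsec:degleaves}, that once Meta-Rule~\ref{rule:deg} has been applied exhaustively on a tree $T$ of bridges of a connected component of~$R$, the maximum degree of $T$ is bounded by $2|\X| \cdot |X|$, which combined with the earlier diameter bound (Lemma~\ref{lemma:diameter}) reduces the remaining task to bounding the number of leaves of $T$.
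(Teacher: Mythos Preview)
Your proposal is correct and matches the paper's treatment: the paper states this as an observation with no proof at all, since it is immediate from the precondition of Meta-Rule~\ref{rule:deg}. Your extra remark that the rule genuinely transforms the instance (because $|V(G')| < |V(G)|$) is a nice justification that ``cannot be applied'' unambiguously means the precondition fails, but the paper does not bother to spell even this out.
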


Let us now bound the number of leaves of $T$. The following simple lemma will be used in the counting argument in the proof of Lemma~\ref{lemma:Tbounded}.
\begin{lemma}\label{lemma:absorb}
Let  $(G,X,k)$ be an input of \pb, let $T$ be  a tree of bridges   of a connected component $R'$ of $R$, and let $T_1$ be the tree obtained from $T$ by removing any leaf $v$ of type $A$ whose parent has type $B$.
Then $T_1$ has no $A$-leaf with $B$-parent, $\Delta(T_1) \le \Delta(T)$, and $\diam(T_1) \le \diam(T)$.
\end{lemma}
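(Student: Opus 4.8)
The plan is to establish the three assertions in order of increasing difficulty, postponing the combinatorial claim ``$T_1$ has no $A$-leaf with a $B$-parent'' to the end. The two metric assertions are generic facts about subtrees. Since $T_1$ is obtained from $T$ by deleting a set $L$ of vertices, all of which are leaves of $T$, together with their incident edges, no degree can increase, so $\Delta(T_1) \le \Delta(T)$; and $T_1$ is still a tree (nonempty, since if $T$ is a single vertex then $L = \emptyset$). For the diameter, observe that for any two surviving vertices $u, v$ the unique $u$--$v$ path of $T$ has all internal vertices of $T$-degree at least $2$, hence none of them lies in $L$; so that path survives in $T_1$, giving $d_{T_1}(u,v) = d_T(u,v)$, and taking the maximum yields $\diam(T_1) \le \diam(T)$.

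The core of the proof is the claim that \emph{types are preserved}: every $v \in V(T_1)$ has the same type in $T_1$ as in $T$. (Note that $T_1$, being a subtree of $T$, is itself a tree of bridges of $R'$, so types in $T_1$ make sense.) I would split on whether $v$ is, in $T$, the parent of some removed leaf. If it is not, then no edge of $E(T)\setminus E(T_1)$ is incident to the pending component $H^T_v$: every such edge is the bridge joining a removed leaf $w \in L$ to its parent and thus lies between the pending components of $w$ and of that parent, neither of which is $v$; hence $H^{T_1}_v = H^T_v$ and the type is unchanged. If $v$ is the parent (in $T$) of the removed leaves $w_1, \dots, w_m$, then $v$ has type $B$ (being the parent of a removed leaf), so in particular $v \notin L$; and $H^{T_1}_v = H^T_v \cup \bigcup_i H^T_{w_i}$, where each $H^T_{w_i}$ is attached to the rest only through the single bridge at $w_i$ (using the property, recorded after Definition~\ref{def:pending}, that only the root of a pending component has neighbours outside it). Since each $w_i$ has type $A$, a short computation — combining a maximum independent set of $R[H^T_v]$ with, for each $i$, a maximum independent set of $R[H^T_{w_i}]$ that avoids $w_i$, and noting the reverse inequality is immediate by vertex-disjointness — gives $\alpha(R[H^{T_1}_v]) = \alpha(R[H^T_v]) + \sum_i \alpha(R[H^T_{w_i}])$, while deleting $v$ disconnects all the pieces, so $\alpha(R[H^{T_1}_v \setminus v]) = \alpha(R[H^T_v \setminus v]) + \sum_i \alpha(R[H^T_{w_i}])$. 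Subtracting shows $v$ keeps type $B$ in $T_1$.

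With type-preservation in hand, the remaining assertion follows from a case check on a leaf $v$ of $T_1$ (if $T_1$ has at most one vertex there is nothing to prove). If $v$ was already a leaf of $T$, then $v \notin L$ forces $v$ to have type $B$, or else $v$ has type $A$ and its unique neighbour $p$ has type $A$ too; in the latter case $p$ is not removed (a removed $p$ would be a leaf of $T$ adjacent to the leaf $v$, making $T$ a single edge and forcing $v$ to have type $B$, a contradiction), so $p$ is still the neighbour of $v$ in $T_1$ and still has type $A$ there — hence $v$ is not an $A$-leaf with a $B$-parent. If instead $v$ became a leaf only upon passing to $T_1$, then $v$ lost at least one neighbour, every lost neighbour being a removed leaf whose unique neighbour in $T$ is $v$; hence $v$ has type $B$ in $T$ and, by preservation, in $T_1$ — so again it is not an $A$-leaf. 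These cases are exhaustive.

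The main obstacle is the type-preservation claim, specifically the case of a vertex $v$ that is the parent of removed leaves: there one must use both that the removed children have type $A$ and that they attach to $H^T_v$ only via single bridges, since this is exactly what makes $\alpha(R[H^{T_1}_v])$ and $\alpha(R[H^{T_1}_v \setminus v])$ exceed $\alpha(R[H^T_v])$ and $\alpha(R[H^T_v \setminus v])$ by the same amount. Everything else is routine.
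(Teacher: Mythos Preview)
Your proof is correct and follows the same core idea as the paper: the crucial observation is that a $B$-parent $u$ absorbing the pending components of its removed type-$A$ leaf children retains type $B$, because the absorbed pieces contribute equally to $\alpha(H^{T_1}_u)$ and to $\alpha(H^{T_1}_u\setminus u)$. The paper's proof is a terse sketch of exactly this point, while you additionally spell out a full type-preservation claim and a careful case analysis on the leaves of $T_1$; this extra care is warranted (the paper glosses over why surviving old leaves and their parents keep their types) but does not change the approach.
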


\begin{proof}
Let $v$ be a leaf of $T$ of type $A$ whose parent $u$ has type $B$ in $T$; we use the shortcuts $A$-leaf and $B$-parent.
Observe that $u$ still has type $B$ in $T_1$. Indeed, as $H^{T_1}_u = H^T_u \cup H^T_v$, we get that any maximum independent set of $G[H^{T_1}_u]$ necessarily contains $u$.
Thus, even if removing $v$ creates a new leaf $u$ in $T_1$, $u$ cannot be a $A$-leaf with $B$-parent in $T_1$, implying that $T_1$ has no $A$-leaf with $B$-parent.
This also implies the claimed bounds on $\Delta(T_1)$ and $\diam(T_1)$.
\end{proof}

In order to bound the number of leaves of $T$, we distinguish several types. Since our goal is to bound the size of $T$ by a polynomial in $|X|$, we may assume that $T$ contains at least three vertices, and therefore there exists a non-leaf vertex. In the following definition, we assume that the tree $T$ is rooted at an arbitrary non-leaf vertex, so that we can speak about the parent of a leaf; the definition is easily seen to be oblivious to the choice of the root, since we only need to consider parents of leaves.

\begin{definition}\label{def:cases}
Given an input $(G,X,k)$ of \pb and  a tree of bridges $T$ of a connected component $R'$ of $R$, let
\begin{itemize}
\item $n_{{\sf *A}}(T)  = |\{$$v \mid$ $v$ is a leaf of $T$ whose parent has type $A \}|$,
\item $n_{{\sf AB}}(T)  =  |\{$$v \mid$ $v$ is a leaf of $T$ of type $A$ whose parent has type $B \}|$,
\item $n_{{\sf BB2}}(T) = |\{$$v \mid$ $v$ is a leaf of $T$ of type $B$ whose parent $u$ has type $B$ and is such that $d_T(u)=2 \}|$,
\item $n_{{\sf BBB}}(T) = |\{$$v \mid$ $v$ is a leaf of $T$ of type $B$ whose parent $u$ has type $B$ and is such that $d_T(u)>2$ and there exists a leaf $v' \neq v$ of type $B$ with the same parent $u \}|$, and
\item $n_{{\sf BBbad}}(T) = |\{$$v \mid$ $v$ is a leaf of $T$ of type $B$ whose parent $u$ has type $B$ and  is such that $d_T(u)>2$ and $u$ is adjacent to a unique $B$-leaf$\}|$.
\end{itemize}
\end{definition}

Note that, since we may assume that $T$ is rooted at a non-leaf vertex, the cases in Definition~\ref{def:cases} are exhaustive, and each leaf of $T$ is counted in exactly one set.

\begin{lemma}\label{lemma:n*A}
Let $(G,X,k)$ be an input of \pb, $R'$ be a connected component of $R$, and $T$ be a lowering tree of $R'$.
Suppose that Meta-Rule~\ref{rule:*A}  cannot be applied to $((G,X,k),T)$.
Then, $n_{{\sf *A}}(T) \le \Delta(T) \cdot  |\X| \cdot (|X|+2)$.
This implies that after Line~\ref{algo:n*abounded} of Algorithm~\ref{algo:kernel}, $n_{{\sf *A}}(T_1) = \O((|\X|\cdot |X|)^2)$.
\end{lemma}
\begin{proof}
Let us define an $\alpha$-additive partition of $R'$.
Let $N_A$ be the set of vertices $v \in V(T)$ such that $v$ has type $A$ in $T$ and has at least one leaf adjacent to it.
For each $v \in N_A$, let $Y_v$ be the set of leaves of $T$ adjacent to $v$.
Define $C_v = \{H_u \mid u \in Y_v\} \cup H_v$. Let $\P_1 = \{C_v \mid v \in N_A\}$
and $\P_2 = \{V_2\}$, where $V_2 = V(R') \setminus \bigcup_{V' \in \P_1}V'$. We claim that $\P =\P_1 \cup \P_2$ is an $\alpha$-additive partition, as in
each $C_v$ we can take a maximum independent set $S_v$ of $G[C_v]$ such that $S_v$ does not contain $v$. Thus, defining $S'$ as a maximum independent set of $G[V_2]$,
we get that $(\bigcup_{v \in N_A}S_v) \cup S'$ is a maximum independent set of $R'$.

Let us now prove that for any $v \in N_A$, $C_v$ is not $(|X|+2)$-almost-free.
Let $v \in N_A$ and let $u \in Y_v$. As Meta-Rule~\ref{rule:*A} cannot be applied to $C=H_u \cup H_v$, we know that $C$ is not $(|X|+2)$-almost-free, implying that
there exists $X' \in \X$ such that $\conf_C(X') \neq 0$ and $\conf_R(X') < |X|+2$.
However, as $ \{C, C_v \setminus C \}$ is an $\alpha$-additive partition of $C_v$, $\conf_C(X') \neq 0$ implies $\conf_{C_v}(X') \neq 0$, and
thus that $C_v$ is not $(|X|+2)$-almost-free.
By Lemma~\ref{lemma:partition}, this implies that $|\P_1| \le |\X| \cdot (|X|+2)$. As $n_{{\sf *A}}(T) \le \Delta(T) \cdot |\P_1|$, we get the desired result.
\end{proof}



We now define a meta-rule to bound $n_{{\sf BBB}}(T)$.

\begin{MetaruleN}\label{rule:Bleaf}~\\
\textbf{Input: }  An input $(G,X,k)$ of \pb and  a tree of bridges $T$ of a connected component $R'$ of $R$.\\
\textbf{Action: } If there exists a $T$-conflict structure $C$ of type $3$, namely $C=H_u$ where $u$ has type $B$ and is a leaf of $T$, such that $C$
is $(|X|+2)$-almost-free, then remove vertex~$u$ and its parent~$v$ in $T$ from the graph~$G$, and define $k'=k-1$.
 \end{MetaruleN}
\begin{ruleproof}
Let $G'$ be the graph obtained after applying the rule.
Let us first prove that $\alpha(G) \ge k$ implies that  $\alpha(G') \ge k'$.
Let $S$ be an independent set of $G$ with $|S| \ge k$.
Then $S' = S \setminus \{u,v\}$ is an independent set of $G'$ of size at least $k'$.

Let us now prove that $\alpha(G') \ge k'$ implies that  $\alpha(G) \ge k$.
Let $S'$ be an independent set of $G'$ with $|S'| \ge k'$.
If $\conf^G_{C}(S'_X) = 0$ then there exists a maximum independent set $Z$ of $C$ such that $S'_X \cup Z$ is an independent set in $G$.
Observe that replacing  $(S' \cap (C \setminus u))$ by $Z$ in $S'$ gives an independent set $S$ in $G$.
Moreover, as $u$ has type $B$ we know~$|(S' \setminus (C \setminus u))| \le \alpha(G[C])-1$, so that $|S| \ge |S'|+1$.
Otherwise, if $\conf^G_{C}(S'_X) \neq 0$, according to Lemma~\ref{lemma:smallkiller}, there exists $X' \in \X$ with $X' \subseteq S'_X$ such that $\conf^G_C(X') \neq 0$.
As $C$ is $(|X|+2)$-almost-free, this implies that $\conf^G_R(X') \ge |X|+2$.
 Observe that $\alpha(G' \setminus X) \ge \alpha(G \setminus X)-1$, and that for any $X' \in \X$ we have $\alpha((G \setminus X) \setminus N_G(X')) \ge \alpha((G' \setminus X) \setminus N_G(X'))$.
Thus, we can apply Lemma~\ref{lemma:shift} with $\Delta_1=1$ and $\Delta_2=0$, and we get that $\conf^{G'}_{V(G') \setminus X}(X') \ge |X|+1$.
This implies that $|S'_{V(G') \setminus X}| \le \alpha(G' \setminus X)-(|X|+1) \le \alpha(G \setminus X)-(|X|+1)$, implying in turn that we can take for $S$ any maximum independent set of $G[R] = G \setminus X$ as $\alpha(G[R]) \ge |S'|+1$.
\end{ruleproof}


\begin{lemma}\label{lemma:BBBleaf}
Let $(G,X,k)$ be an input of \pb, $R'$ be a connected component of $R$, and $T$ be a lowering tree of $R'$.
Suppose that Meta-Rule~\ref{rule:Bleaf} cannot be applied to $((G,X,k),T)$.
Then, $n_{{\sf BBB}}(T) \le \Delta(T)   \cdot |\X| \cdot (|X|+2)$.
This implies that after Line~\ref{algo:nBBBbounded} of Algorithm~\ref{algo:kernel}, $n_{{\sf BBB}}(T_1) = \O((|\X| \cdot |X|)^2)$.
\end{lemma}
\begin{proof}
Let us define an $\alpha$-additive partition of $R'$.
Let $N_B$ be the set of vertices $v \in V(T)$ such that $v$ has type $B$ in $T$ and  has at least two  leaves of type $B$ adjacent to it.
For each $v \in N_B$, let $Y_v$ be the set of $B$ leaves of $T$ adjacent to $v$.
Define $C_v = \{H_u \mid u \in Y_v\} \cup H_v$. Let $\P_1 = \{C_v \mid v \in N_B\}$
and $\P_2 = \{V_2\}$, where $V_2 = V(R') \setminus \bigcup_{V' \in \P_1}V'$. We claim that $\P =\P_1 \cup \P_2$ is an $\alpha$-additive partition as in
each $C_v$, any maximum independent set $S_v$ of $G[C_v]$ does not contain $v$ as there are at least two  leaves of type $B$ adjacent to $v$. Thus, defining $S'$ as  maximum independent set of $G[V_2]$,
we get that $(\bigcup_{v \in N_B}S_v) \cup S'$ is a maximum independent set of $R'$.

Let us now prove that for any $v \in N_B$, $C_v$ is not $(|X|+2)$-almost-free.
Let $v \in N_B$ and let $u \in Y_v$. As Meta-Rule~\ref{rule:Bleaf} cannot be applied on $C=\{H_u\}$, we know that $C$ is not $(|X|+2)$-almost-free, implying that
there exists $X' \in \X$ such that $\conf_C(X') \neq 0$ and $\conf_R(X') < |X|+2$.
However, as $\{C, C_v \setminus C\}$ is an $\alpha$-additive partition of $C_v$, $\conf_C(X') \neq 0$ implies $\conf_{C_v}(X') \neq 0$, and
thus that $C_v$ is not $(|X|+2)$-almost-free.
By Lemma~\ref{lemma:partition}, this implies that $|\P_1| \le |\X| \cdot (|X|+2)$. As $n_{{\sf BBB}}(T) \le \Delta(T) \cdot |\P_1|$, we get the desired result.
\end{proof}

Finally, our last meta-rule will be used to bound $n_{{\sf BB2}}(T)$.

\begin{MetaruleN}\label{rule:BB2leaf}~\\
\textbf{Input: }  An input $(G,X,k)$ of \pb  and  a tree of bridges $T$ of a connected component $R'$ of $R$.\\
\textbf{Action: } If there exists a $T$-conflict structure $C$ of type $4$, namely $C=H_{v_1} \cup H_{v_2}$ with a path $(v_1,v_2,u)$ in $T$ using notations of Definition~\ref{def:conflictstrucutre},
such that $C$ is $(|X|+1)$-almost-free, then identify $v_1$ and $u$, remove $v_2$, and define $k'=k-1$.
 \end{MetaruleN}

\begin{ruleproof}
Let $G'$ be the graph obtained after applying the meta-rule. We will prove the safeness of this meta-rule using the safeness of Meta-Rule~\ref{rule:B2B2}.
Let $G_1$ be the graph obtained from $G$ by adding two vertices $x$, $y$, and two edges $\{x,y\}$, $\{y,v_1\}$.
Let $R_1 = R' \cup \{x,y\}$.
It is immediate that $(G_1,X,k+1)$ is equivalent to $(G,X,k)$.
Now, as the two new edges are bridges in $G_1$, we get that $T_1 = T \cup \{\{x,y\}, \{y,v_1\}\}$ is a tree of bridges of $R_1$.
As $v_1$ and $v_2$ still have type $B$ in $T_1$, we get that $C$ is a  conflict structure of type $2$ in $T_1$, and is still $(|X|+1)$-almost-free.
Thus, we can apply Meta-Rule~\ref{rule:B2B2} to $(G_1,X,k+1)$ with $T_1$, and get an equivalent instance $(G_2,X,k)$.
Now, observe that $x$ still has degree one in $G_2$, and thus by defining $G_3 = G_2 \setminus (\{x\} \cup N_{G_2}(x))$, we get that $(G_3,X,k-1)$ is equivalent to $(G_2,X,k)$.
As $G_3 = G'$, we get the desired result.
\end{ruleproof}

\begin{lemma}\label{lemma:BB2leaf}
Let $(G,X,k)$ be an input of \pb, $R'$ be a connected component of $R$, and $T$ be a lowering tree of $R'$.
Suppose that Meta-Rule~\ref{rule:BB2leaf}  cannot be applied to $((G,X,k),T)$.
Then, $n_{{\sf BB2}}(T) \le \Delta(T) \cdot  |\X| \cdot (|X|+1) $.
This implies that after Line~\ref{algo:nBB2bounded} of Algorithm~\ref{algo:kernel}, $n_{{\sf BB2}}(T_1) = \O((|\X|\cdot|X|)^2)$.
\end{lemma}
\begin{proof}
The proof follows the same approach as those of Lemmas~\ref{lemma:n*A} and~\ref{lemma:BBBleaf}.
The corresponding $\alpha$-additive partition of $R'$ used here is $\P =\P_1 \cup \P_2$,
where $\P_1$ is the set of $T$-conflict structures of type $4$ and $\P_2 = \{V_2\}$, where $V_2 = V(R') \setminus \bigcup_{V' \in \P_1}V'$. (We use the observation here that if~$T$ has at least four vertices, then any two distinct $T$-conflict structures of type $4$ are vertex-disjoint, so that this indeed gives a valid partition. If~$T$ has at most three vertices, the lemma trivially holds.)
\end{proof}

Let us now prove that if no meta-rule can be applied to the subtree $T_1$ defined in Algorithm~\ref{algo:kernel}, then $|V(T)|$ is bounded.
\begin{lemma}\label{lemma:Tbounded}
Let $(G,X,k)$ be an input of \pb, $R'$ be a connected component of $R$, and $T$ be a lowering tree of $R'$.
Then, after Line~\ref{algo:nBB2bounded} of Algorithm~\ref{algo:kernel}, it holds that
$|V(T)| = \O(|\X|^5 \cdot |X|^5)$.
\end{lemma}
\begin{proof}
Let $T_1$ be the tree obtained from $T$ by removing any leaf $v$ of type $A$ whose parent has type $B$ (thus, $n_{{\sf AB}}(T_1)=0$ by Lemma~\ref{lemma:absorb}).
Let us root $T_1$ arbitrarily at a non-leaf vertex (since our intermediate goal is to bound the size of $T_1$ by a polynomial on $|\X|$ and $|X|$, we may assume that $|V(T_1)| \geq 3$, hence $T_1$ contains a non-leaf vertex).
Let $f$ be the number of leaves of $T_1$. It is sufficient to bound $f$, as $|V(T_1)| \le f \cdot \diam(T_1)$.
Let $P$ be the set of type $B$ vertices $v$ of $T_1$ such that $v$ is adjacent to exactly one $B$-leaf, and $d_{T_1}(v) > 2$. By the definition of $n_{{\sf BBbad}}(T_1)$, it holds that $|P| = n_{{\sf BBbad}}(T_1)$.
Observe that we get $f = n_{{\sf *A}}(T_1)+n_{{\sf AB}}(T_1)+n_{{\sf BBB}}(T_1)+n_{{\sf BB2}}(T_1)+n_{{\sf BBbad}}(T_1) = n_{{\sf *A}}(T_1)+n_{{\sf BBB}}(T_1)+n_{{\sf BB2}}(T_1)+|P|$, where we have used that $n_{{\sf AB}}(T_1) = 0$ and $ n_{{\sf BBbad}}(T_1) = |P|$.
Let us now bound $|P|$.

Let $P'$ be the set of vertices of $P$ which have no vertex of $P$ as a proper descendant.
Observe that $|P| \le |P'|\cdot \diam(T_1)$, and thus it now remains to bound $|P'|$.

Let $p \in P'$ and let $Y_p = \{u_1,\ldots,u_{x_p}\}$ be the children of $p$ in $T_1$. Observe that ${x_p} \ge 2$ as every vertex $v \in P$ satisfies $d_{T_1(v)}>2$.
As there is exactly one type $B$-leaf in $Y_p$, w.l.o.g. let us assume it is $u_1$.
Then, for any $i \ge 2$, let $T_p^i$ be the subtree of $T_1$ rooted at~$u_i$. 
(Note that $|V(T_p^i)| \ge 2$ since~$T_1$ has no leaf of type $A$ with parent of type~$B$.) 
By definition of $P'$, $T_p^i$ has no vertex of $P$. Therefore, for any two distinct vertices $p_1,p_2 \in P'$, and any two indices $i \in [2,x_{p_1}]$ and $j \in [2,x_{p_2}]$, it follows that $V(T^i_{p_1}) \cap V(T^j_{p_2})= \emptyset$.   
Thus, for every $p \in P'$ and every $i \in [2,x_p]$, every leaf of  $T_p^i$ contributes at least one to $n_{{\sf *A}}(T_1)+n_{{\sf BBB}}(T_1)+n_{{\sf BB2}}(T_1)$. Since $x_p \geq 2$ for every $p \in P'$, every $p \in P'$ contributes at least one to $n_{{\sf *A}}(T_1)+n_{{\sf BBB}}(T_1)+n_{{\sf BB2}}(T_1)$. This implies that $|P'| \le n_{{\sf *A}}(T_1)+n_{{\sf BBB}}(T_1)+n_{{\sf BB2}}(T_1)$. Thus, we have that
\begin{eqnarray*}
  |V(T_1)| & \leq & (n_{{\sf *A}}(T_1)+n_{{\sf BBB}}(T_1)+n_{{\sf BB2}}(T_1)+|P|) \cdot \diam(T_1) \\
   & \leq & (n_{{\sf *A}}(T_1)+n_{{\sf BBB}}(T_1)+n_{{\sf BB2}}(T_1)+|P'|  \cdot \diam(T_1)) \cdot \diam(T_1) \\
    & \leq & ((n_{{\sf *A}}(T_1)+n_{{\sf BBB}}(T_1)+n_{{\sf BB2}}(T_1)) \cdot (1 + \diam(T_1))) \cdot \diam(T_1) \\
   & \leq & 2((n_{{\sf *A}}(T_1)+n_{{\sf BBB}}(T_1)+n_{{\sf BB2}}(T_1)) \cdot  \diam(T_1)^2 \\
  & \leq & 2((n_{{\sf *A}}(T_1)+n_{{\sf BBB}}(T_1)+n_{{\sf BB2}}(T_1)) \cdot  \diam(T)^2,
\end{eqnarray*}
\noindent where the last inequality follows from Observation~\ref{obs:degree}. Applying Lemma~\ref{lemma:diameter}, Lemma~\ref{lemma:n*A},
Lemma~\ref{lemma:BBBleaf}, and
Lemma~\ref{lemma:BB2leaf} it follows that $|V(T_1)| = \O(|\X|^4 \cdot |X|^4)$. Finally, as $|V(T)| \le \Delta(T) \cdot |V(T_1)|$ and $\Delta(T) = \O(|\X| \cdot |X|)$ by Observation~\ref{obs:degree}, we obtain that $|V(T)| = \O(|\X|^5 \cdot |X|^5)$ and the lemma follows.
\end{proof}

\subsection{Applying recursion}
\label{sec:recursion}

As now the size of the lowering tree of each connected component of $R$, and the number of connected components of $R$, are bounded by a polynomial of $|X|$,
we can move all these lowering trees in the modulator and recurse to get the polynomial kernel for \pb.

\begin{theorem} \label{thm:kernel:is}
Algorithm~\ref{algo:kernel} is a polynomial kernel for \pb of size $\O_c(|X|^{2^{f(c)}})$, where the constant hidden in $\O_c$ depends on $c$, and $f(c)=\sum_{i=5}^{c+4}i=\O(c^2)$.
\end{theorem}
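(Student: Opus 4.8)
The plan is to prove the statement by induction on $c$, separating three things: correctness, termination and polynomial running time (which are independent of the size analysis), and then the size bound, which just tracks how the modulator grows as the recursion peels off one level of bridge-depth at a time.

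First I would establish correctness and efficiency. Each of Rules~\ref{rule:free} and~\ref{rule:chunkdegree}, each of Meta-Rules~\ref{rule:*A}--\ref{rule:BB2leaf}, and Meta-Rule~\ref{rule:deg} is safe (this is exactly what the safeness proofs and the claim attached to Meta-Rule~\ref{rule:deg} establish), and the base case $c=0$ returns the input unchanged, which is correct since then $G\setminus X$ is empty; hence every instance produced is equivalent to the input. For termination within one recursion level I would observe that each application of a rule or meta-rule strictly decreases $|V(G)|+|E(G)|$: Rule~\ref{rule:free} deletes a component; Rule~\ref{rule:chunkdegree} deletes at least one edge (there is an edge from $X$ to $R'$, as otherwise $R'$ would be free and Rule~\ref{rule:free} would fire first, being checked earlier); Meta-Rule~\ref{rule:*A} removes an edge; and Meta-Rules~\ref{rule:B2B2},~\ref{rule:Bleaf},~\ref{rule:BB2leaf} and~\ref{rule:deg} each remove at least one vertex while creating no edges. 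So at most $|V(G)|+|E(G)|$ restarts occur per level, each pass evaluates all preconditions in polynomial time (using Lemmas~\ref{lemma:conflictandcopoly} and~\ref{lemma:typesandcopoly} and Propositions~\ref{prop:checkingBridgeDepth} and~\ref{prop:computingTreeDec}), the graph never grows, and the recursion has constant depth $c$; this yields a running time polynomial in the input size. If the returned instance has $k'>|V(G)|$ or $k'\le 0$ we replace it by a trivial constant-size instance, which only helps the size bound.

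Next comes the size analysis. In the induction step assume $c\ge 1$, write $R:=G\setminus X$ and $n:=|X|$; since chunks are independent sets of size at most $2^c$, $|\X|=\O(n^{2^c})$. In the quiescent state reached just before Line~\ref{algo:end} (no rule applies anywhere), Lemma~\ref{lemma:nbcc} gives $\cc(G[R])\le |\X|\cdot n$; and for every component $R'$, with $T$ the computed lowering tree, Lemma~\ref{lemma:diameter} gives $\diam(T)=\O(|\X|\cdot n)$, Observation~\ref{obs:degree} gives $\Delta(T)\le 2|\X|\cdot n$, Lemmas~\ref{lemma:n*A},~\ref{lemma:BBBleaf} and~\ref{lemma:BB2leaf} bound $n_{{\sf *A}}(T_1),n_{{\sf BBB}}(T_1),n_{{\sf BB2}}(T_1)$ by $\O((|\X|\cdot n)^2)$, and Lemma~\ref{lemma:Tbounded} (applicable since $n_{{\sf AB}}(T_1)=0$ by Lemma~\ref{lemma:absorb}) yields $|V(T)|=\O(|\X|^4 n^5)$. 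Summing over the at most $|\X|\cdot n$ components, $|X_1|=\O(|\X|\cdot n)\cdot\O(|\X|^4 n^5)=\O(|\X|^5 n^6)$, so $|X_2|=n+|X_1|=\O_c(n^{5\cdot 2^c+6})$. Moreover $G\setminus X_2$ is the disjoint union of the graphs $R'\setminus V(T_{R'})$ over all components $R'$ of $R$ (where $T_{R'}$ is the lowering tree computed for $R'$), and $\bd(R'\setminus V(T_{R'}))=\bd(R')-1\le c-1$ since $T_{R'}$ is a lowering tree and $\bd(R')\le c$; hence $X_2$ is a $(c-1)$-\bd-modulator, $(G,X_2,k)$ is a valid equivalent input of \pb, and the algorithm recurses on $(G,X_2,k,c-1)$.

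It remains to close the recursion, which is where the stated $f$ is tailored to fit. By the induction hypothesis the recursive call returns an equivalent instance of size $\O_{c-1}(|X_2|^{2^{f(c-1)}})=\O_c(n^{(5\cdot 2^c+6)\cdot 2^{f(c-1)}})$, and the base case $c=0$ matches since $f(0)=0$ (the sum is empty) and then $|V(G)|=n$. Now $f(c)=\sum_{i=5}^{c+4} i=\sum_{j=1}^{c}(j+4)$, so $f(c)-f(c-1)=c+4$, i.e.\ $2^{f(c)}=2^{c+4}\cdot 2^{f(c-1)}$; and $5\cdot 2^c+6\le 11\cdot 2^c\le 16\cdot 2^c=2^{c+4}$ for every $c\ge 1$. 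Therefore $(5\cdot 2^c+6)\cdot 2^{f(c-1)}\le 2^{f(c)}$, which gives the claimed size $\O_c(n^{2^{f(c)}})$ with all $c$-dependent constants absorbed into $\O_c$, and $f(c)=c(c+9)/2=\O(c^2)$. The main obstacle is not any single step but making sure that \emph{all} the structural bounds (on $\cc(G[R])$, on $\diam(T)$, $\Delta(T)$, the leaf counts, and finally $|V(T)|$) hold \emph{simultaneously} in the state where no rule applies---which is guaranteed because every rule application restarts the algorithm from Line~1, so it leaves the component loop only once the whole graph is stable---together with the routine but error-prone bookkeeping of the exponents through the $c$ levels of recursion.
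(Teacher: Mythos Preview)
Your proposal is correct and follows essentially the same approach as the paper's proof: induction on $c$, with the size recursion driven by $|X_2|=\O_c(|X|^{2^{c+4}})$ and the identity $f(c)=f(c-1)+(c+4)$. You add welcome detail the paper glosses over (the explicit termination measure $|V(G)|+|E(G)|$, the check that Rule~\ref{rule:chunkdegree} actually removes an edge because Rule~\ref{rule:free} is tested first, the verification that $X_2$ is a $(c-1)$-\bd-modulator, and the handling of out-of-range $k'$), but the structure and the arithmetic are the same.
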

\begin{proof}
Observe first that Algorithm~\ref{algo:kernel} is indeed polynomial for any fixed $c$ as Lemma~\ref{lemma:conflictandcopoly} and Lemma~\ref{lemma:typesandcopoly} imply that
all rules and meta-rules can be applied in polynomial time. The fact that the output is an equivalent instance immediately follows from the safeness of all rules and meta-rules.
Let us now bound the size of the kernel by induction on $c$.
For $c=0$, $f(c)=0$ and Algorithm~\ref{algo:kernel} outputs $X$, which has the required size.
Let $c \ge 1$.
Let $X_2$ be defined as in Line~\ref{algo:end} of Algorithm~\ref{algo:kernel}.
Let $x = |X|$.
Observe first that $|\X| = \O_c(x^{2^c})$.
At Line~\ref{algo:end} we know that $\cc(R) \le |\X| \cdot x$, and that for each connected component $R'$ of $R$, its lowering tree $T$ has $|V(T)| = \O(|\X|^5 \cdot x^5)$.
This implies that at Line~\ref{algo:end} we have $|X_1| = \O(|\X|^6 \cdot x^6) = \O_c(x^{2^{c+3}+6}) = \O_c(x^{2^{c+4}})$. Thus, the new $(c-1)$-\bd-modulator $X_2$ has size
$|X_2| = x+|X_1| = \O_c(x^{2^{c+4}})$.
By induction hypothesis, Algorithm~\ref{algo:kernel} with input $(G,X_2,k,c-1)$ will return an equivalent instance of size $\O_c(|X_2|^{2^{f(c-1)}})$, implying the claimed bound.
\end{proof}

The degree-bound of the kernel size bound of Theorem~\ref{thm:kernel:is} is exponential in~$c$. This is known to be unavoidable. Indeed, this can be proven directly from the fact that graphs of bridge-depth~$c$ have minimal blocking sets of size~$2^c$ and are closed under disjoint union. Alternatively, an explicit lower bound is given in the literature~\cite[Thm.~2]{JansenP18} that shows that \textsc{Vertex Cover} parameterized by a tree-depth-$c$ modulator~$X$ does not admit a kernel with~$(|X|^{2^{c-4} - \varepsilon})$ bits for any~$\varepsilon > 0$, unless \containment. Since~$\bd(G) \leq \td(G)$ for every graph $G$, the same lower bound holds for bridge-depth.



Theorem~\ref{thm:kernel:is}, together with the equivalence between \textsc{Independent Set} and \textsc{Vertex Cover} for structural parameterizations discussed in Section~\ref{sec:preliminaries}, proves the first implication of Theorem~\ref{thm:characterization}.
Indeed, given a graph $G \in \F$ with bounded bridge-depth, Algorithm~\ref{algo:kernel} will provide an equivalent instance $(G',X',k')$ of \pb of size $\O_c(|X|^{2^{f(c)}})$.
As $G'[V(G') \setminus X']$ may not belong to $\F$, we move the entire graph to the modulator and say that $(G',V(G'),k')$ is an equivalent instance with
$G'[V(G') \setminus X'] = \emptyset \in \F$.

\begin{theorem} \label{thm:kernel:F}
Let~$\mathcal{F}$ be a minor-closed family of graphs.
If~$\mathcal{F}$ has bounded bridge-depth, then \textsc{Vertex Cover} parameterized by vertex-deletion distance to~$\mathcal{F}$ has a polynomial kernelization.
\end{theorem}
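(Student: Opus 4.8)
The plan is to obtain Theorem~\ref{thm:kernel:F} by assembling three ingredients that are already available: the kernelization for \pb from Theorem~\ref{thm:kernel:is}, the approximation of a bounded-bridge-depth modulator from Proposition~\ref{prop:approxmod}, and the standard equivalence between \textsc{Vertex Cover} and \IS for structural parameters. Since~$\mathcal{F}$ has bounded bridge-depth, fix~$c \in \mathbb{N}$ with~$\bd(H) \le c$ for every~$H \in \mathcal{F}$; equivalently, $\mathcal{F} \subseteq \F^{\bd}_c$. By Observation~\ref{obs:boundedbd:family} the class~$\F^{\bd}_c$ is minor-closed and excludes a planar graph, so Proposition~\ref{prop:approxmod} applies: given an instance~$(G,k)$ of \textsc{Vertex Cover}, we compute in polynomial time a set~$X \subseteq V(G)$ with~$\bd(G \setminus X) \le c$ and~$|X| \le \O(|X_{\mathrm{opt}}| \log^{2/3}|X_{\mathrm{opt}}|)$, where~$|X_{\mathrm{opt}}|$ is the minimum size of a $c$-\bd-modulator of~$G$. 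Because every $\mathcal{F}$-modulator of~$G$ is in particular a $c$-\bd-modulator, we have~$|X_{\mathrm{opt}}| \le \distto\mathcal{F}(G)$, and hence~$|X|$ is polynomially (indeed nearly linearly) bounded in the parameter~$\distto\mathcal{F}(G)$.

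Next, we pass to the \IS side. As noted in Section~\ref{sec:preliminaries}, $(G,k)$ is a \texttt{yes}-instance of \textsc{Vertex Cover} if and only if~$(G, X, |V(G)| - k)$ is a \texttt{yes}-instance of \IS, and the latter is a legitimate instance of \pb with parameter~$|X|$ since~$\bd(G \setminus X) \le c$. We run Algorithm~\ref{algo:kernel} on~$(G, X, |V(G)| - k, c)$; by Theorem~\ref{thm:kernel:is} this takes polynomial time (for fixed~$c$) and produces an equivalent \IS instance~$(G', k')$ whose total encoding size, and in particular~$|V(G')|$, is bounded by~$\O_c(|X|^{2^{f(c)}})$ with~$f(c) = \O(c^2)$. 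Translating back, $(G', k')$ as an \IS instance is equivalent to~$(G', |V(G')| - k')$ as a \textsc{Vertex Cover} instance.

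It remains to observe that~$(G', |V(G')| - k')$ is a valid output of a polynomial kernelization for \textsc{Vertex Cover} parameterized by distance to~$\mathcal{F}$. Since the empty graph belongs to the minor-closed family~$\mathcal{F}$, the set~$V(G')$ is an $\mathcal{F}$-modulator of~$G'$, so~$\distto\mathcal{F}(G') \le |V(G')| = \O_c(|X|^{2^{f(c)}})$; together with~$|V(G')| - k' \le |V(G')|$, this makes the encoding size of the output instance and its parameter both polynomially bounded in~$|X|$, and therefore in~$\distto\mathcal{F}(G)$, for fixed~$c$. This is exactly what a polynomial kernelization requires. The only step needing any care is the first one, namely checking that bounded bridge-depth of~$\mathcal{F}$ genuinely makes Proposition~\ref{prop:approxmod} applicable and yields a modulator whose size is polynomial in the true distance to~$\mathcal{F}$; this is handled by Observation~\ref{obs:boundedbd:family} and the inclusion~$\mathcal{F} \subseteq \F^{\bd}_c$. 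Everything else is routine plumbing, as the substantive work has already been done in establishing Theorem~\ref{thm:kernel:is}, so I do not anticipate a real obstacle beyond this bookkeeping.
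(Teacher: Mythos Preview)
Your proposal is correct and follows essentially the same approach as the paper: reduce to \pb via the approximate modulator of Proposition~\ref{prop:approxmod} (using $\mathcal{F} \subseteq \F^{\bd}_c$), invoke the kernel of Theorem~\ref{thm:kernel:is}, translate between \IS and \VC, and finally treat the entire output graph as an $\mathcal{F}$-modulator since the empty graph lies in~$\mathcal{F}$. Your write-up is simply more explicit about the plumbing than the paper's one-paragraph justification.
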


\subsection{Negative result and proof of Theorem \ref{thm:characterization}}
\label{sec:negative-kernel}

The following negative result following easily from the results discussed so far, by using known gadgets.

\begin{theorem}\label{thm:nokernel}
Let $\F$ be a minor-closed family of graphs with unbounded bridge-depth (or equivalently, by Theorem~\ref{thm:bridgedepth:blockingsets}, with unbounded inclusion-minimal blocking sets).
Then, \IS/\distto$\F$ and \VC/\distto$\F$ do not admit a polynomial kernel unless \containment.
\end{theorem}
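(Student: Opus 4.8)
The plan is to reduce from a problem known not to admit a polynomial kernel and to use the fact that, by Corollary~\ref{obs:Funbounded}, an unbounded-bridge-depth minor-closed family $\F$ contains every triangle-path $U_t$ (equivalently, the truncated triangle-paths from Theorem~\ref{thm:tightness}), which carry minimal blocking sets of unbounded size. Concretely, I would invoke the standard cross-composition / OR-composition machinery: it suffices to give an OR-cross-composition from an NP-hard problem (for instance \textsc{Vertex Cover} itself, or \textsc{3-SAT}) into \VC/\distto$\F$ parameterized by the modulator size, since such a composition rules out a polynomial kernel unless \containment (cf.\ the framework in~\cite{CyganFKLMPPS15,FominLSZ19}, and the explicit statement for exactly this kind of situation in~\cite[Thm.~1]{HolsKP19}). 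The equivalence with \IS/\distto$\F$ follows immediately from the parameter-preserving equivalence of \textsc{Independent Set} and \textsc{Vertex Cover} for structural parameterizations already noted in Section~\ref{sec:preliminaries}.

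First I would take $t$ instances of the source problem and, using a blocking-set gadget built from a member of $\F$ with a large minimal blocking set, encode a selector of one of the $t$ instances into a small modulator. The key point is: a triangle-path (or truncated triangle-path) of length roughly $t$ has a minimal blocking set $Y$ of size $\Theta(t)$ that is an independent set, and moreover the graph has the property that for each $y\in Y$ there is a maximum independent set avoiding all of $Y\setminus\{y\}$; this "one-hot selector" behavior is exactly what lets $\Theta(\log t)$ or even $O(1)$ modulator vertices, attached appropriately to the gadget, force a maximum independent set to correspond to picking exactly one of the $t$ instances. Attaching the $i$-th input instance through the vertex $y_i$ (so that the instance is "activated" precisely when the unique maximum independent set of the gadget is the one containing $y_i$) yields a graph whose independence number equals a fixed offset plus the maximum over $i$ of the answer to instance $i$; thus the composed instance is a YES-instance iff at least one input is. Since $\F$ is minor-closed and the gadget lies in $\F$, after the construction the modulator $X$ consists only of the few selector/encoding vertices plus the at most $\sum_i|X_i|$ modulators of the individual instances — but in a polynomial equivalence one instead uses the gadget to make $|X|$ depend only polylogarithmically on $t$ and polynomially on the largest input size, which is what the cross-composition definition requires.

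The main obstacle — and the step deserving the most care — is verifying that the gadget graphs genuinely belong to $\F$ and that the modulator really has the right (small) size: we must use Corollary~\ref{obs:Funbounded} to obtain, for the required length $t$, an actual member of $\F$ (a triangle-path $U_t$), then argue via Observation~\ref{obs:mbs-tp} (or the truncated variant in Theorem~\ref{thm:tightness}) that its minimal blocking set has size $\ge t+2$, and finally check that removing only the $O(\mathrm{poly}\log)$ selector vertices plus the input-side modulators leaves a graph that is still in $\F$ (which holds because $\F$ is minor-closed and the remaining graph is a subgraph, hence a minor, of a member of $\F$ — one must be slightly careful that attaching the input instances does not destroy membership, which is handled by putting all of each input instance into the modulator as well, at the cost of the parameter being polynomial in the input size, exactly as permitted by cross-composition). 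Once these structural checks are in place, correctness of the OR ($\alpha$ of the composed graph is the max of the individual answers plus a constant) is a routine case analysis using the selector property, and the kernelization lower bound follows from the composition framework; the final sentence of the proof simply notes that the \IS and \VC versions are equivalent under the parameter-preserving transformation from Section~\ref{sec:preliminaries}, completing both halves of the statement and, together with Theorem~\ref{thm:kernel:F}, establishing Theorem~\ref{thm:characterization}.
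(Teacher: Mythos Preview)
Your plan takes a far more laborious route than the paper, and the sketch contains a real gap. The paper does not build a cross-composition from scratch at all: it simply invokes the existing lower bound of Fomin and Str{\o}mme~\cite[Thm.~2]{FominS16}, which already shows that \IS parameterized by distance to the family $\F^{{\sf etp}}$ of disjoint unions of \emph{extended triangle-paths} (triangle-paths with one pendant vertex attached at each end) admits no polynomial kernel unless \containment. The remaining work is a two-line containment argument: by Corollary~\ref{obs:Funbounded} every triangle-path lies in $\F$, and every disjoint union of extended triangle-paths is a minor of a single sufficiently long triangle-path, so $\F^{{\sf etp}}\subseteq\F$ by minor-closure. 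Hence any $\F^{{\sf etp}}$-modulator is also an $\F$-modulator, and a polynomial kernel for \IS/\distto$\F$ would immediately yield one for \IS/\distto$\F^{{\sf etp}}$, a contradiction. The transfer to \VC is exactly as you say.

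The gap in your sketch is the modulator-size accounting. You write that membership of the composed graph in $\F$ is ``handled by putting all of each input instance into the modulator as well, at the cost of the parameter being polynomial in the input size, exactly as permitted by cross-composition.'' This is not permitted: in an OR-cross-composition of $t$ instances the output parameter must be bounded polynomially in $\max_i|x_i|+\log t$, whereas placing every input instance in the modulator gives parameter $\Theta(\sum_i|x_i|)=\Theta(t\cdot\max_i|x_i|)$. The actual constructions in~\cite{FominS16,HolsKP19} do \emph{not} put the input instances in the modulator; they encode each input via gadget copies that themselves lie in $\F$, with only a small number of global vertices forming the modulator. Your earlier sentence about making $|X|$ depend ``polylogarithmically on $t$'' is the right target, but you then contradict it and never specify a construction achieving it. If you insist on a self-contained argument, the cleanest route is to invoke~\cite[Thm.~1]{HolsKP19} directly, after observing (as above) that minor-closure plus Corollary~\ref{obs:Funbounded} supplies the required closure under disjoint unions, since any disjoint union of triangle-paths is a minor of a single long triangle-path.
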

\begin{proof}
Let $t \ge 1$. An \emph{extended triangle path of length $t$} is a triangle-path of length $t$ (see Definition~\ref{def:triangle-path}) with two extra vertices $u$ and $v$, and two extra edges $\{u,a_1\}$ and $\{b_t,v\}$.
Let $\F^{{\sf etp}}$ be the family containing all graphs composed of disjoint unions of extended triangle paths.
It is known from~\cite[Thm.~2]{FominS16} that \IS/\distto$\F^{{\sf etp}}$ does not admit a polynomial kernel unless \containment, not even when a modulator is given together with the input.
(In fact, ~\cite[Thm.~2]{FominS16} proves that \IS parameterized by distance to mock forests does not admit a polynomial kernel unless \containment,
but their construction even produces graphs that are disjoint unions of extended triangle paths.)
The result for \IS is immediate as $\F^{{\sf tp}} \subseteq \F$ by Corollary~\ref{obs:Funbounded} (recall that $\F^{{\sf tp}}$ is the family of all triangle-paths), and as any disjoint union of extended triangle paths is a minor of a sufficiently large triangle-path. The same lower bound for \VC follows since there are parameter-preserving reductions in both ways.
\end{proof}

Theorem~\ref{thm:kernel:F} and Theorem~\ref{thm:nokernel} together imply Theorem~\ref{thm:characterization}.


\section{Conclusion} \label{sec:conclusion}
In this paper we introduced the graph parameter bridge-depth and used it to characterize the minor-closed graph classes~$\F$ for which \textsc{Vertex Cover} parameterized by~$\mathcal{F}$-modulator has a polynomial kernel. It would be interesting to see whether the characterization can be extended to subgraph-closed or even hereditary graph classes. If a characterization exists of the hereditary graph classes whose modulators lead to a polynomial kernel, it will likely not be as clean as Theorem~\ref{thm:characterization}: it will have to deal with the fact that bipartite graphs can be arbitrarily complex in terms of width parameters, while bipartite modulators allow for a polynomial kernel. Hence such a characterization has to capture parity conditions of~$\mathcal{F}$.

\medskip

A natural attempt to generalize our approach to deal with bipartite graphs is to consider the following parameter, which we call \emph{bipartite-contraction-depth}: we mimic the definition of bridge-depth (cf. Definition~\ref{def:bd}), except that we redefine the graph $\bar{G}$ to be the graph obtained from $G$ by \emph{simultaneously} contracting all edges that do \emph{not} lie on an odd cycle. Note that bipartite-contraction-depth generalizes bridge-depth, in the sense that bridges do not lie on an odd cycle, and that the bipartite-contraction-depth of a graph with an odd cycle transversal of size $k$ is at most $k+1$. Having defined this parameter, we would need, in order to obtain a statement similar to Theorem~\ref{thm:largebd:largembs}, that large bipartite-contraction-depth implies the existence of structures that allow to obtain kernel lower bounds, similarly to the fact that large bridge-depth implies the existence of large triangle-paths (cf. Corollary~\ref{obs:Funbounded}). The appropriate structure here seems to be an \emph{odd-cycle-path of length $t$}, defined as a set of $t$ vertex-disjoint odd cycles $C_1,\ldots,C_t$, and a set of $t-1$ vertex-disjoint paths (of any length) connecting $C_i$ to $C_{i+1}$ for $i\in [t-1]$, in such a way that for every $i \in \{2,\ldots,t-1\}$, the two attachment vertices in $C_i$ are distinct. Now the expected property would be that large bipartite-contraction-depth forces long odd-cycle-paths. Unfortunately, this is not true. Indeed, consider the \emph{Escher wall of size $h$} depicted in \cite[Fig. 3]{RautenbachR01}. It is proved in\cite{RautenbachR01} that this graph does not contain two vertex-disjoint odd cycles, but a smallest hitting set for odd cycles has size $h$. Since there are no two vertex-disjoint cycles, a longest odd-cycle-path has length one. On the other hand, it can be easily verified that an Escher wall of size $h$ has bipartite-contraction-depth $\Omega(h)$. Informally, this can be seen by noting that, initially, all edges lie on an odd cycle, hence a vertex removal is required, and that each such removal cascades in a constant number of contractions until all edges lie again on an odd cycle. Since a smallest hitting set for odd cycles of an Escher wall of size $h$ has size $h$, the claimed bound follows. Therefore, summarizing this discussion, if one aims at a result similar to Theorem~\ref{thm:characterization} that also applies to families $\F$ containing bipartite graphs, it seems that significant new ideas are required.

\medskip

Another open research direction consists of a further algorithmic exploration of the merits of bridge-depth. We expect that several polynomial-space fixed-parameter tractable algorithms that work for graphs of bounded tree-depth~\cite{ChenRRV18,PilipczukW18} can be extended to work with bridge-depth instead. Which other ways to enrich the recursive definition of tree-depth lead to novel algorithmic insights?

\medskip

As for kernelization purposes, it is plausible that bridge-depth also characterizes the existence of polynomial kernels for other problems other than \textsc{Vertex Cover},  parameterized by the vertex-deletion distance of the input graph to a minor-closed graph class. For instance, the \textsc{Feedback Vertex Set} problem and the generalizations considered by Jansen and Pieterse~\cite{JansenP18} seem to be good candidates.

\medskip

\noindent \textbf{Acknowledgement}. We would like to thank the anonymous reviewers for helpful comments that improved the presentation of the manuscript.

\bibliography{references}

\end{document}